\pdfoutput=1
\documentclass[11pt]{article}

\usepackage[utf8]{inputenc}
\IfFileExists{lmodern.sty}{\usepackage[T1]{fontenc}\usepackage{lmodern}}{}
\usepackage{amsmath,amssymb,amsthm,mathtools}
\usepackage{booktabs}
\usepackage{microtype}
\usepackage[margin=1.05in]{geometry}
\usepackage{enumitem}
\usepackage{xcolor}
\usepackage{tikz}
\usetikzlibrary{arrows.meta,calc,positioning,decorations.pathreplacing}
\usepackage{pgfplots}
\pgfplotsset{compat=1.18}
\usepgfplotslibrary{fillbetween,groupplots}
\usepackage[most]{tcolorbox}
\usepackage[colorlinks=true,linkcolor=blue!55!black,citecolor=blue!55!black,urlcolor=blue!55!black]{hyperref}
\usepackage[nameinlink,capitalise]{cleveref}

\newtheorem{theorem}{Theorem}[section]
\newtheorem{proposition}[theorem]{Proposition}
\newtheorem{lemma}[theorem]{Lemma}
\newtheorem{corollary}[theorem]{Corollary}
\newtheorem{mainthm}{Theorem}

\newtheorem*{pentcor}{Corollary (the Duan--Winter pentagon conjecture)}
\theoremstyle{definition}
\newtheorem{definition}[theorem]{Definition}
\newtheorem{remark}[theorem]{Remark}
\crefname{mainthm}{Theorem}{Theorems}

\DeclareMathOperator{\Tr}{Tr}
\DeclareMathOperator{\supp}{supp}
\DeclareMathOperator{\ran}{ran}
\DeclareMathOperator{\vect}{vec}
\DeclareMathOperator{\rk}{rank}

\newcommand{\vt}{\vartheta}
\newcommand{\cH}{\mathcal H}
\newcommand{\cG}{\mathcal G}
\newcommand{\Cmin}{C_{\min}}
\newcommand{\SNS}{S_{0,\mathrm{NS}}}
\newcommand{\Xs}{X(\cG,\le_*)}
\newcommand{\boxt}{\boxtimes}
\newcommand{\eps}{\varepsilon}
\newcommand{\ket}[1]{\lvert #1\rangle}
\newcommand{\bra}[1]{\langle #1\rvert}
\newcommand{\proj}[1]{\ket{#1}\!\bra{#1}}
\newcommand{\nrm}[1]{\lVert #1\rVert}

\title{The Lov\'asz number is not alone:\\[3pt]
\large a continuum of R\'enyi points in the entanglement-assisted spectrum of graphs}
\author{Nidhal Mghirbi \and Seth Douglas}
\date{October 2026}
\hypersetup{pdftitle={The Lovasz number is not alone: a continuum of Renyi points in the entanglement-assisted spectrum of graphs},pdfauthor={Nidhal Mghirbi and Seth Douglas},pdfsubject={October 2026}}

\begin{document}
\maketitle

\begin{abstract}
Li and Zuiddam's entanglement-assisted asymptotic spectrum of graphs consists
of the normalized graph parameters that are additive under disjoint unions,
multiplicative under strong products and monotone under
entanglement-assisted cohomomorphisms; the entanglement-assisted Shannon
capacity is their pointwise minimum.  To our knowledge, the Lov\'asz number
$\vt$ was its only known element.  We find a continuum of others.

Three quantities of Duan and Winter, $\log_2\vt$, the minimal Holevo capacity
$\Cmin$ and the logarithm of the no-signalling simulation cost $\Sigma$, are
the smallest information radii of classical--quantum channels compatible with
a graph, measured with Petz--R\'enyi divergences of orders $0$, $1$ and $2$.
At order zero this is due to Schrijver (as credited by Dalai) and to Duan and
Winter; at order one it is the min--max formula for the Holevo capacity.
Interpolating in the order gives a family $\vt_\alpha$, $0\le\alpha\le2$, and
we prove that every member lies in the spectrum.  The main step,
multiplicativity under strong products for channels that need not factor,
follows from an exact nesting identity for compressions.  Sandwiched R\'enyi
radii of order $\beta\in[1/2,\infty]$ give the same family, at
$\alpha=2-1/\beta$.

On the five-cycle the family takes every value in $[\sqrt5,5/2]$.  A
Hilbert--Schmidt inequality for the pentagon gives $\Cmin(C_5)=\log_2(5/2)$,
proving a conjecture of Duan and Winter.  We also settle the two open links of
their inequality chain: the graph-level no-signalling simulation rate always
equals its single-letter value $\log_2\Sigma(G)$, and a 32-vertex graph from
the Mermin--Peres magic square has $\Cmin<\log_2\Sigma$, certified in exact
rational arithmetic.  Finally, $\Sigma$ never exceeds the projective rank of
the complement; a rank inequality for the same square makes this strict on a
1440-vertex graph.
\end{abstract}

\section{Introduction}\label{sec:intro}

Shannon's zero-error capacity $\Theta(G)$ of a graph is notoriously hard to
compute \cite{Shannon1956}.  Zuiddam found a clean way to organize what we
know about it \cite{Zuiddam2019}.  Consider every graph parameter that adds
under disjoint union, multiplies under the strong product, equals $1$ on a
single vertex and respects the cohomomorphism preorder.  These parameters form
the \emph{asymptotic spectrum of graphs}, and Strassen's duality theorem
\cite{Strassen1988} says that $\Theta$ is their pointwise minimum.  So every
spectral point is an upper bound on the Shannon capacity, and many of the
classical upper bounds turn out to be spectral points \cite{Zuiddam2019}: the
Lov\'asz number \cite{Lovasz1979}, the fractional Haemers bounds
\cite{Haemers1979,Blasiak2013,BukhCox2019}, the complement of the projective
rank \cite{CMRSSW2014,MancinskaRoberson2016}, and the fractional clique cover
number.  The ordinary spectrum is by now known to be uncountable
\cite{Vrana2021,Vrana2022}.

Li and Zuiddam carried the theory over to quantum and entanglement-assisted
communication \cite{LiZuiddam2020}.  With shared entanglement the preorder
changes.  Now $G\le_*H$ means that, with the help of an entangled state, one
use of a channel with confusability graph $H$ can do the job of one use of a
channel with confusability graph $G$.  The duality survives: the
entanglement-assisted Shannon capacity $\Theta_*$ is the minimum over a new
spectrum $\Xs$, where $\cG$ denotes the class of finite simple graphs.  This
spectrum has proved much harder to populate.  Beigi and,
independently, Duan, Severini and Winter had shown that $\Theta_*\le\vt$
\cite{Beigi2010,DSW2013}, and Li and Zuiddam proved that $\vt$ itself lies in
$\Xs$, using the entanglement-assisted monotonicity of $\vt$ due to Cubitt et
al.~\cite{CMRSSW2014}.  They found no other element.  They also showed
that the fractional clique cover number is not in the spectrum
$X(\cG,\le_q)$ of the quantum preorder \cite[Theorem~22]{LiZuiddam2020}
(Theorem~21 in arXiv:1810.00744v2).  This point of the ordinary spectrum
equals, by linear programming duality, the fractional packing number
$\alpha^*$ used below, and $X(\cG,\le_q)$ contains $\Xs$, so it is not in
$\Xs$ either.  And they
suggested that the conjecture $\Theta_*=\vt$ might be proved by showing that
$\vt$ is the least element of the spectrum, ``or even the only point''
\cite{LiZuiddam2020}.

It is not the only point.  This paper constructs a continuum of others.

\subsection{Three quantities, one family}

We start from a paper of Duan and Winter \cite{DuanWinter2016} on zero-error
communication assisted by no-signalling correlations.  For a confusability
graph $G$ they compare several numbers attached to the classical--quantum
channels $x\mapsto\rho_x$ whose confusability graph is $G$.  There is the
Lov\'asz number $\vt(G)$; Duan and Winter show that $\log_2\vt(G)$ is the
smallest no-signalling-assisted zero-error capacity of a channel compatible
with $G$ \cite[Theorem~5]{DuanWinter2016}.  There is $\Cmin(G)$, the
infimum of the Holevo capacities of such channels, and $\Sigma(G)$, the infimum over
such channels of the semidefinite-programming value that governs their exact
simulation with no-signalling assistance.  They prove the chain
\begin{equation}\label{eq:dw-chain}
 \log_2\vt(G)\ \le\ \Cmin(G)\ \le\ \SNS(G)\ \le\ \log_2\Sigma(G)\ \le\
 \log_2\alpha^*(G),
\end{equation}
where $\SNS$ is an asymptotic simulation cost and $\alpha^*$ is Shannon's
fractional packing number \cite{Shannon1956} (\cref{sec:dw-quantities} has
the definitions).  By results of Cubitt, Leung, Matthews and Winter
\cite{CLMW2011}, among classical channels with confusability graph $G$ the
smallest asymptotic no-signalling-assisted zero-error capacity and the
smallest asymptotic no-signalling-assisted simulation cost both equal
$\log_2\alpha^*(G)$.  This is how $\alpha^*$ enters the chain
\cite[Section~V]{DuanWinter2016}.

Three of the quantities in \eqref{eq:dw-chain}, $\log_2\vt$, $\Cmin$ and
$\log_2\Sigma$, are really one quantity in disguise.  Fix a channel compatible with $G$ and a state $\sigma$.  Measure
the distance from each output $\rho_x$ to $\sigma$ with the Petz--R\'enyi
divergence $D_\alpha$ and take the worst case.  This worst case,
$\max_xD_\alpha(\rho_x\Vert\sigma)$, is an order-$\alpha$ \emph{information
radius}.  Minimizing it over channels and centers gives a graph parameter:
\begin{equation}\label{eq:def-intro}
 \log_2\vt_\alpha(G)
 =\inf_{(\rho_x),\,\sigma}\ \max_{x\in V(G)}D_\alpha(\rho_x\Vert\sigma),
 \qquad 0\le\alpha\le2.
\end{equation}
Then
\begin{equation}\label{eq:endpoints-intro}
 \vt_0=\vt,\qquad \vt_1=2^{\Cmin},\qquad \vt_2=\Sigma.
\end{equation}
The first is a theorem of Dalai, who credits it to Schrijver
\cite[Theorem~8]{Dalai2013TIT}.  Duan and Winter prove it in the proof of
their Theorem~5, by purifying the center \cite[Section~V]{DuanWinter2016},
and the case of a pure center with general projections is due to Cabello,
Severini and Winter \cite{CSW2010}; Dalai and Winter note the observations of
Schrijver and of Duan and Winter \cite{DalaiWinter2017}.  The middle identity
is the familiar min--max formula for the Holevo capacity
\cite{OhyaPetzWatanabe1997,SchumacherWestmoreland2001}; the third is
elementary once the right formula is in place
(\cref{prop:endpoints}).  So the three Duan--Winter quantities are samples,
at orders $0$, $1$ and $2$, of one monotone family
$\alpha\mapsto\vt_\alpha(G)$, which is nondecreasing on $[0,2]$ and continuous
on $[0,1)$ (\cref{prop:endpoints,prop:continuity}).  This also explains why
they come in the order they do, since $D_\alpha$ is nondecreasing in $\alpha$.

What happens in between?

\subsection{Results}

\begin{mainthm}[every order up to two is a spectral point]\label{thm:A}
For every $\alpha\in[0,2]$, the graph parameter $\vt_\alpha$ belongs to the
entanglement-assisted asymptotic spectrum $\Xs$.
\end{mainthm}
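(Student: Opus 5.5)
To place $\vt_\alpha$ in $\Xs$ we check the four defining properties of the spectrum for the parameter of \eqref{eq:def-intro}: normalization ($\vt_\alpha(K_1)=1$), additivity under disjoint union, multiplicativity under the strong product, and monotonicity under $\le_*$. Throughout we use the standard reformulation of the Petz--R\'enyi radius for $\alpha\in(0,1)\cup(1,2]$. For fixed outputs $(\rho_x)$, the quantity $\inf_\sigma\max_x D_\alpha(\rho_x\Vert\sigma)$ equals, by a minimax over probability vectors $p$ on $V(G)$, the maximum over $p$ of a Sibson-type expression $\frac{\alpha}{\alpha-1}\log\Tr\bigl(\sum_x p_x\rho_x^\alpha\bigr)^{1/\alpha}$. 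This closed form is what makes the tensor-power and direct-sum computations tractable; the cases $\alpha=0,1$ follow by the cited identities \eqref{eq:endpoints-intro}, or by limits.

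\textbf{Normalization and disjoint union.} Normalization is immediate: one vertex, $\rho=\sigma$, radius $0$. For $G\sqcup H$, vertices from different components must have orthogonal outputs. The channel therefore lives on a direct sum $\cH_G\oplus\cH_H$ (after compressing to supports), and the center splits as $\sigma=\lambda\sigma_G\oplus(1-\lambda)\sigma_H$. The radius is then $\max(r_G-\log\lambda,\ r_H-\log(1-\lambda))$ in the $2^{r}$ variables, and optimizing over $\lambda$ gives $2^{r}=2^{r_G}+2^{r_H}$. The upper bound comes from the direct-sum construction; the lower bound comes from pinching the center onto the block decomposition, which by data processing can only decrease $D_\alpha$ when $\alpha\in[0,2]$.

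\textbf{Strong product.} Submultiplicativity $\vt_\alpha(G\boxt H)\le\vt_\alpha(G)\vt_\alpha(H)$ follows from tensor-product channels and centers $\sigma_G\otimes\sigma_H$ by additivity of $D_\alpha$. Supermultiplicativity is the main obstacle, because an optimal channel for $G\boxt H$ need not factor. I would prove it via the dual (max over $p$) side: take optimal dual certificates for $G$ and $H$, form the product certificate, and show it lower-bounds the radius for \emph{every} channel compatible with $G\boxt H$. Concretely, for a channel $\rho_{(x,y)}$ compatible with $G\boxt H$, fix $x$ and compress along $y$. The map $y\mapsto$ (suitable compression/conditional of $\rho_{(x,y)}$) is compatible with $H$, and then the outer map $x\mapsto$ (the resulting block) is compatible with $G$. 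A nesting identity for compressions, saying that the Sibson functional of the compressed family composed over the two stages equals the functional of the whole family, lets the bound telescope as $\log\vt_\alpha(G)+\log\vt_\alpha(H)$. I expect verifying this exact nesting identity, and checking that it survives up to $\alpha=2$ (where operator concavity/convexity of $t^\alpha$ and $t^{1/\alpha}$ is just barely available), to be the heart of the proof.

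\textbf{Monotonicity under $\le_*$.} An entanglement-assisted cohomomorphism $G\le_*H$ gives a shared state together with projective measurements indexed by vertices of $G$, with orthogonality relations. Given a channel $(\rho_y)$ compatible with $H$ and center $\sigma$, I would build a channel for $G$ by $\rho'_x=\sum_y$ (Alice's projector $P^x_y$ applied to the shared state) $\otimes\rho_y$ (normalized), with center (reduced shared state)$\otimes\sigma$. Compatibility with $G$ follows from the orthogonality conditions. The radius bound $D_\alpha(\rho'_x\Vert\sigma')\le\max_y D_\alpha(\rho_y\Vert\sigma)$ follows from joint quasi-convexity of $D_\alpha$ over the block decomposition, together with $D_\alpha(P\psi P/\Tr\ldots\,\Vert\,\cdot)$ terms controlled by the fact that the entanglement contributes an equal factor absorbed into the normalization, as in Cubitt et al.'s argument for $\vt$. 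Hence $\vt_\alpha(G)\le\vt_\alpha(H)$, which completes the membership in $\Xs$.
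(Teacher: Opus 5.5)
Your normalization, submultiplicativity and monotonicity steps are sound. For monotonicity, fix $g$ and write $\omega\otimes\sigma=\sum_h w_h\,\hat\omega_g^h\otimes\sigma$ with the same weights $w_h=\Tr\omega_g^h$ that appear in $\rho'_g$. Joint quasi-convexity together with $D_\alpha(\hat\omega\otimes\tau\Vert\hat\omega\otimes\sigma)=D_\alpha(\tau\Vert\sigma)$ then finishes it; this is exactly the paper's flagged-mixture lemma. Your disjoint-union argument is a valid alternative to the paper's compression identity plus Jensen lemma: pinch the center onto $Q_G$, $Q_H$, $I-Q_G-Q_H$, which leaves the outputs fixed and by data processing cannot increase any radius, then peel off $-\log\lambda$. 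The gap is in supermultiplicativity, which is the step that carries the theorem.

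The plan to multiply \emph{optimal dual certificates} has nothing to act on. The maximizing $p$ in your Sibson form is a per-channel object. There is no channel-independent dual for the graph-level infimum, nothing like Lov\'asz's SDP dual for $\vt$. So distributions that are optimal for particular $G$- and $H$-channels give no lower bound for an arbitrary channel compatible with $G\boxt H$, which need not factor. Your concrete version does find the right decomposition: the row projections $S_x$ onto $\sum_y\ran\Pi_{\rho_{xy}}$ represent $G$, and each row family represents $H$. But the nesting identity you hope for is never formulated, and as described it is not available. Adjacent rows need not be orthogonal, so nothing like $\sum_{x,y}p_{xy}\rho_{xy}^\alpha$ splits over rows. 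For a fixed, non-optimized family of states there is also no reason for any functional to telescope exactly. You never say what is compressed, or what the resulting block attached to $x$ is.

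The missing idea is to stay on the primal side, fix the center, and optimize the states away. For a projection $P$ and faithful $\sigma$, $\min_{\supp\rho\subseteq\ran P}D_\alpha(\rho\Vert\sigma)=-\log Z_s(\sigma,P)$ with $s=1-\alpha$. The minimum is attained at $K_s(\sigma,P)/Z_s(\sigma,P)$, which is faithful on $\ran P$. Hence $1/\vt_\alpha(G)=\sup\min_x Z_s(\sigma,P_x)$ over representations and faithful centers. The exact nesting identity concerns the \emph{center}, not the states. For $P\le Q$ we have $K_s(K_s(\sigma,Q),P)=K_s(\sigma,P)$, because the $s$-th power of $K_s(\sigma,Q)$ on $\ran Q$ is $Q\sigma^sQ$, and compressing that to $P$ gives $P\sigma^sP$. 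Now suppose $Z_s(\sigma,P_{xy})\ge c$ for every cell. Put $z_x=Z_s(\sigma,S_x)$ and give row $x$ the center $K_s(\sigma,S_x)/z_x$. The projection formula for $H$ then gives $z_x\ge c\,\vt_\alpha(H)$. The formula for $G$, applied to $(S_x)$ with center $\sigma$, gives $\min_x z_x\le 1/\vt_\alpha(G)$, so $c\le 1/(\vt_\alpha(G)\vt_\alpha(H))$. This identity is purely algebraic and holds for every $s$, so your worry about operator convexity near $\alpha=2$ is misplaced for this step. The restriction $\alpha\le2$ enters only through data processing: in the monotonicity step, and in your pinching argument for unions.
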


For $\alpha=0$, where $\vt_0=\vt$, membership is due to Li and Zuiddam
\cite{LiZuiddam2020}.  The cases $\alpha>0$ are new.

Normalization and additivity are routine.  Monotonicity under $\le_*$ is a
data-processing argument, and it is why our argument stops at order two:
above order two, Petz divergences no longer satisfy data processing
(\cref{rem:above-two}).  The real work is
multiplicativity, $\vt_\alpha(G\boxt H)=\vt_\alpha(G)\vt_\alpha(H)$.  One
inequality follows by tensoring witnesses.  The other has to control
representations of $G\boxt H$ that are \emph{not} tensor products, and there
is no reason to expect a near-optimal representation of a product to factor.
\Cref{sec:spectral} handles this with an exact nesting identity for the
compressions that appear in the projection formula~\eqref{eq:projection-formula}.

Are the new points actually new?  The pentagon answers this.

\begin{mainthm}[the pentagon]\label{thm:B}
For $\tfrac12\le\alpha\le2$ we have $\vt_\alpha(C_5)=\tfrac52$.  For
$0\le\alpha\le\tfrac12$, writing $p=1/(1-\alpha)\in[1,2]$,
\begin{equation}\label{eq:pentagon-intro}
 \sqrt5\left(\frac{\sqrt5}{2}\right)^{p-1}
 \ \le\ \vt_\alpha(C_5)\ \le\
 \min\left\{\frac52,\ \sqrt5\cdot5^{(p-1)/2}\right\}.
\end{equation}
The map $\alpha\mapsto\vt_\alpha(C_5)$ is continuous and carries
$[0,\tfrac12]$ onto $[\sqrt5,\tfrac52]$.  In particular $\Xs$ contains
continuum many distinct points of the form $\vt_\alpha$.
\end{mainthm}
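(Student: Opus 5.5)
We prove \cref{thm:B} by pinning down both ends of the interval and then using monotonicity and continuity. The argument has four steps.

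\emph{Upper bound $5/2$ and lower bound on $[\tfrac12,2]$.} By \eqref{eq:endpoints-intro} and the chain \eqref{eq:dw-chain}, $\vt_2(C_5)=\Sigma(C_5)\le\alpha^*(C_5)=5/2$. Since $\alpha\mapsto\vt_\alpha$ is nondecreasing (\cref{prop:endpoints}), it suffices to prove $\vt_{1/2}(C_5)\ge 5/2$.

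At order $\tfrac12$ the Petz divergence is $D_{1/2}(\rho\Vert\sigma)=-2\log_2\Tr\rho^{1/2}\sigma^{1/2}$. So we must show the following: for any orthogonality-respecting family $(\rho_x)_{x\in\mathbb Z_5}$, with $\rho_x\perp\rho_{x'}$ for nonadjacent $x,x'$, and any state $\sigma$, we have $\min_x \Tr\rho_x^{1/2}\sigma^{1/2}\le (2/5)^{1/2}$. Put $A_x=\rho_x^{1/2}$ and $B=\sigma^{1/2}$. Then $\nrm{A_x}_2=\nrm{B}_2=1$, and $A_xA_{x+2}=0$. We need the Hilbert--Schmidt inequality $\sum_x \langle A_x,B\rangle^2\le 2$, because averaging then gives $\min_x\langle A_x,B\rangle^2\le 2/5$. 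Note that the operators $A_x$ are not mutually orthogonal: only the pairs at distance two are. So I would use a pentagon-specific argument. Let $P_x$ be the support projection of $A_x$; the projections $P_x$ and $P_{x+2}$ are orthogonal. A natural attempt is to bound $\langle A_x,B\rangle\le\nrm{P_xB}_2$ and then show $\sum_x\nrm{P_xBP_x}$-type sums are at most $2\nrm{B}_2^2$. Combinatorially, each vertex lies in the two "independent" pairs $\{x,x+2\}$ and $\{x,x-2\}$, which covers every vertex twice, and $P_x+P_{x+2}\le I$. Summing $\nrm{(P_x+P_{x+2})B}_2^2\le\nrm{B}_2^2$ over the five such pairs gives $2\sum_x\nrm{P_xB}_2^2\le 5$, which is only $5/2$, too weak. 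The sharper bound must exploit $A_x\ge0$ and $B\ge0$ together with the fact that $\langle A_x,B\rangle$ uses $A_x$, not merely its support. I expect this Hilbert--Schmidt pentagon inequality to be the main obstacle. What I would try first is a dual SDP certificate: write $\sum_x A_x\otimes A_x$-type expressions and look for an operator inequality $\sum_x \proj{\vect A_x}\le 2I$ on the relevant cone, using the orthogonality relations as in Lovász's umbrella proof.

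\emph{Lower bound on $[0,\tfrac12]$.} For $p=1/(1-\alpha)$, I would use the projection formula \eqref{eq:projection-formula} and, for the Petz divergence at $\alpha<1$, the bound $\Tr\rho^\alpha\sigma^{1-\alpha}\le\nrm{\cdots}$ via Hölder. The aim is to interpolate between the order-$0$ value $\sqrt5=\vt(C_5)$ and the order-$\tfrac12$ value $5/2$ in a log-convex way in $p$; note that $\sqrt5(\sqrt5/2)^{p-1}$ equals $5/2$ at $p=2$. Concretely, I would apply the Riesz--Thorin or Hölder interpolation for $\Tr\rho_x^{\alpha}\sigma^{1-\alpha}$ between the exponents at $p=1$ and $p=2$.

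\emph{Upper bound on $[0,\tfrac12]$.} Take the Lovász umbrella: rank-one $\rho_x=\proj{u_x}$ with the umbrella vectors, and center $\sigma$ proportional to a suitable power of $\sum_x\proj{u_x}$, or simply the maximally mixed state on the 3-dimensional span. The Petz divergence of a pure state has a closed form, $D_\alpha(\proj{u}\Vert\sigma)=\frac1{\alpha-1}\log\bra u\sigma^{1-\alpha}\ket u$. Optimizing the center should give $\sqrt5\cdot5^{(p-1)/2}$.

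\emph{Continuity and onto.} Continuity on $[0,1)$ is \cref{prop:continuity}. The values at $\alpha=0$ and $\alpha=\tfrac12$ are $\sqrt5$ and $5/2$, so by the intermediate value theorem the map carries $[0,\tfrac12]$ onto $[\sqrt5,\tfrac52]$. By \cref{thm:A}, each of these continuum many distinct values gives a distinct point of $\Xs$.
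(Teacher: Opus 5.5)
\textbf{Verdict: genuine gap.} Your reduction at order $\tfrac12$ is correct: you need $\sum_x\langle A_x,B\rangle^2\le2$. But that inequality is the heart of the theorem; it is exactly what proves the Duan--Winter pentagon conjecture. You leave it unproved, and both routes you suggest provably cannot get below $\sqrt5$.

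\emph{The one-sided route.} Here $\sum_x\nrm{P_xB}_2^2=\Tr B^2\sum_xP_x$. For the umbrella $P_x=\proj{u_x}$ with $B=\proj c$ this equals $\sum_x|\langle c,u_x\rangle|^2=\sqrt5$. So no estimate through $\nrm{P_xB}_2$ can give more than $\vt_{1/2}(C_5)\ge\sqrt5$.

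\emph{The Lov\'asz-type route.} An argument that uses only the orthogonality of the unit vectors $\vect A_x$ at distance two is capped the same way. For arbitrary unit vectors with that pattern, $\max_d\sum_x|\langle d,v_x\rangle|^2=\vt(\overline{C_5})=\sqrt5$ by Lov\'asz's Theorem~5.

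\emph{The missing idea.} You do name the right object, $\sum_x\nrm{P_xBP_x}_2^2$, but you never bound it. You need to compress on both sides and use the distance-two orthogonality \emph{locally}. Since $A_x=P_xA_xP_x$, Cauchy--Schwarz gives $\langle A_x,B\rangle\le\nrm{P_xBP_x}_2$; squared, this is the supported optimum $Z_{1/2}(\sigma,P_x)$. The two neighbours $x\pm1$ of $x$ are at distance two, so $P_{x-1}+P_{x+1}$ is a projection. With $T=\sum_yP_y$ this gives
\[
 P_xTP_x=P_x+P_x(P_{x-1}+P_{x+1})P_x\le2P_x .
\]
For $X_x=P_xX_xP_x$ one has $|\Tr X_x^*X_y|\le\frac12\bigl(\nrm{P_yX_x}_2^2+\nrm{X_yP_x}_2^2\bigr)$. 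Summing over $x,y$ and using the display shows that $(X_x)_x\mapsto\sum_xX_x$ has squared norm at most $2$. Its adjoint is $X\mapsto(P_xXP_x)_x$, so $\sum_x\nrm{P_xXP_x}_2^2\le2\nrm X_2^2$ for every operator $X$. Your observation $P_x+P_{x+2}\le I$ is the right fact, but it has to be applied inside $\ran P_x$, not globally.

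\emph{The lower bound for $\alpha<\tfrac12$.} The interpolation step is not yet an argument, and it depends on the missing inequality. You cannot interpolate the optimized values $\vt_\alpha(C_5)$, nor the nonlinear quantity $\Tr\rho^\alpha\sigma^{1-\alpha}$. What interpolates is the norm of the fixed linear map $X\mapsto\bigoplus_xP_xXP_x$ from $S_p$ to $S_p$.
\begin{itemize}
\item At $p=2$ its norm is at most $\sqrt2$, by the inequality above. This is why that inequality must hold for all $X$, not only positive ones.
\item At $p=1$ you need $\sum_x\nrm{P_xXP_x}_1\le\sqrt5\,\nrm X_1$, which you never state. For $X=ab^*$ it follows from Cauchy--Schwarz and $\sum_x\nrm{P_xa}^2\le\sqrt5\nrm a^2$ (Lov\'asz's Theorem~5, with unit vectors in $\ran P_x$ parallel to $P_xa$). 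A singular value decomposition does the rest.
\item Complex interpolation of Schatten classes then gives $\sum_x\nrm{P_xXP_x}_p^p\le5^{(2-p)/2}2^{p-1}\nrm X_p^p$.
\item The link to the graph parameter is $Z_{1/p}(\sigma,P)=\nrm{P\sigma^{1/p}P}_p^p$ with $\nrm{\sigma^{1/p}}_p^p=\Tr\sigma=1$, where $1/p=1-\alpha$. The minimum over $x$ is at most the average.
\end{itemize}

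\emph{The upper bound.} The maximally mixed center on the span of the umbrella gives radius $\log3$ at every order, which is useless. You need centers tending to the handle $\proj c$. For these, $\bra{u_x}\sigma^{1-\alpha}\ket{u_x}\to5^{-1/2}$ because $1-\alpha>0$, so the radius tends to $\log5^{p/2}=\log\bigl(\sqrt5\cdot5^{(p-1)/2}\bigr)$. Your normalized powers of $\sum_x\proj{u_x}$ do this as the exponent tends to infinity.

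\emph{Onto.} You also need monotonicity in $\alpha$ to keep the image inside $[\sqrt5,\tfrac52]$. The intermediate value theorem gives only the other inclusion.
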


\Cref{fig:pentagon-curve} shows what we know about the pentagon curve.
Since $\vt_1(C_5)=\vt_2(C_5)=5/2$, \cref{thm:B} gives the values of $\Cmin$
and $\Sigma$ on the pentagon.  That is precisely what Duan and Winter
conjectured after their Hadamard-graph example, two paragraphs after the
chain~\eqref{eq:dw-chain}, where they note that a rigorous
proof ``has so far eluded'' them \cite[Eq.~(54), which is (53) in
arXiv:1409.3426v3, and the discussion after it]{DuanWinter2016}.

\begin{pentcor}
$\log_2\sqrt5<\Cmin(C_5)=\SNS(C_5)=\log_2\Sigma(C_5)=\log_2\alpha^*(C_5)
=\log_2\tfrac52.$
\end{pentcor}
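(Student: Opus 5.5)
The corollary follows by combining \cref{thm:B} with the Duan--Winter chain~\eqref{eq:dw-chain} and the endpoint identities~\eqref{eq:endpoints-intro}; no new estimate is needed beyond what \cref{thm:B} supplies.

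\emph{Step 1: the top of the chain.} Shannon's fractional packing number of the pentagon is $\alpha^*(C_5)=5/2$, witnessed by the uniform weight $1/2$ on each vertex. This weighting is feasible because every clique of $C_5$ is an edge or a vertex, so each clique has weight at most $1$. Optimality follows from LP duality, using the fractional clique cover that assigns $1/2$ to each of the five edges. Hence $\log_2\alpha^*(C_5)=\log_2\tfrac52$.

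\emph{Step 2: the bottom of the chain.} By \cref{thm:B} with $\alpha=1\in[\tfrac12,2]$, we get $\vt_1(C_5)=\tfrac52$. By~\eqref{eq:endpoints-intro}, $\vt_1=2^{\Cmin}$, so $\Cmin(C_5)=\log_2\tfrac52$.

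\emph{Step 3: squeeze.} Substituting into~\eqref{eq:dw-chain} gives
\[
\log_2\tfrac52=\Cmin(C_5)\le\SNS(C_5)\le\log_2\Sigma(C_5)\le\log_2\alpha^*(C_5)=\log_2\tfrac52 .
\]
All the inequalities are therefore equalities, which yields $\Cmin(C_5)=\SNS(C_5)=\log_2\Sigma(C_5)=\log_2\alpha^*(C_5)=\log_2\tfrac52$. As a consistency check, $\Sigma=\vt_2$ and \cref{thm:B} at $\alpha=2$ gives the same value directly.

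\emph{Step 4: the strict inequality.} Lov\'asz's theorem gives $\vt(C_5)=\sqrt5$, and $\sqrt5<\tfrac52$ because $5<\tfrac{25}{4}$. Since $\log_2$ is strictly increasing, $\log_2\sqrt5<\log_2\tfrac52=\Cmin(C_5)$.

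The only substantive input is the value $\vt_1(C_5)=5/2$ from \cref{thm:B}. That value rests on the Hilbert--Schmidt inequality for the pentagon, which gives the lower bound $\vt_\alpha(C_5)\ge 5/2$ for $\alpha\ge\tfrac12$; it is taken here as already established. Everything else in the argument is bookkeeping.
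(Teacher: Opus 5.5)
Your proposal is correct and follows the paper's proof. The paper likewise obtains $\Cmin(C_5)=\log_2\Sigma(C_5)=\log_2\tfrac52$ from \cref{thm:pentagon-half} together with \cref{prop:endpoints}. It then squeezes $\SNS(C_5)$ between these two values using the Duan--Winter chain~\eqref{eq:dw-chain}, and quotes the classical values $\alpha^*(C_5)=\tfrac52$ and $\vt(C_5)=\sqrt5$.
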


The values $\vt(C_5)=\sqrt5$ \cite{Lovasz1979} and $\alpha^*(C_5)=\frac52$
\cite{Shannon1956} are classical, and the upper bounds on $\Cmin(C_5)$,
$\SNS(C_5)$ and $\log_2\Sigma(C_5)$ are Duan and Winter's
chain~\eqref{eq:dw-chain}.  What is new is the lower bound
$\Cmin(C_5)\ge\log_2\frac52$.

\begin{figure}[t]
\centering
\begin{tikzpicture}
\begin{groupplot}[
  group style={group size=2 by 1, horizontal sep=0pt, yticklabels at=edge left},
  scale only axis, height=5.0cm,
  ymin=2.200, ymax=2.548,
  ytick={2.2360679775,2.3,2.4,2.5},
  yticklabels={$\sqrt5$,$2.3$,$2.4$,$\tfrac52$},
  tick label style={font=\small},
  set layers,
  clip=false,
]
\nextgroupplot[width=7.4cm, xmin=0, xmax=0.5,
  xtick={0,0.1,0.2,0.3,0.4,0.5},
  xticklabels={$0$,$0.1$,$0.2$,$0.3$,$0.4$,$\tfrac12$},
  ylabel={$\vt_\alpha(C_5)$}, ylabel style={font=\small}]
  \addplot[name path=up, draw=none, domain=0:0.5, samples=240]
     {min(2.5, 5^(1/(2-2*x)))};
  \addplot[name path=lo, draw=none, domain=0:0.5, samples=240]
     {sqrt(5)*(sqrt(5)/2)^(x/(1-x))};
  \addplot[gray!22, on layer=pre main] fill between[of=up and lo];
  \addplot[very thick, orange!85!black, domain=0:0.1217646, samples=200]
     {5^(1/(2-2*x))};
  \addplot[thick, black, densely dashed, domain=0.1217646:0.5] {2.5};
  \addplot[very thick, blue!70!black, domain=0:0.5, samples=240]
     {sqrt(5)*(sqrt(5)/2)^(x/(1-x))};
  \draw[densely dotted, gray] (axis cs:0.1217646,2.200) -- (axis cs:0.1217646,2.5);
  \node[font=\scriptsize, gray!40!black, anchor=south west]
     at (axis cs:0.128,2.203) {$\alpha\approx0.122$};
  \addplot[only marks, mark=*, mark size=2pt] coordinates {(0,2.2360679775) (0.5,2.5)};
  \addplot[only marks, mark=x, mark size=2.4pt, thick, red!75!black] coordinates
     {(0.04,2.312314) (0.06,2.353826) (0.08,2.397128) (0.09,2.419036)
      (0.1,2.440835) (0.11,2.462321) (0.12,2.483319) (0.125,2.493589)};
  \node[font=\scriptsize, anchor=north west] at (axis cs:0.006,2.233) {$\vt(C_5)=\sqrt5$};
  \node[font=\scriptsize, orange!70!black, anchor=west]
     at (axis cs:0.004,2.47) {umbrella};
  \node[font=\scriptsize, blue!60!black, anchor=north west]
     at (axis cs:0.27,2.305) {interpolation bound};
  \node[font=\scriptsize, gray!30!black, align=center]
     at (axis cs:0.24,2.445) {the curve lies\\in the shaded band};
\nextgroupplot[width=3.3cm, xmin=0.5, xmax=2.08,
  xtick={1,2}, xticklabels={$1$,$2$}]
  \addplot[very thick, black, domain=0.5:2] {2.5};
  \addplot[only marks, mark=*, mark size=2pt] coordinates {(1,2.5) (2,2.5)};
  \node[font=\scriptsize, anchor=south] at (axis cs:1,2.505) {$2^{\Cmin}$};
  \node[font=\scriptsize, anchor=south] at (axis cs:2,2.505) {$\Sigma$};
  \node[font=\scriptsize, align=center] at (axis cs:1.29,2.36)
     {flat at $\tfrac52$\\(\cref{thm:pentagon-half})};
\end{groupplot}
\node[font=\small] at ($(group c1r1.south east)+(0,-0.75)$) {order $\alpha$};
\end{tikzpicture}
\caption{Bounds on $\alpha\mapsto\vt_\alpha(C_5)$.  The
horizontal scale changes at $\alpha=\tfrac12$.  On $[\tfrac12,2]$ the curve is
exactly $\tfrac52$, which gives $\Cmin(C_5)=\log_2\tfrac52$ and
$\Sigma(C_5)=\tfrac52$.  On $[0,\tfrac12]$ it is continuous and nondecreasing
and lies between the interpolation lower bound (blue) and the minimum of
$\tfrac52$ and the Lov\'asz-umbrella bound obtained from the limiting
handle center (orange); where this minimum is $\tfrac52$ it is drawn dashed,
since there it is only an upper bound.  The umbrella bound is better than $\tfrac52$
only for $\alpha<1-\ln5/(2\ln\frac52)\approx0.122$; the crossing of these
upper bounds does not by itself identify a transition in the exact graph value.
Crosses: numerical upper bounds from explicit witnesses, the umbrella vectors
of \cref{fig:pentagon-geometries}(b) with the faithful center
$(1-x)\proj c+\tfrac x2(I-\proj c)$ and $x$ optimized at each order
(verification script \texttt{pentagon\_numerical\_bounds.py}; see Data and
code availability).  They improve on the
limiting handle center only slightly, for example $2.44084<2.44521$ at
$\alpha=0.1$, and they stay below $\tfrac52$ until $\alpha\approx0.128$.
Continuity and the endpoint values $\sqrt5$ and $\tfrac52$ force the curve
through every value in between.}
\label{fig:pentagon-curve}
\end{figure}

The pentagon cannot tell order one from order two.  A different graph can.

\begin{mainthm}[order one is not order two]\label{thm:C}
There is a graph $G_{32}$ on $32$ vertices, built from the Mermin--Peres
magic square, such that
\[
 2^{\Cmin(G_{32})}\ \le\ 6.186983\ldots\ <\ 6.194232\ldots\ \le\ \Sigma(G_{32}).
\]
Both outer bounds are certified by exact rational data.  In particular
$\vt_1\ne\vt_2$.
\end{mainthm}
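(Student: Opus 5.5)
The plan is to prove the two outer bounds separately, each by an explicit certificate, and then compare them numerically.

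\textbf{Upper bound on $\Cmin(G_{32})$.} By the order-one identity $\vt_1=2^{\Cmin}$ in \eqref{eq:endpoints-intro}, any single channel compatible with $G_{32}$ gives an upper bound. I will build $G_{32}$ from the Mermin--Peres square. Its vertices are the $32$ rank-one projectors onto the common eigenbases of the six contexts (rows and columns) of two-qubit Pauli observables. Two vertices are nonadjacent, i.e.\ distinguishable, exactly when the vectors are orthogonal. The channel is then $x\mapsto\rho_x=\proj{\psi_x}$ on $\mathbb C^4$, or a slightly mixed version if the confusability graph requires it. I compute its Holevo capacity with a single rational center $\sigma$ and use
\[
 2^{\Cmin}\le 2^{\max_x D_1(\rho_x\Vert\sigma)}.
\]
For pure outputs $D_1(\proj{\psi}\Vert\sigma)=-\bra\psi\log_2\sigma\ket\psi$. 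To keep the certificate exact, I take $\sigma$ diagonal in a basis where all the $\psi_x$ have rational or algebraic overlaps. The symmetry group of the square (Pauli group times context permutations) should force the optimal $\sigma$ to be $I/4$ or nearly so, which may leave an explicit constant. If instead the optimizer mixes outputs, I will exhibit rational $\sigma$ and bound the logarithms by rational interval enclosures.

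\textbf{Lower bound on $\Sigma(G_{32})=\vt_2(G_{32})$.} This is an infimum over all channels compatible with $G_{32}$, so it needs a dual certificate. At order two,
\[
 \max_x D_2(\rho_x\Vert\sigma)=\log_2\max_x\Tr\rho_x^2\sigma^{-1}.
\]
The projection formula \eqref{eq:projection-formula} and \cref{prop:endpoints} should recast $\Sigma(G)$ as an SDP-type optimization over orthogonal representations, in which the constraint $\rho_x\perp\rho_y$ holds for nonadjacent $x,y$. I will derive a weak-duality lower bound: a feasible dual object, such as a rational matrix $Y$ indexed by $V(G_{32})$ that is PSD and vanishes on non-edges, together with weights $w_x$ summing to $1$, whose objective lower-bounds $\sum_x w_x\Tr\rho_x^2\sigma^{-1}$ for every feasible channel. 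A Cauchy--Schwarz step then gives $\Tr\rho^2\sigma^{-1}\ge(\Tr\rho)^2/\Tr(\sigma P_\rho)$, where $P_\rho$ is the support projection, which reduces the problem to an inequality on Gram matrices of supports. I will certify the lower bound $6.194232\ldots$ by exact rational PSD checks, via an $LDL^\top$ factorization with nonnegative pivots.

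\textbf{Conclusion and main obstacle.} Once both certificates are in place, comparing the two rational numbers gives strict inequality. Since $\vt_1=2^{\Cmin}$ and $\vt_2=\Sigma$, this yields $\vt_1\ne\vt_2$. The main obstacle is the lower bound on $\Sigma$. Unlike the pentagon's Hilbert--Schmidt argument, it must cover mixed outputs of arbitrary dimension. The first thing I would try is reducing to pure outputs, by purifying and using monotonicity of $D_2$ under partial trace. I would then symmetrize over the automorphism group of $G_{32}$, so that the dual SDP lives in a small commutant and the rational certificate has few free parameters.
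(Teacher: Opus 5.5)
\textbf{The graph.} The magic square does not supply $32$ projectors. Its six contexts give $6\times4=24$ rank-one eigenprojections $Q_v$, and these form the seed $F$. On $F$ itself the whole family collapses. The flat witness $\rho_v=Q_v$, $\sigma=I/4$ has radius $\log4$ at every order, and each context is an independent set of size $4$, so $\vt_\alpha(F)=4$ for all $\alpha$.

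The paper's $G_{32}$ is the clone graph $F[(2,1,1,0)]$. Take $Q_0=\proj{00}$, duplicate its nine non-neighbours into pairwise non-adjacent copies, keep the eight class-$1$ and six class-$2$ vertices, and delete vertex $0$. Copies must get orthogonal supports, so ``one pure state per vertex on $\mathbb C^4$'' is not even a compatible channel here.

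\textbf{The upper bound on $2^{\Cmin}$.} Your expectation that symmetry forces $\sigma\approx I/4$ points the wrong way. By \cref{lem:flat}, a flat witness has the same radius at orders one and two, so it can never certify $2^{\Cmin}<\Sigma$. The paper's order-one witness is deliberately non-flat. It uses the magic-square block with a tilted center $\propto\mathrm{diag}(41^4,56^4,56^4,56^4)$, plus $18$ one-dimensional clique blocks, and transports this to $G_{32}$ by the direct-sum cloning of \cref{prop:bridge}. The fourth powers make each Gibbs weight $\prod_i\lambda_i^{\bra iQ_v\ket i}$ exactly rational, because $\bra{00}Q_v\ket{00}=j_v/4$.

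\textbf{The lower bound on $\Sigma$ (the decisive gap).} Your Cauchy--Schwarz step $\Tr\rho^2\sigma^{-1}\ge1/\Tr(\sigma P_\rho)$ is just $D_2\ge D_0$. What it reduces to is the order-zero problem, whose optimum over compatible families is $\vt_0(G_{32})=\vt(G_{32})=6$. A vertex-indexed PSD dual vanishing on non-edges is a Lov\'asz-theta-type certificate of exactly this kind. No certificate along this route can exceed $6$, which is below even the order-one upper bound $6.1869\ldots$.

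The purification step also runs backwards. Data processing gives $D_2(\rho_x\Vert\sigma)\le D_2(\psi_x\Vert\sigma')$, so a lower bound proved for pure families says nothing about mixed ones. Moreover, higher-rank supports genuinely matter, for instance through superadditivity of harmonic weights over clone blocks.

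Nor is there a finite SDP to dualize. The projection formula at order two has the nonconvex constraints $P_xP_y=0$ and unbounded dimension.

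What is missing is a dimension-free handle on the order-two weight $h_\sigma(P)=\Tr\sigma_{/P}$ itself. The paper gets it in two steps:
\begin{enumerate}
\item The bridge $\Sigma(F[k])\ge1/t_H(F[k])$ reduces to representations of the seed, via superadditivity of shorted operators over orthogonal blocks.
\item \Cref{lem:harmonic-moment} takes $\Omega=\vect(\sqrt\sigma)$ and $E_v$ the transpose of the projection onto $\sigma^{-1/2}\ran P_v$. Then $h_\sigma(P_v)=\langle\Omega,(I\otimes E_v)\Omega\rangle$ and $(P_v\otimes E_v)\Omega=(I\otimes E_v)\Omega$.
\end{enumerate}
This turns $t_H$ into a moment problem for two commuting families of projections. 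Its level-one relaxation is a $409\times409$ Gram matrix, symmetrized over the $48$ automorphisms fixing vertex $0$. That relaxation has a rational dual certificate with exact positivity, giving $t_H(G_{32})\le15147154828811/93824992214030$. Symmetrizing without this lift would still leave you with an order-zero bound.
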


In exact form, the two certified bounds are
\[
 2^{\Cmin(G_{32})}\ \le\ \frac{32329248967670751}{5225365592240894}
 \ <\ \frac{93824992214030}{15147154828811}\ \le\ \Sigma(G_{32}),
\]
the reciprocals of the rationals in~\eqref{eq:g32-rationals}.
For this graph $\vt(G_{32})=6$ and $\alpha^*(G_{32})=\tfrac{32}5$, so both
bounds lie well inside the interval $[\vt,\alpha^*]$ of the Duan--Winter chain
(\cref{sec:small}).

Order two also has a natural upper bound of its own.  Every
$d/r$-representation of the complementary graph $\overline G$, in the sense of
Man\v{c}inska and Roberson \cite{MancinskaRoberson2016}, is a flat witness for
$G$, so $\Sigma(G)\le\xi_f(\overline G)$, where $\xi_f$ is the projective rank
(\cref{sec:projective}).  The graph parameter $G\mapsto\xi_f(\overline G)$
lies in Li and Zuiddam's spectrum for the quantum preorder, and whether it
lies in $\Xs$ is open \cite{LiZuiddam2020}.  The two sides agree on the
pentagon and on Duan and Winter's Hadamard example, and if they agreed on
every graph, \cref{thm:A} would settle that question.  They do not.

\begin{mainthm}[order two is not the projective rank]\label{thm:D}
There is a graph $G_{1440}$ on $1440$ vertices, built from the same magic
square, such that
\[
 \Sigma(G_{1440})\ \le\ \frac{215940}{899}=240.2002\ldots\ <\
 240.2033\ldots=\frac{14172}{59}\ \le\ \xi_f\bigl(\overline{G_{1440}}\bigr).
\]
\end{mainthm}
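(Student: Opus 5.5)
The statement has two halves, and I will prove them separately: an upper bound on $\Sigma(G_{1440})$ from an explicit witness, and a lower bound on $\xi_f(\overline{G_{1440}})$ from a dual certificate.

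\textbf{Upper bound on $\Sigma$.} I will use the order-two case of the definition~\eqref{eq:def-intro}. Through the identity $\vt_2=\Sigma$ of \cref{prop:endpoints}, it suffices to exhibit one channel $x\mapsto\rho_x$ compatible with $G_{1440}$ and one center $\sigma$ with
\[
\max_x \Tr\bigl(\rho_x^2\sigma^{-1}\bigr)\le \frac{215940}{899}.
\]
Since $\Sigma\le\xi_f(\overline G)$ by a flat witness, any witness achieving a strictly smaller value must be non-flat.

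I plan to build it from the magic-square observables on $\mathbb C^4$. Take $G_{1440}$ as the graph whose vertices are the eigenprojections of the nine observables (suitably weighted or replicated), so that a $d/r$-representation of the complement exists. Then deform it: replace the rank-$r$ projections by mixed states with unequal eigenvalues, and replace the maximally mixed center by a $\sigma$ that respects the symmetry group of the square (the Pauli group together with the row/column symmetries). Symmetry reduces the optimization to a few parameters. I will solve it numerically, round to rationals, and verify exactly that compatibility holds, i.e.\ $\Tr\rho_x\rho_y=0$ for nonadjacent $x,y$, and that the maximum equals $215940/899$.

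\textbf{Lower bound on $\xi_f$.} Projective rank is hard to bound from below directly, so I will certify it through a rank inequality for the same square. My plan is the following.

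1. Show that any $d/r$-representation of $\overline{G_{1440}}$ induces, on each context of the square, projections summing to identity.
2. Use the parity contradiction of the Mermin--Peres square, combined with a counting or trace argument, to show that $r/d$ cannot reach the value allowed by the fractional (LP) relaxation.
3. Concretely, derive an inequality of the form $d\ge c\,r$ with $c=14172/59$, obtained as the optimum of a small linear program in the traces of products of the projections. The LP's dual solution, with exact rationals, gives the certificate.

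\textbf{Main obstacle.} I expect the lower bound to be the hard step. The rank inequality must capture genuinely noncommutative (Kochen--Specker) structure; a plain semidefinite bound such as $\vt$ would fall below $\Sigma$ and be useless. The first thing I will try is to combine the dimension identities that follow from orthogonality within each row and column with the odd parity constraint, and to look for an integrality gap that forces $r/d$ to be strictly less than the symmetric LP value. The upper bound, by contrast, is routine once the numerics and the exact rational check are done.
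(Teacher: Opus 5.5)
\textbf{Lower bound.} Both halves of your proposal have genuine gaps, and as written it proves neither bound. Your step~1 fails exactly where it matters. If $(R_{v,a})$ is a $d/r$-representation of $\overline{G_{1440}}$ and $P_v=\sum_aR_{v,a}$, then the $P_v$ in a context are orthogonal. But they sum to a projection $S_c$ of normalized rank $240r/d$, since every context of the paper's $G_{1440}=F[(59,60,61,63)]$ has $240$ vertices. This is below $1$ precisely when $d/r>240$, which is the regime you are trying to establish. So the argument must tolerate the defects $\delta_c=1-\tau(S_c)$, and that is the whole difficulty.

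The parity contradiction is also the wrong lever: the square has operator solutions on $\mathbb C^4$, so nothing is contradictory. What you need is the quantitative consequence of anticommutation, made robust against defects. In any operator solution, cell observables in different rows and columns anticommute, so $\tau(A)=\tau(B)=\tau(AB)=0$ and every joint eigenprojection has normalized trace exactly $\frac14$. The clone profile gives vertex $0$ a share $63/240>\frac14$ of its context, so complete contexts are impossible, and the question is how much defect is forced.

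The paper handles this in four steps. It completes each context by adding its defect to one projection. It shows that the completed row and column observables agree on the common kernel of the two defects (using $U_+V_-=U_-V_+=0$), so $\rk(U_{ij}-V_{ij})\le d(\delta_i+\delta_{3+j})$. It pushes this through the product relations to get anticommutators of small rank. It converts rank to trace with $\lvert\Tr X\rvert\le\rk(XY+YX)$ (\cref{lem:anticomm}). This gives \cref{lem:rank-ineq}, and then $\tau(P_0)=63r/d$, $\tau(S_c)=240r/d$ yield $d/r\ge14172/59$. A linear program in traces of products of projections sees none of this rank calculus, and you give no constraints that would encode it.

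\textbf{Upper bound.} You never specify the graph, and read literally your sketch uses the wrong seed. The eigenprojections of the nine individual observables are eighteen rank-two projections. Their only orthogonalities are between the two eigenspaces of one observable, so all contextual structure is lost. The seed must be the $24$ joint eigenprojections of the six contexts, cloned with a profile that depends on the overlap class $j_v=4\Tr Q_vQ_0$.

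More seriously, your symmetry ansatz destroys the separation. A center invariant under conjugation by the two-qubit Pauli group is a multiple of the identity on $\mathbb C^4$. The witness is then flat, each $Q_v$ has harmonic weight $\frac14$, and cloning gives only $\Sigma\le4\max_jk_j$. But $\xi_f(\overline{F[k]})\le4\max_jk_j$ as well (use $Q_v\otimes\proj a$ on $\mathbb C^4\otimes\mathbb C^{\max_jk_j}$). So, as your own remark about flat witnesses predicts, this can never beat the projective rank.

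The paper instead breaks the symmetry down to vertex~$0$. It tilts the center toward $Q_0=\proj{00}$, taking $\sigma=\mathrm{diag}(31,29,29,29)/118$. It designs the profile together with the center, choosing $k=(59,60,61,63)$ so that $k_j(62-j)$ is nearly constant; that is, $k$ tracks the tilted harmonic weights. Then \cref{prop:bridge-exact} gives $\Sigma\le59\cdot3660/899$. The gain is first order in the tilt and tiny (relative gap about $1.3\times10^{-5}$), so the graph and the center cannot be chosen independently.

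Finally, unequal eigenvalues in the states are beside the point. At order two the optimal state on a given support is forced to be the normalized shorted operator (\cref{prop:supported}), and in the paper's witness it has a flat spectrum. The advantage over $\xi_f$ comes entirely from a center that does not commute with the $Q_v$.
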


The upper bound is an explicit order-two witness: the magic-square
representation with a center tilted slightly toward one of its vectors.  The lower bound is a rank inequality for the magic
square that holds in every dimension (\cref{lem:rank-ineq}); both bounds are
proved by hand.

\subsection{A taste of the method: the pentagon}
\label{sec:taste}

Given the projection formula of \cref{sec:compressions}, the pentagon
conjecture has a short proof.  Label
$C_5$ by $\mathbb Z_5$ with $i\sim i\pm1$.  In a representation of the
pentagon, non-neighbours get orthogonal projections: $P_iP_{i+2}=0$ for all
$i$.  At order $\alpha=\frac12$, the formula of \cref{sec:compressions} for
$\vt_\alpha$ becomes
\[
 \frac{1}{\vt_{1/2}(G)}
 =\sup_{(P_x),\,\sigma}\ \min_x\ \nrm{P_x\sigma^{1/2}P_x}_2^2 ,
\]
the supremum running over representations $(P_x)$ of $G$ and faithful states
$\sigma$.  Now the one inequality we need: for every representation of $C_5$
and every matrix $X$,
\begin{equation}\label{eq:hs-intro}
 \sum_{i\in\mathbb Z_5}\nrm{P_iXP_i}_2^2\ \le\ 2\,\nrm{X}_2^2 .
\end{equation}
The reason is local.  The two neighbours $i-1$ and $i+1$ of a vertex are not
adjacent to each other, so $P_{i-1}+P_{i+1}$ is again a projection.  Seen from
inside $P_i$, the sum of all five projections is therefore at most twice the
identity (\cref{lem:hs}).  Put $X=\sigma^{1/2}$.  The right side of~\eqref{eq:hs-intro}
is $2\Tr\sigma=2$, so some vertex has $\nrm{P_i\sigma^{1/2}P_i}_2^2\le2/5$,
and $\vt_{1/2}(C_5)\ge5/2$.  Since $\vt_\alpha$ is nondecreasing in $\alpha$
(\cref{prop:endpoints}(ii)), the
same bound holds for $\vt_1=2^{\Cmin}$ and for $\vt_2=\Sigma$.

The matching upper bound is the obvious classical channel: input $i$ produces a
uniformly random element of $\{i,i+1\}$.  Its Holevo capacity is
$\log_25-1=\log_2\frac52$, and $T=\frac12 I$ dominates every output with
$\Tr T=\frac52$.  Duan and Winter's chain~\eqref{eq:dw-chain} squeezes $\SNS$
in between, and $\alpha^*(C_5)=\frac52$ is classical \cite{Shannon1956}.
Besides \cref{lem:hs}, the argument uses the projection formula, the
identities $\vt_1=2^{\Cmin}$ and $\vt_2=\Sigma$ of \cref{prop:endpoints}(i)
(the first is the radius formula for the Holevo capacity) and monotonicity
in~$\alpha$.

\subsection{What is new}\label{sec:new}

\paragraph{New points in the entanglement-assisted spectrum.}
To our knowledge, the parameters $\vt_\alpha$ with $\alpha>0$ are the first
elements of $\Xs$ other than $\vt$.  Of the two routes to $\Theta_*=\vt$ that
Li and Zuiddam suggested \cite{LiZuiddam2020}, \cref{thm:A,thm:B} rule out the
stronger one, which was to show that $\vt$ is the only point.  They say
nothing about whether $\vt$ is the \emph{least} point.  Every $\vt_\alpha$
dominates $\vt$, so the bound
$\Theta_*\le\vt$ is not improved and the conjecture $\Theta_*=\vt$ is
untouched.  A continuum with the same values on $C_5$ was already available
in the \emph{ordinary} asymptotic spectrum.  Vrana's log-convex interpolation
\cite{Vrana2021} between $\vt$ and a fractional Haemers bound, whose value on
$C_5$ is $\frac52$ \cite[Proposition~4]{BukhCox2019}, evaluates the
vertex-transitive graph $C_5$ log-affinely \cite[Section~5]{Vrana2022}, so
its values on $C_5$ fill $[\sqrt5,\frac52]$.  What is new here is that the
points $\vt_\alpha$ lie in the entanglement-assisted spectrum $\Xs$.

\paragraph{Multiplicativity for joint representations.}
Duan and Winter proved that $\Sigma$ is multiplicative for two \emph{fixed}
cq-graphs \cite[Proposition~18]{DuanWinter2016}.  Applied to representations
of $G\boxt H$ that are tensor products, their result gives the order-two
identity of \cref{thm:product} for the infimum restricted to such
representations.  \Cref{thm:product} proves the graph-level identity without
that restriction, at every order in $[0,2]$; it does not recover their
statement for fixed cq-graphs.  At the two lower endpoints the graph-level
identity is known: order zero
is Lov\'asz's multiplicativity of $\vt$ \cite{Lovasz1979}, and order one is
Duan and Winter's additivity of $\Cmin$ under the strong product
\cite[Lemma~23]{DuanWinter2016}.  Every order in $(0,1)\cup(1,2]$ is new, and
order two is the one with an operational consequence.  There it gives a
single-letter formula,
\[
 \SNS(G)=\log_2\Sigma(G),
\]
for graph-level asymptotic no-signalling simulation: at each block length $n$
one may use any channel compatible with $G^{\boxt n}$, and such a
channel may correlate the $n$ uses.  Duan and Winter remarked that this
identity ``may be true'', and that proving it requires the additivity of
$\log\Sigma$ under the strong product, which they describe as unknown
\cite[Section~V]{DuanWinter2016}.  In their concluding section they list this
graph-level simulation cost as open \cite[Section~VII]{DuanWinter2016}.  For
a fixed cq-graph they had already shown that the asymptotic cost is
single-letter \cite[Proposition~18 and Theorem~19]{DuanWinter2016}.  At the
level of confusability graphs, \cref{thm:product} reduces asymptotic
no-signalling simulation to a one-shot problem.  Their general problem, for
non-commutative bipartite graphs that are not cq-graphs, is not addressed
here.

\paragraph{The Duan--Winter chain, link by link.}
With~\eqref{eq:endpoints-intro} in hand, every link of~\eqref{eq:dw-chain}
can now be decided (\cref{fig:chain}).  Duan and Winter showed that the last
link can be strict.  They conjectured that the first is strict on the
pentagon, which is the corollary above.  They left open whether either of the
two middle links can be strict, and remarked that the third might be an
equality.  It is always an equality, by multiplicativity at order two
(\cref{thm:product,rem:single-letter}).  The second
can be strict (\cref{thm:C}).  For a fixed cq-graph with two non-orthogonal
pure outputs, Duan and Winter had already shown the corresponding
fixed-channel inequalities to be strict \cite[Section~IV.D]{DuanWinter2016};
what is new is a separation of the graph-level quantities.  They also wrote
that ``it seems that
$\Sigma(G)$ and $2^{\Cmin(G)}$ are two new interesting, multiplicative graph
parameters, different from both $\vt(G)$ and $\alpha^*(G)$''
\cite[Section~VII]{DuanWinter2016}.  For $2^{\Cmin}$ multiplicativity is their
Lemma~23, and \cref{thm:product} gives it for $\Sigma$.  Their Hadamard-graph
example already separates both parameters from $\alpha^*$, and the pentagon
separates both from $\vt$.

\paragraph{One family instead of three numbers.}
Duan and Winter already wrote $\log_2\Sigma$ as a conditional min-entropy
\cite[Theorem~2]{DuanWinter2016}, which is a max-relative-entropy radius.
What seems to be new is that, once
one minimizes over states supported in a fixed projection, this radius
coincides with the \emph{Petz} radius of order two.  All three quantities then
sit in a single monotone family.  The sandwiched R\'enyi graph radii give
this entire family as well: for $\frac12\le\beta\le\infty$,
$\widetilde\vt_\beta=\vt_{2-1/\beta}$ (\cref{prop:sandwiched}).  Thus the
sandwiched family runs from $\vt$ at $\beta=\frac12$, through $2^{\Cmin}$
at $\beta=1$, to $\Sigma$ at $\beta=\infty$, and gives the same spectral
points.  The equality is between optimized graph radii, not divergences of
fixed pairs of states.  The picture has a useful by-product: every
$\vt_\alpha$ lies between $\vt$ and $\alpha^*$ (\cref{prop:endpoints}).  At
orders $0$, $1$ and $2$ these bounds follow from Duan and Winter's chain; at
other orders they follow from the chain together with monotonicity in
$\alpha$, and flat witnesses give the upper bound directly.  So the family
collapses to a single value on graphs
with $\vt=\alpha^*$, perfect graphs among them, and it can only separate
points on graphs with $\vt<\alpha^*$.

\paragraph{Order two is not the projective rank.}
Flat witnesses built from $d/r$-representations of the complement give
$\Sigma(G)\le\xi_f(\overline G)$, with equality whenever $\vt=\alpha^*$, on
$C_5$ and on Duan and Winter's Hadamard example.  To our knowledge, neither
this inequality nor the question of equality appears in earlier work.  Had
equality
held for every graph, \cref{thm:A} would have placed $\xi_f(\overline G)$ in
$\Xs$, which Li and Zuiddam leave open \cite[after Proposition~24]{LiZuiddam2020}.
\Cref{thm:D} closes that route, and since $\vt_\alpha\le\Sigma$ for
$\alpha\le2$ it does so for every member of the family; it does not decide
the question.  The
lower bound on $\xi_f$ uses no certificate.  It is one linear inequality
between the normalized ranks of a representation of the $24$-vertex graph
built from the magic square,
tight on the magic-square representation itself, and its proof is a robust
form of the anticommutation relations that every operator solution of the
magic square obeys \cite[Lemma~6]{Ji2013}: up to errors controlled by the
defects of the representation, cell observables in different rows and
columns anticommute, which forces their traces to be small.  Rank-counting
lower bounds on the projective rank have been used before, with spectral
input \cite{WocjanElphick2019}; we know of no lower bound of this
contextual kind.

\begin{figure}[t]
\centering
\begin{tikzpicture}[
  q/.style={font=\normalsize, inner sep=2pt, minimum height=0.6cm},
  ord/.style={font=\scriptsize, gray!50!black},
  st/.style={draw, rounded corners=2pt, align=center, font=\scriptsize,
             text width=2.8cm, inner sep=3pt, minimum height=1.3cm},
  new/.style={st, fill=teal!10, draw=teal!60!black},
  old/.style={st, fill=gray!8, draw=gray!55},
  val/.style={font=\small},
]
\def\dx{3.15}
\node[q] (a) at (0,0)      {$\log_2\vt$};
\node[q] (b) at (\dx,0)    {$\Cmin$};
\node[q] (c) at (2*\dx,0)  {$\SNS$};
\node[q] (d) at (3*\dx,0)  {$\log_2\Sigma$};
\node[q] (e) at (4*\dx,0)  {$\log_2\alpha^*$};
\foreach \u/\v in {a/b,b/c,c/d,d/e} \node at ($(\u)!0.5!(\v)$) {$\le$};
\node[ord] at (0,0.55)     {order $\alpha=0$};
\node[ord] at (\dx,0.55)   {order $\alpha=1$};
\node[ord] at (2*\dx,0.55) {asymptotic};
\node[ord] at (3*\dx,0.55) {order $\alpha=2$};
\node[ord] at (4*\dx,0.55) {LP bound};
\node[new] at ($(a)!0.5!(b)+(0,-1.25)$)
  {\textbf{strict on $C_5$}\\ conjectured by DW;\\ proved here, Thm~\ref{thm:B}};
\node[new] at ($(b)!0.5!(c)+(0,-1.25)$)
  {\textbf{can be strict}\\ open in DW;\\ $G_{32}$, Thm~\ref{thm:C}};
\node[new] at ($(c)!0.5!(d)+(0,-1.25)$)
  {\textbf{always equal}\\ open in DW;\\ Thm~\ref{thm:product} at $\alpha=2$};
\node[old] at ($(d)!0.5!(e)+(0,-1.25)$)
  {\textbf{can be strict}\\ shown by DW\\ \phantom{x}};
\node[font=\small, anchor=east] at (-0.55,-2.45) {$C_5$:};
\node[val] at (0,-2.45)      {$1.161$};
\node[val] at (\dx,-2.45)    {$1.322$};
\node[val] at (2*\dx,-2.45)  {$1.322$};
\node[val] at (3*\dx,-2.45)  {$1.322$};
\node[val] at (4*\dx,-2.45)  {$1.322$};
\node[font=\small, anchor=east] at (-0.55,-3.0) {$G_{32}$:};
\node[val, gray] at (0,-3.0)     {$\cdot$};
\node[val] at (\dx,-3.0)   {$\le2.6293$};
\node[val] at (2*\dx,-3.0) {$\ge2.6309$};
\node[val] at (3*\dx,-3.0) {$\ge2.6309$};
\node[val, gray] at (4*\dx,-3.0) {$\cdot$};
\end{tikzpicture}
\caption{The Duan--Winter chain~\eqref{eq:dw-chain} after this paper.  The
three quantities in the first four columns that carry an order are the members
$\alpha=0,1,2$ of one monotone family.  The last column is Shannon's fractional packing
(LP) bound \cite{Shannon1956}.  Shaded boxes record the status of each link;
teal marks what is new here.  The bottom rows give values in bits (dots are
values we do not need).}
\label{fig:chain}
\end{figure}

\subsection{Related work}

Dalai showed that the Lov\'asz bound and the sphere-packing bound for
classical--quantum channels are both information radii of Csisz\'ar type
\cite{Csiszar1995} built from R\'enyi divergences
\cite{Dalai2013,Dalai2013TIT}.  At the level
of graphs he considered the order-zero radius $\vt_{\mathrm{sp}}$, the
minimum over projector representations $(U_x)$ of $G$ and density operators
$F$ of $\max_x\log(1/\Tr U_xF)$, and proved that it equals $\log\vt(G)$, a
theorem he credits to Schrijver \cite[Theorem~8]{Dalai2013TIT}.  This is our
identity $\vt_0=\vt$.  Duan and Winter prove it as well, by purifying the
center in the proof of their Theorem~5 \cite[Section~V]{DuanWinter2016}; the
case of a pure center with general projections is due to Cabello, Severini
and Winter \cite{CSW2010}, and Dalai and Winter note the observations of
Schrijver and of Duan and Winter \cite{DalaiWinter2017}.  The family
$\vt_\alpha$ continues this graph-level order-zero radius to all orders
$0\le\alpha\le2$.  For a fixed cq-channel, Duan and Winter already express
their quantities as radii: the no-signalling-assisted zero-error capacity of
their Theorem~4 is the order-zero radius (by the minimax formula in their
proof of Theorem~5), the simulation cost of their Theorem~2 is given by a
conditional min-entropy, that is, by a max-relative-entropy radius, and the
Holevo capacity is the relative-entropy radius \cite{DuanWinter2016}.  What
is new here is the graph-level family at every order in $[0,2]$ and the
identification of its order-two member with a Petz radius.

Recent work concerns related channel quantities.  Non-signalling-assisted
simulation cannot increase the R\'enyi capacity of a classical channel, at
any order.  This follows from the data-processing inequality for channel
divergences under non-signalling maps of Oufkir, Yao and Berta
\cite[Lemma~24]{OYB2026} (Lemma~25 in arXiv:2509.15200v1), and Shariat,
Hoogsteder-Riera and Tomamichel give an alternative variational proof that
works directly with the information-radius form of the capacity
\cite{SHT2026}.  This is a channel-level classical analogue of our
monotonicity step (\cref{thm:ea-mono}).  Concurrently, Cheng and Berta show that the
Petz--R\'enyi channel information of quantum-input channels can be strictly
superadditive for every $\alpha\in[\frac12,1)$ \cite{ChengBerta2026}.  This
does not conflict with \cref{thm:product}, which concerns classical-input
families and covers arbitrary, possibly entangled, joint representations of
$G\boxt H$.  Petz radii of a fixed cq-channel also have an operational
meaning: Petz--R\'enyi capacities of Csisz\'ar type determine the error
exponent of coding assisted by activated no-signalling correlations
\cite{OTB2024}.

In the ordinary spectrum, continua arise in a different way.  Vrana's
probabilistic refinement makes the ordinary spectrum log-convex in a suitable
sense \cite{Vrana2021}, and so interpolating between two known points produces
new ones.  We do not know whether an analogue holds for $\le_*$.
\Cref{sec:open} gives a heuristic reason why, even if it does, $\vt_1$ should
not be an interpolation between $\vt$ and $\Sigma$.  The quantum preorder
$\le_q$ relates fewer pairs of graphs than $\le_*$ ($G\le_qH$ implies
$G\le_*H$), so monotonicity under $\le_q$ is the weaker requirement and
$X(\cG,\le_q)$ contains $\Xs$.  Li and Zuiddam place the fractional real and
complex Haemers bounds and the complement of the projective rank in
$X(\cG,\le_q)$ \cite[Theorem~22]{LiZuiddam2020}, and Gao, Gribling and Li
place a tracial Haemers bound in the asymptotic spectrum for the
commuting-operator version of the quantum preorder \cite{GaoGriblingLi2022}.
To our knowledge none of these is known to be monotone under $\le_*$; for
the Haemers bounds and the projective rank, Li and Zuiddam state this
explicitly \cite{LiZuiddam2020}.  Finally, Vrana studied which points of the ordinary spectrum
extend to non-commutative graphs \cite{Vrana2022}.  We stay with classical
graphs throughout.

\paragraph{Organization.}
\Cref{sec:setup} fixes conventions.  \Cref{sec:compressions} rewrites
$\vt_\alpha$ in terms of projections and proves the basic properties of the
family.  \Cref{sec:spectral} proves \cref{thm:A}.  \Cref{sec:pentagon}
treats the pentagon and the Duan--Winter chain, \cref{sec:small}
separates orders one and two, and \cref{sec:projective} separates order two
from the projective rank.  Open problems are collected in
\cref{sec:open}.  The appendices contain the certificate format and a few
routine proofs.

\section{Setup}\label{sec:setup}

\begin{tcolorbox}[colback=gray!4,colframe=gray!45,boxrule=0.5pt,arc=2pt,
  left=6pt,right=6pt,top=4pt,bottom=4pt,title={\small Conventions},
  fonttitle=\bfseries,coltitle=black,colbacktitle=gray!15]
\small
\begin{itemize}[leftmargin=1.2em,itemsep=2pt,topsep=2pt]
\item Graphs are finite and simple, and they are read as \emph{confusability
graphs}: $x\sim y$ means that inputs $x$ and $y$ may be confused.  We write
$x\simeq y$ for ``$x=y$ or $x\sim y$''.
\item Distinct non-adjacent vertices must be told apart, so they are
represented by \emph{orthogonal} objects.  This is Lov\'asz's original
orientation \cite{Lovasz1979}.  Li and Zuiddam express the same relations
through graph complements.
\item $G\sqcup H$ is the disjoint union and $G\boxt H$ the strong product,
where $(x,y)\simeq(x',y')$ if and only if $x\simeq x'$ and $y\simeq y'$.
$K_0$ is the null graph, with no vertices, $K_1$ is a single vertex,
$\overline{K_n}$ is the edgeless graph on $n$
vertices (a noiseless channel with $n$ inputs), and $C_5$ is the pentagon.
\item Logarithms are to base two unless written $\ln$.  Hilbert spaces are
finite-dimensional, of arbitrary dimension.  A state is \emph{faithful} if it
is invertible.  $\Pi_\rho$ is the support projection of $\rho$.
\end{itemize}
\end{tcolorbox}

\subsection{Representations and R\'enyi radii}

\begin{definition}\label[definition]{def:representation}
A \emph{representation} of $G$ is a family $(P_x)_{x\in V(G)}$ of nonzero
orthogonal projections on a Hilbert space such that $P_xP_y=0$ whenever
$x\ne y$ are non-adjacent.  A family of states $(\rho_x)$ is
\emph{compatible with} $G$ if its supports form a representation.
\end{definition}

A compatible family is the same thing as a classical--quantum channel
$x\mapsto\rho_x$ whose confusability graph is contained in $G$: two inputs can
only be confused if they are adjacent.  Extra orthogonality between adjacent
vertices is allowed.  Duan and Winter optimize over the same class of channels.
Readers who prefer channels whose confusability graph is exactly $G$ lose
nothing, because each of the infima
\eqref{eq:def-parameter}--\eqref{eq:sigma-def} below is the same for both
classes (\cref{lem:exact-graph}).

For a state $\rho$ and a faithful state $\sigma$, the Petz--R\'enyi
divergence of order $\alpha\in(0,1)\cup(1,2]$ is
\[
 D_\alpha(\rho\Vert\sigma)=\frac{1}{\alpha-1}\log\Tr\rho^\alpha\sigma^{1-\alpha}.
\]
At $\alpha=1$ it is the relative entropy $D(\rho\Vert\sigma)$, and at
$\alpha=0$ it is $D_0(\rho\Vert\sigma)=-\log\Tr\Pi_\rho\sigma$.  If $\sigma$
is a state that is not faithful but $\supp\rho\subseteq\supp\sigma$, the same
formulas are used with all powers and logarithms of $\sigma$ taken on its
support; \cref{lem:flagged} uses this convention.  We use four
standard facts \cite{Petz1986,Tomamichel2016}.  The divergence is
nonnegative and vanishes at $\rho=\sigma$.  It is additive under tensor
products.  It is nondecreasing in $\alpha$.  And it satisfies data processing
for every $\alpha\in[0,2]$.

\begin{definition}\label[definition]{def:theta-alpha}
For a graph $G$ and $\alpha\in[0,2]$, define $\vt_\alpha(G)$ by
\begin{equation}\label{eq:def-parameter}
 \log\vt_\alpha(G)
 =\inf\Bigl\{\max_{x}D_\alpha(\rho_x\Vert\sigma)\ :\
   (\rho_x)\text{ compatible with }G,\ \sigma\text{ faithful}\Bigr\},
\end{equation}
where the states live on a common Hilbert space of any finite dimension.  For
the null graph we set $\vt_\alpha(K_0)=0$.
\end{definition}

Nothing changes if singular centers are allowed with the usual support
conventions: mix a singular center with a little of the maximally mixed state
and let the weight go to zero.

\subsection{The Duan--Winter quantities}\label{sec:dw-quantities}

Duan and Winter attach the following quantities to a graph
\cite{DuanWinter2016}.  The \emph{minimal Holevo capacity} is
\begin{equation}\label{eq:cmin-def}
 \Cmin(G)=\inf_{(\rho_x)\text{ compatible}}\ \max_p
  \Bigl[S\Bigl(\sum_xp_x\rho_x\Bigr)-\sum_xp_xS(\rho_x)\Bigr].
\end{equation}
The \emph{no-signalling simulation cost} is
\begin{equation}\label{eq:sigma-def}
 \Sigma(G)=\inf\bigl\{\Tr T\ :\ T\ge\rho_x\text{ for all }x,\
   (\rho_x)\text{ compatible}\bigr\}.
\end{equation}
Duan and Winter show that $\min\{\Tr T:T\ge\rho_x\ \forall x\}$ is the real
SDP value governing exact simulation of the channel $x\mapsto\rho_x$ with
no-signalling assistance.  For a fixed channel, the literal one-shot
noiseless message alphabet has size equal to the ceiling of this SDP value;
the ceiling disappears after division by block length.  At graph level,
$\Sigma(G)$ is an infimum over compatible channels and is not known to be
attained.
Thus $\log\Sigma(G)$ is the infimum of the corresponding single-letter SDP
costs.  Tensoring witnesses shows that $\Sigma$ is
submultiplicative under $\boxt$, so the regularization
\[
 \SNS(G)=\lim_{n\to\infty}\frac1n\log\Sigma(G^{\boxt n})
\]
exists.  This is Duan and Winter's graph-level asymptotic simulation cost.
Finally, $\alpha^*(G)$ is Shannon's fractional packing number
\cite{Shannon1956}: the largest
$\sum_xv_x$ over $v\ge0$ with $\sum_{x\in C}v_x\le1$ for every clique $C$.  By
LP duality it is also the smallest $\sum_Cw_C$ over $w\ge0$ with
$\sum_{C\ni x}w_C\ge1$ for every vertex $x$.

\subsection{The entanglement-assisted spectrum}

We use the entanglement-assisted cohomomorphism preorder of Li and Zuiddam
\cite{LiZuiddam2020}, which comes from the entanglement-assisted
homomorphisms of Cubitt et al.~\cite{CMRSSW2014}.  In confusability language,
$G\le_*H$ means that there are a state $\omega$ and positive semidefinite
operators $\omega_g^h$ ($g\in V(G)$, $h\in V(H)$) with
\begin{align}
 \sum_{h\in V(H)}\omega_g^h&=\omega\quad\text{for every }g,\label{eq:ea-sum}\\
 \omega_g^h\,\omega_{g'}^{h'}&=0\quad\text{whenever $g\ne g'$ are non-adjacent
 in $G$ and $h\simeq h'$ in $H$.}\label{eq:ea-orth}
\end{align}
\Cref{lem:preorder} derives~\eqref{eq:ea-sum}--\eqref{eq:ea-orth} from
Li and Zuiddam's definition, which is phrased with graph complements.
Here is the operational picture.  Alice and Bob share a pure entangled state,
and $\omega$ is Bob's half of it.  To send $g$, Alice measures her half with a
measurement that depends on $g$, obtains an outcome $h$, and sends $h$ through
a channel with confusability graph $H$.  Her measurement steers Bob's half to
$\omega_g^h$ (unnormalized), and~\eqref{eq:ea-sum} says that Bob's marginal
does not depend on $g$.  Suppose $g$ and $g'$ must be told apart but the
channel might confuse $h$ with $h'$.  Then~\eqref{eq:ea-orth} lets Bob tell
them apart anyway, by measuring his own half.  So $G\le_*H$ says that, with
entanglement, one use of an $H$-channel does everything one use of a
$G$-channel does.  For example, $\overline{K_n}\le_*G$ says that $n$ messages
can be sent without error over one use of $G$.

\begin{definition}
The \emph{entanglement-assisted asymptotic spectrum} $\Xs$ is the set of maps
$\phi$ from finite graphs to $[0,\infty)$ such that
\[
 \phi(G\sqcup H)=\phi(G)+\phi(H),\quad
 \phi(G\boxt H)=\phi(G)\phi(H),\quad
 \phi(K_1)=1,\quad
 G\le_*H\Rightarrow\phi(G)\le\phi(H).
\]
\end{definition}

Li and Zuiddam show that $\Theta_*(G)=\min_{\phi\in\Xs}\phi(G)$, and that
$\vt\in\Xs$ \cite{LiZuiddam2020}.

\section{Power compressions and the shape of the family}
\label{sec:compressions}

The definition~\eqref{eq:def-parameter} optimizes over states.  It is much
easier to work with projections, and the tool for this is the following
compression.

\subsection{From states to projections}

Let $P\ne0$ be a projection, let $\sigma$ be positive definite (usually a
faithful state, but \cref{lem:tower} also compresses unnormalized operators),
and let $s\in[-1,1]$.  On $\ran P$ define
\[
 K_s(\sigma,P)=
 \begin{cases}(P\sigma^sP)^{1/s}, & s\ne0,\\ \exp\bigl(P(\ln\sigma)P\bigr), & s=0,\end{cases}
 \qquad Z_s(\sigma,P)=\Tr K_s(\sigma,P),
\]
extended by zero off $\ran P$.  We call $K_s$ the \emph{power compression} of
$\sigma$ to $P$ and $Z_s$ its \emph{weight}.  Four values of $s$ recur.
$Z_1=\Tr P\sigma$ is the plain weight.  $Z_{1/2}=\nrm{P\sigma^{1/2}P}_2^2$ is
a Hilbert--Schmidt weight.  $Z_0$ is a Gibbs weight.  And
$Z_{-1}=\Tr(P\sigma^{-1}P)^{-1}$ is the trace of the \emph{shorted operator}
$\sigma_{/P}$, the largest positive operator below $\sigma$ with range in
$\ran P$ \cite{AndersonTrapp1975}.  All the weights are homogeneous of degree
one: $K_s(c\sigma,P)=c\,K_s(\sigma,P)$ for $c>0$.

\begin{proposition}[supported minimization]\label[proposition]{prop:supported}
Let $0\le\alpha\le2$ and $s=1-\alpha$.  Then
\[
 \min_{\rho:\ \supp\rho\subseteq\ran P}D_\alpha(\rho\Vert\sigma)
 =-\log Z_s(\sigma,P),
\]
and the minimum is attained at $\rho=K_s(\sigma,P)/Z_s(\sigma,P)$.
\end{proposition}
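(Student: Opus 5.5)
The plan is to reduce everything to the nonnegativity of the Petz divergence, applied on the subspace $\ran P$ against the normalized compression $K_s(\sigma,P)/Z_s(\sigma,P)$. The key observation is that for any state $\rho$ with $\supp\rho\subseteq\ran P$, the quantity entering $D_\alpha(\rho\Vert\sigma)$ depends on $\sigma$ only through $P\sigma^sP$ (or $P(\ln\sigma)P$ when $s=0$). Indeed, since $\rho^\alpha=P\rho^\alpha P$,
\[
 \Tr\rho^\alpha\sigma^{1-\alpha}=\Tr\rho^\alpha\,(P\sigma^sP)=\Tr\rho^\alpha K_s^{\,s},
 \qquad K_s=K_s(\sigma,P),\ s=1-\alpha .
\]
So $D_\alpha(\rho\Vert\sigma)=D_\alpha(\rho\Vert K_s)$, where the right side is computed on $\ran P$ with $K_s$ as an unnormalized reference operator.

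First I check that $K_s$ is positive definite on $\ran P$, so that the right side is well defined. For $s\in(0,1]$ this holds because $P\sigma^sP\ge\lambda_{\min}(\sigma)^sP$. For $s\in[-1,0)$ it holds because $\sigma^s$ is positive definite and hence so is its compression. For $s=0$ it holds because $\exp$ of a Hermitian operator on $\ran P$ is positive definite.

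Next I use homogeneity of degree one: $K_s=Z_s\cdot\tau$ with $\tau=K_s/Z_s$ a faithful state on $\ran P$. The scaling rule $D_\alpha(\rho\Vert c\tau)=D_\alpha(\rho\Vert\tau)-\log c$ for a state $\rho$ follows directly from the formula, since $(c\tau)^{1-\alpha}=c^{1-\alpha}\tau^{1-\alpha}$ and the prefactor is $1/(\alpha-1)$. It then gives
\[
 D_\alpha(\rho\Vert\sigma)=D_\alpha(\rho\Vert\tau)-\log Z_s(\sigma,P).
\]
Nonnegativity of $D_\alpha$ (one of the four standard facts, valid on $[0,2]$) yields the lower bound $-\log Z_s$. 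The choice $\rho=\tau$ gives $D_\alpha(\tau\Vert\tau)=0$, so the bound is attained at $\rho=K_s/Z_s$, as claimed.

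The endpoint orders need separate checks, since the generic formula degenerates there.
\begin{itemize}[leftmargin=1.5em,itemsep=2pt,topsep=2pt]
\item At $\alpha=1$ ($s=0$), I use $\Tr\rho\ln\sigma=\Tr\rho\,P(\ln\sigma)P=\Tr\rho\ln K_0$, where the logarithm is taken on $\ran P$. Then $D(\rho\Vert\sigma)=D(\rho\Vert K_0)=D(\rho\Vert\tau)-\log Z_0$, and nonnegativity of the relative entropy finishes the case.
\item At $\alpha=0$ ($s=1$), $D_0(\rho\Vert\sigma)=-\log\Tr\Pi_\rho\sigma\ge-\log\Tr P\sigma=-\log Z_1$, because $\Pi_\rho\le P$. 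Equality holds for $\rho=P\sigma P/\Tr P\sigma$, which has full support $P$, consistent with the general formula.
\item The case $\alpha=2$ ($s=-1$) is covered by the generic argument.
\end{itemize}

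The only step needing care is confirming that the support conventions make $D_\alpha(\rho\Vert K_s)$ literally equal to $D_\alpha(\rho\Vert\sigma)$, that is, that no cross terms between $\ran P$ and its complement survive. This holds because every expression involves $\rho^\alpha$ (or $\rho$) sandwiching the $\sigma$-dependent factor. An alternative route for $\alpha\in(0,1)$ is Hölder's inequality $\Tr\rho^\alpha K^{1-\alpha}\le(\Tr\rho)^\alpha(\Tr K)^{1-\alpha}$, with a reverse version for $\alpha\in(1,2]$. The nonnegativity route avoids proving those inequalities separately and handles all orders at once.
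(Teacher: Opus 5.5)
Your proof is correct. At $\alpha=0$ and $\alpha=1$ it coincides with the paper's, but for the other orders it takes a different route.

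\textbf{The paper's route.} For $0<\alpha<1$ the paper bounds $\Tr\rho^\alpha A$, with $A=P\sigma^sP$, directly by H\"older's inequality for Schatten norms. For $1<\alpha\le2$ it pinches in an eigenbasis of $A$, uses convexity of $t\mapsto t^\alpha$, and applies a scalar H\"older inequality. In both regimes the minimizer $A^{1/s}/Z_s$ is read off from the equality cases.

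\textbf{Your route.} You run the order-one argument at every order. The identity $\Tr\rho^\alpha\sigma^s=\Tr\rho^\alpha P\sigma^sP$, together with $P\sigma^sP=Z_s^s\tau^s$ for $\tau=K_s/Z_s$, gives $D_\alpha(\rho\Vert\sigma)=D_\alpha(\rho\Vert\tau)-\log Z_s$. This is exactly the paper's \cref{lem:compression} with $Q=P$. Nonnegativity of $D_\alpha$, one of the standard facts the paper assumes, then finishes the proof.

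\textbf{What each buys.} The paper's H\"older computations amount to in-line proofs of $D_\alpha(\rho\Vert\tau)\ge0$ in the two regimes, so they are self-contained but longer. Your version is uniform in $\alpha$ and makes the proposition a corollary of the compression identity. It also explains the minimizer as the point where the residual divergence vanishes.

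Your argument also carries over to the sandwiched case of \cref{prop:sandwiched}. For $\rho$ supported in $\ran P$, the sandwiched moment depends on $\sigma$ only through $\rho^{1/2}P\sigma^sP\rho^{1/2}$, with $s=(1-\beta)/\beta$. Since $\beta s=1-\beta$, the same normalization plus nonnegativity of $\widetilde D_\beta$ yields \eqref{eq:sandwiched-supported} without the pinching and H\"older steps.
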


\begin{proof}
For $\rho$ supported in $P$ we have
$\Tr\rho^\alpha\sigma^s=\Tr\rho^\alpha A$ with $A=P\sigma^sP$, and
$Z_s=\Tr A^{1/s}$ when $s\ne0$.

\emph{Case $0<\alpha<1$.}  Here $s\in(0,1)$.  H\"older's inequality for
Schatten norms gives
$\Tr\rho^\alpha A\le\nrm{\rho^\alpha}_{1/\alpha}\nrm{A}_{1/s}
=(\Tr A^{1/s})^s$.  Since $\alpha-1<0$, this becomes
$D_\alpha(\rho\Vert\sigma)\ge\frac{s}{\alpha-1}\log Z_s=-\log Z_s$, with
equality at $\rho\propto A^{1/s}$.

\emph{Case $1<\alpha\le2$.}  Now $s\in[-1,0)$ and $A$ is positive definite on
$\ran P$.  Write $A=\sum_ka_k\proj{e_k}$ and $q_k=\bra{e_k}\rho\ket{e_k}$, a
probability vector.  Convexity of $t\mapsto t^\alpha$ gives
$\bra{e_k}\rho^\alpha\ket{e_k}\ge q_k^\alpha$, so
$\Tr\rho^\alpha A\ge\sum_ka_kq_k^\alpha$.  By H\"older,
\[
 1=\sum_k\bigl(a_k^{1/\alpha}q_k\bigr)a_k^{-1/\alpha}
 \le\Bigl(\sum_ka_kq_k^\alpha\Bigr)^{1/\alpha}
    \Bigl(\sum_ka_k^{1/s}\Bigr)^{(\alpha-1)/\alpha},
\]
using $-1/(\alpha-1)=1/s$.  Hence
$\Tr\rho^\alpha A\ge(\sum_ka_k^{1/s})^{s}=Z_s^s$, and dividing the logarithm
by $\alpha-1>0$ gives $D_\alpha\ge-\log Z_s$.  Equality holds when $\rho$ is
diagonal in $(e_k)$ with $q_k\propto a_k^{1/s}$, that is, at
$\rho\propto A^{1/s}$.

\emph{Case $\alpha=1$.}  On $\ran P$ we have
$\ln K_0=P(\ln\sigma)P$, so $\Tr\rho\ln\sigma=\Tr\rho\ln(K_0/Z_0)+\ln Z_0$ and
$D(\rho\Vert\sigma)=D(\rho\Vert K_0/Z_0)-\log Z_0\ge-\log Z_0$.

\emph{Case $\alpha=0$.}  $\Tr\Pi_\rho\sigma\le\Tr P\sigma=Z_1$, with equality
whenever $\rho$ has full support in $\ran P$.
\end{proof}

The minimizers for different vertices have the supports we prescribe, so they
form a compatible family whenever the projections form a representation.
Conversely, the supports of any compatible family form a representation.  This
proves the projection formula:
\begin{equation}\label{eq:projection-formula}
 \frac1{\vt_\alpha(G)}
 =\sup_{(P_x),\,\sigma}\ \min_{x\in V(G)}Z_{1-\alpha}(\sigma,P_x),
\end{equation}
with the supremum over representations of $G$ and faithful states $\sigma$ on
the same space.

\subsection{Endpoints, flat witnesses and a sandwich}

Some witnesses have the same radius at every order.  Call a pair
$(\rho,\sigma)$ \emph{flat} if $\sigma$ commutes with a projection $\Pi$ and
$\rho=\Pi\sigma/\Tr\Pi\sigma$.

\begin{lemma}[flat witnesses]\label[lemma]{lem:flat}
If $(\rho,\sigma)$ is flat, then $D_\alpha(\rho\Vert\sigma)=-\log\Tr\Pi\sigma$
for every $\alpha\in[0,2]$.
\end{lemma}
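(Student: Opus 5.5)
The plan is to compute everything in a basis that diagonalizes $\sigma$ and $\Pi$ simultaneously; such a basis exists because $\sigma$ commutes with $\Pi$. Write $c=\Tr\Pi\sigma>0$, so that $\rho=\Pi\sigma/c=\Pi\sigma\Pi/c$. Since $\Pi$ and $\sigma$ commute, $\rho$ commutes with $\sigma$, and every power is computed blockwise:
\[
\rho^\alpha=c^{-\alpha}\,\Pi\sigma^\alpha \quad\text{for }\alpha>0,
\qquad \Pi_\rho=\Pi_{\Pi}\cap\supp\sigma .
\]
When $\sigma$ is faithful, the second identity just says $\Pi_\rho=\Pi$.

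\emph{Orders $\alpha\in(0,1)\cup(1,2]$.} Substitute into the Petz trace:
\[
\Tr\rho^\alpha\sigma^{1-\alpha}=c^{-\alpha}\Tr\Pi\sigma^\alpha\sigma^{1-\alpha}=c^{-\alpha}\Tr\Pi\sigma=c^{1-\alpha}.
\]
Dividing $\log$ of this by $\alpha-1$ gives $D_\alpha(\rho\Vert\sigma)=-\log c$.

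\emph{Order $\alpha=1$.} Using commutativity again, $\ln\rho=\Pi(\ln\sigma-\ln c)$ on $\ran\Pi$. Hence
\[
\Tr\rho(\ln\rho-\ln\sigma)=-\ln c\,\Tr\rho=-\ln c .
\]
After converting to base two this is again $-\log c$.

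\emph{Order $\alpha=0$.} Here $D_0(\rho\Vert\sigma)=-\log\Tr\Pi_\rho\sigma$. The support of $\rho$ is $\ran\Pi$ intersected with the support of $\sigma$, and $\sigma$ contributes nothing off its own support. Therefore $\Tr\Pi_\rho\sigma=\Tr\Pi\sigma=c$.

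\emph{Non-faithful $\sigma$.} If the support conventions are in force (powers and logarithms of $\sigma$ taken on its support), the same blockwise computation applies verbatim on $\supp\sigma$. This is valid because $\supp\rho\subseteq\supp\sigma$ automatically.

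There is no real obstacle. The only point needing care is to note, once and for all, that commutativity reduces every functional calculus expression to a scalar computation on joint eigenvectors. Alternatively, one can observe that flatness makes the pair classical, with $\rho$ proportional to $\sigma$ restricted to a set, and read off the classical Rényi divergence as a constant.
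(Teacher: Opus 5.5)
Your proof is correct and follows the paper's argument. The paper also computes $\Tr\rho^\alpha\sigma^{1-\alpha}=c^{-\alpha}\Tr\Pi\sigma=c^{1-\alpha}$ for $\alpha\ne0,1$ and calls orders $0$ and $1$ direct; you have written those two cases out explicitly and added a correct remark on non-faithful centers under the support convention.
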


\begin{proof}
With $c=\Tr\Pi\sigma$ we get
$\Tr\rho^\alpha\sigma^{1-\alpha}=c^{-\alpha}\Tr\Pi\sigma=c^{1-\alpha}$, which
gives the claim for $\alpha\ne0,1$.  The two remaining orders are direct.
\end{proof}

\begin{proposition}\label[proposition]{prop:endpoints}
For every graph $G$:
\begin{enumerate}[label=\textup{(\roman*)},itemsep=1pt]
\item $\vt_0(G)=\vt(G)$ \textup{\cite[Theorem~8]{Dalai2013TIT},
\cite[proof of Theorem~5]{DuanWinter2016}},
$\vt_1(G)=2^{\Cmin(G)}$
\textup{\cite{OhyaPetzWatanabe1997,SchumacherWestmoreland2001}} and
$\vt_2(G)=\Sigma(G)$;
\item $\alpha\mapsto\vt_\alpha(G)$ is nondecreasing on $[0,2]$
\textup{(immediate from the monotonicity of $D_\alpha$ in $\alpha$
\cite{Tomamichel2016})};
\item $\vt(G)\le\vt_\alpha(G)\le\alpha^*(G)$ for every $\alpha\in[0,2]$; at
$\alpha\in\{0,1,2\}$ these bounds follow from Duan and Winter's
chain~\eqref{eq:dw-chain} \textup{\cite{DuanWinter2016}}.
\end{enumerate}
\end{proposition}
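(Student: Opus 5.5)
The plan is to work throughout with the projection formula~\eqref{eq:projection-formula}. Part~(ii) comes first because it is immediate: $D_\alpha$ is nondecreasing in $\alpha$ for each fixed pair $(\rho_x,\sigma)$. Hence $\max_xD_\alpha(\rho_x\Vert\sigma)$ is nondecreasing in $\alpha$, and so is its infimum over the common feasible set of~\eqref{eq:def-parameter}.

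For part~(i) at $\alpha=0$, \cref{prop:supported} gives $1/\vt_0=\sup\min_x\Tr P_x\sigma$, and this is exactly Schrijver's radius as presented by Dalai; we cite \cite[Theorem~8]{Dalai2013TIT}. If a self-contained argument is wanted, one direction is the Lov\'asz umbrella with a pure center, mixed toward faithfulness. The other direction purifies $\sigma$ and takes the normalized vectors $(P_x\otimes I)\ket{\psi}$ as an orthonormal-type representation, following Duan--Winter. At $\alpha=1$ we use the radius formula $\max_p\chi=\min_\sigma\max_xD(\rho_x\Vert\sigma)$ from \cite{OhyaPetzWatanabe1997,SchumacherWestmoreland2001}, valid for each fixed compatible family. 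Taking the infimum over families gives $\vt_1=2^{\Cmin}$. Faithfulness of the center costs nothing, by the mixing remark after \cref{def:theta-alpha} together with continuity of $D$ on supported states.

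The main step is $\alpha=2$. By \cref{prop:supported} with $s=-1$, $1/\vt_2=\sup\min_x\Tr(P_x\sigma^{-1}P_x)^{-1}=\sup\min_x\Tr\sigma_{/P_x}$, using the shorted operator. To show $\Sigma\le\vt_2$, take a representation and faithful $\sigma$ with $\min_x\Tr\sigma_{/P_x}\ge c$. Set $\rho_x=\sigma_{/P_x}/\Tr\sigma_{/P_x}$; this family is compatible, since the support of $\sigma_{/P_x}$ is $\ran P_x$ for faithful $\sigma$. Then $T=\sigma/c$ satisfies $T\ge\sigma_{/P_x}/\Tr\sigma_{/P_x}=\rho_x$ and $\Tr T=1/c$.

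For the converse, start from $T\ge\rho_x$ for all $x$ with $(\rho_x)$ compatible, and perturb $T$ to be invertible at arbitrarily small cost. Put $\sigma=T/\Tr T$. Since $\rho_x\le T$ and $\rho_x$ lives in $\ran P_x$, the maximality of the short gives $\rho_x\le T_{/P_x}$, hence $\Tr T_{/P_x}\ge1$ and $\Tr\sigma_{/P_x}\ge1/\Tr T$ by homogeneity. So $1/\vt_2\ge1/\Tr T$. Both directions rest on the Anderson--Trapp characterization $(P\sigma^{-1}P)^{-1}=\max\{X\ge0:X\le\sigma,\ \ran X\subseteq\ran P\}$. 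This is where I expect care to be needed: the inverse must be taken on $\ran P$, and the restriction to invertible $T$ must be justified.

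For part~(iii), the lower bound follows from~(ii) and $\vt_0=\vt$. For the upper bound we build flat witnesses via \cref{lem:flat}. Take an optimal fractional clique cover $w$ and the Hilbert space $\bigoplus_C\mathbb C$ with basis $\ket C$. Let $P_x=\sum_{C\ni x}\proj C$ and $\sigma=\sum_Cw_C\proj C/\sum w$. Vertices that are non-adjacent share no clique, so their projections are orthogonal, and $\sigma$ commutes with each $P_x$. Then $\Tr P_x\sigma\ge1/\alpha^*$, so $\vt_\alpha\le\alpha^*$ at every order, after restricting to cliques with $w_C>0$ and mixing for faithfulness.
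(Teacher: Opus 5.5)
Your proposal is correct and follows the paper's proof essentially step by step. For (i) you use the projection formula at $s=0$ with the purification argument, the radius formula at order one, and the shorted-operator characterization in both directions at $s=-1$; for (ii) the monotonicity of $D_\alpha$; and for (iii) the flat witness from a fractional clique cover. The only detail the paper adds is the remark that Lov\'asz's real orthonormal representations may be replaced by complex ones (rephase, then realify), which your order-zero sketch uses implicitly.
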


\begin{proof}
(i) \emph{Order zero.}  This is Dalai's Theorem~8, which he credits to
Schrijver \cite{Dalai2013TIT}.  Duan and Winter give the same purification
argument in the proof of their Theorem~5 \cite[Section~V]{DuanWinter2016},
and for a pure center the passage from projections to vectors is the
argument of Cabello, Severini and Winter \cite[proof of Eq.~(6)]{CSW2010}.
We include the short argument in our notation.
By~\eqref{eq:projection-formula},
$1/\vt_0(G)=\sup\min_x\Tr P_x\sigma$.  Lov\'asz's definition of $\vt$
through the value of an orthonormal representation reads
$1/\vt(G)=\sup\min_x|\langle c,u_x\rangle|^2$ over orthonormal
representations $(u_x)$ of $G$ and unit vectors $c$
\cite[Section~II]{Lovasz1979}.  Lov\'asz uses real vectors, but complex ones
give the same value: rephase each $u_x$ so that $\langle c,u_x\rangle\ge0$,
and then realify, $v\mapsto(\operatorname{Re}v,\operatorname{Im}v)$, which
preserves unit length, orthogonality and the values $\langle c,u_x\rangle$.  A
vector representation is a projection representation with $P_x=\proj{u_x}$,
and a faithful approximation of $\proj c$ gives $\vt_0\le\vt$.  Conversely,
purify $\sigma$ to a unit vector $c\in\cH\otimes\cH'$ and put
$u_x\propto(P_x\otimes I)c$.  Non-adjacent vertices get orthogonal vectors,
and $|\langle c,u_x\rangle|^2=\Tr P_x\sigma$, so $\vt\le\vt_0$.

\emph{Order one.}  For each compatible family, the Holevo capacity equals the
relative-entropy radius $\min_\sigma\max_xD(\rho_x\Vert\sigma)$
\cite{OhyaPetzWatanabe1997,SchumacherWestmoreland2001}.  Taking the infimum
over families gives $\vt_1=2^{\Cmin}$.

\emph{Order two.}  By~\eqref{eq:projection-formula},
$1/\vt_2=\sup\min_x\Tr\sigma_{/P_x}$.  Given a compatible family and $T\ge\rho_x$
with $t=\Tr T$, put $\sigma=T/t$, regularized if needed.  Each $\rho_x/t$ lies
below $\sigma$ with range in $\ran P_x$, so $\Tr\sigma_{/P_x}\ge1/t$.  Hence
$\vt_2\le\Sigma$.  Conversely, if $\Tr\sigma_{/P_x}\ge c$ for all $x$, the
states $\rho_x=\sigma_{/P_x}/\Tr\sigma_{/P_x}$ are compatible and all lie below
$T=\sigma/c$, whose trace is $1/c$.  Hence $\Sigma\le\vt_2$.

(ii) $D_\alpha$ is nondecreasing in $\alpha$ for each pair of states, and
infima preserve this.

(iii) The lower bound follows from (i) and (ii).  The upper bound also
follows from (i), (ii) and~\eqref{eq:dw-chain}, since
$\vt_\alpha\le\vt_2=\Sigma\le\alpha^*$; flat witnesses give it directly.
Take a
fractional clique cover $w$ with $\sum_Cw_C=\alpha^*(G)$, and work on the
space spanned by orthonormal vectors $\ket C$, one for each clique with
$w_C>0$.  Put $\sigma=\sum_Cw_C\proj C/\alpha^*(G)$ and let $\Pi_x$ project
onto the cliques containing $x$.  Non-adjacent vertices share no clique, so the
flat states $\rho_x=\Pi_x\sigma/\Tr\Pi_x\sigma$ are compatible.  By
\cref{lem:flat} their radius at every order is
$\max_x\log\bigl(\alpha^*(G)/\sum_{C\ni x}w_C\bigr)\le\log\alpha^*(G)$.
\end{proof}

Two consequences are worth stating.  First, if $\vt(G)=\alpha^*(G)$, the
whole family collapses on $G$.  This happens for every perfect graph.  Second,
if some flat witness is optimal for $\vt_1(G)$, then $\vt_1(G)=\vt_2(G)$,
because the flat radius does not depend on the order.  On the pentagon, this
is exactly what happens (\cref{sec:pentagon}).

\subsection{Continuity}

\begin{proposition}\label[proposition]{prop:continuity}
For every graph $G$, the map $\alpha\mapsto\vt_\alpha(G)$ is continuous on
$[0,1)$.
\end{proposition}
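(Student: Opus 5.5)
The plan is to use the projection formula~\eqref{eq:projection-formula}, $1/\vt_\alpha(G)=\sup_{(P_x),\sigma}\min_xZ_{1-\alpha}(\sigma,P_x)$. Write $f(s)=1/\vt_{1-s}(G)$ for $s\in(0,1]$, so $s=1-\alpha$ and $\alpha\in[0,1)$ corresponds to $s\in(0,1]$. Then $f$ is a supremum of functions $s\mapsto\min_xZ_s(\sigma,P_x)$. By \cref{prop:endpoints}(ii) the map $\alpha\mapsto\vt_\alpha$ is monotone, so only jumps need to be excluded. A supremum of continuous functions is lower semicontinuous, and in the monotone setting this gives one-sided continuity for free. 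The other side needs a quantitative comparison of $Z_s$ at nearby orders that is uniform over all representations and centers, in every dimension.

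\emph{Step 1: one side via lower semicontinuity.} For a fixed representation and faithful $\sigma$, the map $s\mapsto Z_s(\sigma,P)=\Tr(P\sigma^sP)^{1/s}$ is continuous on $(0,1]$. Hence $f$ is lower semicontinuous on $(0,1]$. As $s$ increases, $\alpha$ decreases and $\vt_\alpha$ decreases, so $f$ is nondecreasing in $s$. A nondecreasing lower semicontinuous function is left-continuous. In terms of $\alpha$, this means $\vt_\alpha$ is right-continuous on $[0,1)$.

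\emph{Step 2: the other side via a uniform interpolation inequality.} Fix $0<s<t\le1$. I want a bound $Z_s(\sigma,P)\ge Z_t(\sigma,P)^{\lambda}\cdot(\text{something})$ with exponents tending to $1$ as $s\to t$. The natural tool is log-convexity in the order. The quantity $\Tr\rho^\alpha\sigma^{1-\alpha}$ is log-convex in $\alpha$, so for a fixed pair $(\rho,\sigma)$ the map $\alpha\mapsto(\alpha-1)D_\alpha$ is convex. This yields the interpolation
\[
(1-\alpha)D_\alpha\ \ge\ \tfrac{1-\alpha}{1-\beta}(1-\beta)D_\beta+\cdots .
\]
More concretely, for $0\le\beta<\alpha<1$, convexity of $\alpha\mapsto\log\Tr\rho^\alpha\sigma^{1-\alpha}$ on $[0,1]$, which vanishes at $\alpha=1$, gives
\[
(1-\alpha)D_\alpha(\rho\Vert\sigma)\ \le\ (1-\beta)D_\beta(\rho\Vert\sigma)\cdot\frac{1-\alpha}{1-\beta}\cdot\frac{\ ?\ }{}
\]
I have not pinned down this step. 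The cleanest usable form I would aim for is the chord inequality $\log\Tr\rho^\alpha\sigma^{1-\alpha}\le\frac{1-\alpha}{1-\beta}\log\Tr\rho^\beta\sigma^{1-\beta}$, which follows from convexity between $\beta$ and $1$. It gives $D_\alpha\ge D_\beta$, which only reproduces monotonicity. I would instead use convexity between $0$ and $\alpha'$ with $\alpha'>\alpha$ to get
\[
(1-\alpha)D_\alpha\ \le\ \tfrac{\alpha'-\alpha}{\alpha'}D_0+\tfrac{\alpha}{\alpha'}(1-\alpha')D_{\alpha'}.
\]
Take the $\vt_{\alpha'}$-optimal witness and apply this to every vertex, using $D_0\le\log\vt$-type control; only an upper bound on $D_0$ is needed, and $D_0\le D_{\alpha'}$. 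This yields
\[
\log\vt_\alpha\ \le\ \frac{(\alpha'-\alpha)+\alpha(1-\alpha')}{\alpha'(1-\alpha)}\log\vt_{\alpha'}\cdots
\]
This goes the wrong way for the bound I need. The bound I actually need is $\vt_{\alpha'}\le\vt_\alpha^{c(\alpha,\alpha')}$ with $c\to1$ as $\alpha'\downarrow\alpha$, meaning the left limit at $\alpha'$ is controlled. For that I would use convexity with the point $\alpha'$ placed between $\alpha$ and $1$:
\[
\log\Tr\rho^{\alpha'}\sigma^{1-\alpha'}\le\tfrac{1-\alpha'}{1-\alpha}\log\Tr\rho^{\alpha}\sigma^{1-\alpha}.
\]
After dividing by $\alpha'-1<0$, this gives $D_{\alpha'}\ge\frac{\alpha}{\alpha'}\cdot$ something, again a lower bound. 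So convexity of the Petz quantity alone supplies only monotonicity in both directions. The uniform upper bound on $D_{\alpha'}$ at the same witness must come from elsewhere.

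\emph{Step 3: the main obstacle.} The real difficulty is bounding $D_{\alpha'}(\rho\Vert\sigma)-D_\alpha(\rho\Vert\sigma)$ uniformly, without dimension dependence. My plan is to modify the witness rather than keep it. Starting from the near-optimal witness $(P_x,\sigma)$ at order $\alpha$, I would replace $\sigma$ by the compressed states $K_s(\sigma,P_x)$ and apply Hölder-type comparisons between $\Tr(P\sigma^sP)^{1/s}$ and $\Tr(P\sigma^tP)^{1/t}$. The intended comparison is Araki--Lieb--Thirring in the form $\Tr(P\sigma^tP)^{1/t}\ge\Tr(P\sigma^sP)^{1/s}$ for $t\ge s$, together with a reverse bound that loses a factor depending only on the ratio $s/t$. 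If no such dimension-free reverse bound exists, I would restrict to witnesses whose eigenvalues are bounded below by a small power of the dimension, and then argue by the monotone-limit structure. I expect this reverse estimate to be the crux.
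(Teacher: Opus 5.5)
Your Step~1 is correct. It is the paper's right-continuity argument in another form: by the projection formula, $s\mapsto1/\vt_{1-s}(G)$ is a supremum of functions continuous on $(0,1]$, hence lower semicontinuous. Being nondecreasing, it is therefore left-continuous, which means $\alpha\mapsto\vt_\alpha(G)$ is right-continuous on $[0,1)$.

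The gap is left continuity, which you never establish. The obstruction you report does not exist; it comes from a sign error. Put $\psi(t)=\ln\Tr\rho^t\sigma^{1-t}$. For $0\le t<1$ one has $\psi(t)=-(1-t)D_t(\rho\Vert\sigma)\ln2$. So for $0<\alpha<\alpha'<1$, the convexity inequality $\psi(\alpha)\le(1-\frac{\alpha}{\alpha'})\psi(0)+\frac{\alpha}{\alpha'}\psi(\alpha')$ reads
\[
 (1-\alpha)D_\alpha\ \ge\ \tfrac{\alpha'-\alpha}{\alpha'}D_0+\tfrac{\alpha}{\alpha'}(1-\alpha')D_{\alpha'},
\]
which is the reverse of what you wrote. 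Since $D_0=-\log\Tr\Pi_\rho\sigma\ge0$, the first term can be dropped, giving
\[
 D_{\alpha'}(\rho\Vert\sigma)\ \le\ \frac{\alpha'(1-\alpha)}{\alpha(1-\alpha')}\,D_\alpha(\rho\Vert\sigma).
\]
The constant depends only on the two orders, not on the states or the dimension, and it tends to $1$ as $\alpha\uparrow\alpha'$. This is exactly the comparison the paper uses.

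Apply it to a near-optimal witness at the \emph{lower} order $\alpha$, not, as you proposed, to an optimal witness at $\alpha'$. This gives
\[
 \log\vt_\alpha(G)\le\log\vt_{\alpha'}(G)\le\frac{\alpha'(1-\alpha)}{\alpha(1-\alpha')}\log\vt_\alpha(G),
\]
and left continuity at every $\alpha'\in(0,1)$ follows. So your conclusion that convexity of the Petz quantity ``supplies only monotonicity'' is false. The useful configuration is the one where the lower order sits between $0$ and the higher order, combined with $\psi(0)\le0$.

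Step~3, as written, is not a proof, and it is not needed. The dimension-free reverse estimate you hope for does hold: $Z_s(\sigma,P)\ge Z_t(\sigma,P)^{t(1-s)/(s(1-t))}$ for $0<s<t<1$. But it follows from the inequality above combined with \cref{prop:supported}, and you do not supply it. The fallback of restricting to centers whose eigenvalues are bounded below in terms of the dimension cannot work. $\vt_\alpha$ is an infimum over all dimensions, and near-optimal centers may be nearly singular, as for the pentagon umbrella. Any loss that depends on the dimension or the smallest eigenvalue destroys the uniformity that continuity of the graph parameter requires.
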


Right continuity comes from fixing a near-optimal witness.  Left continuity
comes from a comparison $D_\alpha\le\frac{\alpha(1-\beta)}{\beta(1-\alpha)}D_\beta$
for $\beta<\alpha<1$, whose constant does not depend on the states or the
dimension.  The details are in \cref{app:continuity}.  The right-continuity
argument works at every order in $[0,2)$, so the map is right-continuous on
$[0,2)$.  Left continuity at orders in $[1,2]$ is not proved here, except on
the pentagon, where the map is constant on $[\frac12,2]$
(\cref{thm:pentagon-half}).

\subsection{Sandwiched radii give the same family}\label{sec:sandwiched}

The Petz divergence is not the only quantum R\'enyi divergence.  For
$\beta\in[\frac12,1)\cup(1,\infty)$ the \emph{sandwiched} R\'enyi divergence
\cite{MDSFT2013,WWY2014} is
\[
 \widetilde D_\beta(\rho\Vert\sigma)
 =\frac1{\beta-1}\log\Tr\Bigl(\sigma^{\frac{1-\beta}{2\beta}}\,\rho\,
   \sigma^{\frac{1-\beta}{2\beta}}\Bigr)^{\beta},
\]
with the relative entropy $\widetilde D_1=D$ at order one and the
max-relative entropy
$\widetilde D_\infty(\rho\Vert\sigma)=\log\min\{\lambda:\rho\le\lambda\sigma\}$
at order infinity.  This family satisfies data processing throughout
$[\frac12,\infty]$ \cite{FrankLieb2013,Beigi2013}.  Define
$\widetilde\vt_\beta(G)$
by~\eqref{eq:def-parameter} with $\widetilde D_\beta$ in place of $D_\alpha$,
using the same compatible families and faithful centers, and set
$\widetilde\vt_\beta(K_0)=0$.  On this full range it gives exactly the Petz
family, with a change of order.

\begin{proposition}[sandwiched radii]\label[proposition]{prop:sandwiched}
For every graph $G$ and every $\beta\in[\frac12,\infty]$,
\[
 \widetilde\vt_\beta(G)=\vt_{2-1/\beta}(G),
\]
with $2-1/\infty=2$.  In particular,
\[
 \widetilde\vt_{1/2}=\vt,\qquad
 \widetilde\vt_{2/3}=\vt_{1/2},\qquad
 \widetilde\vt_1=2^{\Cmin},\qquad
 \widetilde\vt_\infty=\Sigma.
\]
\end{proposition}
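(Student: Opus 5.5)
The plan is to prove the sandwiched analogue of \cref{prop:supported}: for $\beta\in[\frac12,\infty]$ and $s=1/\beta-1\in[-1,1]$, which is exactly $s=1-\alpha$ with $\alpha=2-1/\beta$,
\[
 \min_{\rho:\ \supp\rho\subseteq\ran P}\widetilde D_\beta(\rho\Vert\sigma)=-\log Z_s(\sigma,P),
\]
with the minimum attained at $\rho=K_s(\sigma,P)/Z_s(\sigma,P)$. The radius in \eqref{eq:def-parameter} minimizes over compatible families, which is the same as choosing a representation $(P_x)$ and then minimizing each $\rho_x$ separately over states supported in $P_x$. So once this identity holds, the sandwiched radii satisfy the same projection formula \eqref{eq:projection-formula} with the same $Z_{1-\alpha}$, and $\widetilde\vt_\beta=\vt_{2-1/\beta}$ follows at once. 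The listed special cases are then \cref{prop:endpoints}(i): $\beta=\frac12$ gives $\alpha=0$, $\beta=1$ gives $\alpha=1$, and $\beta=\infty$ gives $\alpha=2$. The case $\beta=1$ needs no work, since $\widetilde D_1=D$.

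\emph{Reduction.} Set $\gamma=(1-\beta)/(2\beta)=s/2$. The operator $\sigma^{\gamma}\rho\sigma^{\gamma}$ has the same nonzero spectrum as $\rho^{1/2}\sigma^{s}\rho^{1/2}$. Because $\rho=P\rho P$, the latter equals $\rho^{1/2}A\rho^{1/2}$ with $A=P\sigma^sP$, which in turn has the same spectrum as $A^{1/2}\rho A^{1/2}$. So the task is to bound $Q(\rho)=\Tr(A^{1/2}\rho A^{1/2})^\beta$ over states on $\ran P$. Note that $\beta s=1-\beta$.

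\emph{Case $\frac12\le\beta<1$, so $s\in(0,1]$.} Apply Hölder for Schatten norms to $\rho^{1/2}A\rho^{1/2}$ with exponents $2,1/s,2$. This is legitimate because $\frac12+s+\frac12=1/\beta$. It gives $\nrm{\rho^{1/2}A\rho^{1/2}}_\beta\le\nrm{A}_{1/s}=Z_s^{s}$, hence $Q\le Z_s^{s\beta}$. Since $\beta-1<0$, this yields $\widetilde D_\beta\ge\frac{s\beta}{\beta-1}\log Z_s=-\log Z_s$.

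\emph{Case $1<\beta<\infty$, so $s\in(-1,0)$ and $A$ is invertible on $\ran P$.} Write
\[
 1=\Tr\rho=\Tr\bigl(\rho^{1/2}A^{1/2}\bigr)A^{-1}\bigl(A^{1/2}\rho^{1/2}\bigr)
\]
and apply Hölder with exponents $2\beta,q,2\beta$, where $1/\beta+1/q=1$, so that $q=-1/s$. This gives $1\le\nrm{A^{1/2}\rho A^{1/2}}_\beta\,\nrm{A^{-1}}_{-1/s}=Q^{1/\beta}Z_s^{-s}$. Hence $Q\ge Z_s^{s\beta}$, and dividing by $\beta-1>0$ again gives $\widetilde D_\beta\ge-\log Z_s$.

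\emph{Equality.} In both cases equality holds at $\rho\propto A^{1/s}$. Then $A^{1/2}\rho A^{1/2}\propto A^{1+1/s}$ commutes with everything in sight, and a direct computation gives $Q=Z_s^{s\beta}$.

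\emph{Case $\beta=\infty$.} For $\rho$ supported in $\ran P$, the condition $\rho\le\lambda\sigma$ says that $\rho/\lambda$ is a positive operator below $\sigma$ with range in $\ran P$. That is equivalent to $\rho/\lambda\le\sigma_{/P}$, by the maximality of the shorted operator. Taking traces gives $\lambda\ge1/\Tr\sigma_{/P}=1/Z_{-1}$, with equality at $\rho=\sigma_{/P}/Z_{-1}=K_{-1}/Z_{-1}$. This is the argument already used for order two in \cref{prop:endpoints}.

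I expect the main obstacle to be the case $\beta>1$. There the bound has to run in the reverse direction, and the correct factorization of $\Tr\rho$ for Hölder must be found. One also has to check the boundary conventions, namely the powers of $\sigma$ restricted to $\ran P$ and the use of faithful versus compressed centers, so that the minimizer really is supported in $P$ and is optimal. The spectral-equivalence step must also be justified carefully when $\rho$ is not faithful on $\ran P$. I would handle that using the fact that $XY$ and $YX$ have the same nonzero spectrum.
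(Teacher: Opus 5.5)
Your proof is correct. Its architecture matches the paper's:
\begin{itemize}
\item reduce to the supported minimum $-\log Z_s(\sigma,P)$ with $s=1/\beta-1$;
\item identify the sandwiched moment with $\Tr(A^{1/2}\rho A^{1/2})^\beta$ through the $XY$/$YX$ spectrum argument;
\item for $1<\beta<\infty$, apply H\"older against $\Tr A^{-1}(A^{1/2}\rho A^{1/2})=1$;
\item at $\beta=\infty$, use the maximality of the shorted operator.
\end{itemize}

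The one genuine difference is the case $\frac12\le\beta<1$. The paper writes $\sigma^{s/2}\rho\sigma^{s/2}=VXV^*$ with an isometry $V$, where $X=A^{1/2}\rho A^{1/2}$. It then pinches $X$ in an eigenbasis of $A$; by concavity this cannot decrease $\Tr X^\beta$, and it preserves $\Tr A^{-1}X$. Finally it applies scalar H\"older to $\sum_i a_i^\beta q_i^\beta$.

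You instead apply a three-factor Schatten H\"older inequality directly:
\[
\nrm{\rho^{1/2}A\rho^{1/2}}_\beta\le\nrm{\rho^{1/2}}_2\,\nrm{A}_{1/s}\,\nrm{\rho^{1/2}}_2 .
\]
This is shorter, and it makes the cases $\beta<1$ and $\beta>1$ mirror images of each other: both become H\"older with two $\rho^{1/2}$-type factors around a power of $A$.

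There is one point to attend to. Here the target exponent $\beta$ is below one, so $\nrm{\cdot}_\beta$ is only a quasi-norm, and the textbook H\"older inequality for Schatten norms does not literally apply. The inequality does hold for all positive exponents. Horn's log-majorization $\prod_{j\le k}s_j(XYZ)\le\prod_{j\le k}s_j(X)s_j(Y)s_j(Z)$ gives weak majorization of the $\beta$-th powers, and scalar H\"older finishes the argument. You should cite or include this. The paper's pinching route is built to avoid the issue, since it uses only concavity of $X\mapsto\Tr X^\beta$ and scalar H\"older.
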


\begin{proof}
Put $s=(1-\beta)/\beta\in[-1,1]$, with $s=-1$ at $\beta=\infty$, so the
corresponding Petz order is $\alpha=1-s=2-1/\beta$.  As in the proof
of~\eqref{eq:projection-formula}, it suffices to show that for a projection
$P\ne0$ and a faithful state $\sigma$,
\begin{equation}\label{eq:sandwiched-supported}
 \min_{\rho:\ \supp\rho\subseteq\ran P}\widetilde D_\beta(\rho\Vert\sigma)
 =-\log Z_s(\sigma,P),
\end{equation}
with a minimizer faithful on $\ran P$.  All inverses and compressed powers
below are taken on $\ran P$.

\emph{Case $\frac12\le\beta<1$.}  Here $s\in(0,1]$.  Put
$A=P\sigma^sP>0$ on $\ran P$ and $X=A^{1/2}\rho A^{1/2}$.  The constraint
on $\rho$ becomes $X\ge0$, $\Tr A^{-1}X=1$.  The map
$V=\sigma^{s/2}P A^{-1/2}:\ran P\to\cH$ is an isometry, and
\[
 \sigma^{s/2}\rho\sigma^{s/2}=VXV^*.
\]
Consequently the sandwiched moment is $\Tr X^\beta$.
Pinch $X$ in an orthonormal eigenbasis $(e_i)$ of $A$, obtaining $Y$.
Pinching preserves $\Tr A^{-1}X$ and cannot decrease $\Tr X^\beta$:
it is an average of unitary conjugations, and $X\mapsto\Tr X^\beta$ is
concave for $0<\beta<1$.  Writing $Ae_i=a_i e_i$ and
$\langle e_i,Ye_i\rangle=a_iq_i$, we have $q_i\ge0$ and $\sum_iq_i=1$.
Scalar H\"older with conjugate exponents $1/\beta$ and $1/(1-\beta)$ gives
\[
 \Tr X^\beta\le\Tr Y^\beta
 =\sum_i a_i^\beta q_i^\beta
 \le\Bigl(\sum_i a_i^{\beta/(1-\beta)}\Bigr)^{1-\beta}
 =Z_s(\sigma,P)^{1-\beta}.
\]
Equality holds at $\rho=A^{1/s}/Z_s=K_s/Z_s$:
then $q_i=a_i^{1/s}/Z_s$ and $X$ is already diagonal in $(e_i)$.
Since $\beta-1<0$, maximizing the moment minimizes the divergence, giving
\eqref{eq:sandwiched-supported}.  At $\beta=\frac12$ this directly reads
$-\log Z_1(\sigma,P)=-\log\Tr P\sigma$; no limiting argument is needed.

\emph{Case $\beta=1$.}  This is Proposition~\ref{prop:supported} at order
one, with minimizer $K_0/Z_0$.

\emph{Case $1<\beta<\infty$.}  Put $\theta=(\beta-1)/\beta\in(0,1)$,
so $s=-\theta$, and let $A=P\sigma^{-\theta}P$, which is positive
definite on $\ran P$, and $X=A^{1/2}\rho A^{1/2}$.  The operator
$\sigma^{-\theta/2}\rho\,\sigma^{-\theta/2}$ has the same nonzero eigenvalues
as $\rho^{1/2}\sigma^{-\theta}\rho^{1/2}=\rho^{1/2}A\rho^{1/2}$, and hence as
$X$, so the sandwiched moment is $\Tr X^\beta$.  Since
$\Tr A^{-1}X=\Tr\rho=1$, H\"older's inequality with the conjugate exponents
$\beta$ and $\beta/(\beta-1)=1/\theta$ gives
\[
 1\le\nrm X_\beta\,\nrm{A^{-1}}_{1/\theta}
 =\bigl(\Tr X^\beta\bigr)^{1/\beta}\bigl(\Tr A^{-1/\theta}\bigr)^{\theta}.
\]
Here $A^{-1/\theta}=(P\sigma^{-\theta}P)^{1/(-\theta)}=K_{-\theta}(\sigma,P)$,
so $\Tr X^\beta\ge Z_{-\theta}(\sigma,P)^{-(\beta-1)}$, which is the lower bound
in~\eqref{eq:sandwiched-supported}.  At $\rho=K_{-\theta}/Z_{-\theta}$ we get
$X\propto A^{1-1/\theta}$, so $X^\beta\propto A^{-1/\theta}=(A^{-1})^{1/\theta}$,
which is the equality case of H\"older's inequality.

\emph{Case $\beta=\infty$.}  The condition $\rho\le\lambda\sigma$ for a state
supported in $\ran P$ says that $\rho/\lambda$ lies below $\sigma$ with
range in $\ran P$, so
$1/\lambda\le\Tr\sigma_{/P}=Z_{-1}(\sigma,P)$, with equality at
$\rho=\sigma_{/P}/\Tr\sigma_{/P}=K_{-1}/Z_{-1}$.

In every case $K_s/Z_s$ is faithful on $\ran P$.  Thus the minimizing
states retain the prescribed supports, and for a graph with at least one
vertex the same
argument as in~\eqref{eq:projection-formula} yields
\[
 \frac1{\widetilde\vt_\beta(G)}
 =\sup_{(P_x),\,\sigma}\min_x Z_s(\sigma,P_x)
 =\frac1{\vt_{1-s}(G)}.
\]
The null graph follows from the conventions.  These identities hold in
each fixed dimension before optimizing over representations and centers;
no exchange of an order limit with an unbounded-dimension infimum is used.
\end{proof}

Thus the sandwiched radii on $[\frac12,\infty]$ give exactly the Petz family
on $[0,2]$, from the Lov\'asz number to the no-signalling simulation cost.
By \cref{thm:A}, every
$\widetilde\vt_\beta$ in this range is a point of $\Xs$, and the two
families give the same set of spectral points.  This is an identity of
optimized graph radii after reparameterization, not an identity between
the two divergences for arbitrary fixed states.  For a pure state
$\psi=\proj\phi$ and a faithful state $\sigma$ the two divergences do agree:
$\widetilde D_\beta(\psi\Vert\sigma)=D_{2-1/\beta}(\psi\Vert\sigma)$ for
every $\beta\in[\frac12,\infty]$.  Indeed, for $\beta\ne1,\infty$ the
operator $\sigma^{(1-\beta)/2\beta}\psi\,\sigma^{(1-\beta)/2\beta}$ has rank
one, so both sides equal
$\frac{\beta}{\beta-1}\log\bra\phi\sigma^{(1-\beta)/\beta}\ket\phi$; at
$\beta=1$ both are the relative entropy, and at $\beta=\infty$ both equal
$\log\bra\phi\sigma^{-1}\ket\phi$.  \Cref{prop:sandwiched} is the
corresponding statement for optimized radii, where the optimal states need
not be pure.  At $\beta=\infty$ the
identity $\widetilde\vt_\infty=\Sigma$ is Duan and Winter's conditional
min-entropy formula \cite[Theorem~2]{DuanWinter2016}, written as a
max-relative-entropy radius and optimized over compatible channels.

\subsection{The tower identity}

The key algebraic fact is that compressing in two steps is the same as
compressing in one.

\begin{lemma}[tower identity]\label[lemma]{lem:tower}
Let $P\le Q$ be projections and $s\in[-1,1]$.  Compute the outer compression
below inside $\ran Q$, where $K_s(\sigma,Q)$ is faithful.  Then
\[
 K_s\bigl(K_s(\sigma,Q),P\bigr)=K_s(\sigma,P).
\]
Consequently, if $z=Z_s(\sigma,Q)$, then
$Z_s\bigl(K_s(\sigma,Q)/z,\,P\bigr)=Z_s(\sigma,P)/z$.
\end{lemma}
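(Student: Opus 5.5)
The plan is to check the identity directly from the definition of $K_s$, splitting into the cases $s\ne0$ and $s=0$. The essential input is that $P\le Q$ means $PQ=QP=P$, so compressing to $P$ after compressing to $Q$ loses nothing.

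\emph{Case $s\ne0$.}  Put $B=Q\sigma^sQ$, viewed as an operator on $\ran Q$, where it is positive definite because $\sigma$ is. By definition $K_s(\sigma,Q)=B^{1/s}$ on $\ran Q$. The outer compression is computed inside $\ran Q$, so we raise $B^{1/s}$ to the power $s$ by functional calculus on $\ran Q$. This returns $(B^{1/s})^s=B$ exactly, because $t\mapsto t^{1/s}$ and $t\mapsto t^s$ are mutually inverse on $(0,\infty)$ and $B$ is invertible on $\ran Q$. Then $PBP=PQ\sigma^sQP=P\sigma^sP$, and taking the $1/s$ power on $\ran P$ gives $K_s(K_s(\sigma,Q),P)=K_s(\sigma,P)$.

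The only point requiring care is that every power is taken on the correct subspace. For negative $s$, the operator $B^{1/s}$ must be inverted on $\ran Q$ and not on the whole space. This is exactly why the lemma says that the outer compression is computed inside $\ran Q$, where $K_s(\sigma,Q)$ is faithful. The same remark covers unnormalized positive definite $\sigma$, since nothing above uses $\Tr\sigma=1$.

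\emph{Case $s=0$.}  Here $K_0(\sigma,Q)=\exp(Q(\ln\sigma)Q)$ on $\ran Q$, so its logarithm taken on $\ran Q$ is $Q(\ln\sigma)Q$. Compressing this to $P$ gives $PQ(\ln\sigma)QP=P(\ln\sigma)P$, and exponentiating on $\ran P$ yields $K_0(\sigma,P)$.

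For the weight statement, take traces of the identity to get $Z_s(K_s(\sigma,Q),P)=Z_s(\sigma,P)$. Then apply degree-one homogeneity, $K_s(c\tau,P)=cK_s(\tau,P)$, with $c=1/z$. For $s=0$ homogeneity holds because $\ln(c\tau)=\ln c+\ln\tau$ and $P(\ln c)P=(\ln c)P$. Here $z>0$, so the division is legitimate.

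I expect no genuine obstacle. The one thing to get right is the bookkeeping of functional calculus on $\ran Q$ versus $\ran P$, since that is what makes the identity exact rather than an inequality.
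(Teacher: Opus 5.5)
Your proposal is correct and follows essentially the same route as the paper: you take the $s$-th power (or the logarithm when $s=0$) of $K_s(\sigma,Q)$ inside $\ran Q$ to recover $Q\sigma^sQ$, compress to $P\le Q$ to obtain $P\sigma^sP$, and then deduce the weight statement from degree-one homogeneity. Your extra care about where each functional calculus is taken, and your check of homogeneity at $s=0$, only make explicit steps the paper leaves implicit.
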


\begin{proof}
For $s\ne0$, the $s$-th power of $K_s(\sigma,Q)$ inside $\ran Q$ is
$Q\sigma^sQ$, and compressing further to $P\le Q$ gives $P\sigma^sP$.  For
$s=0$ the same computation works with logarithms.  The second statement
follows from homogeneity.
\end{proof}

\section{Every order up to two is a spectral point}\label{sec:spectral}

We need four properties: normalization, additivity under $\sqcup$,
multiplicativity under $\boxt$, and monotonicity under $\le_*$.
Normalization is immediate.  On $K_1$ the divergence is nonnegative and
vanishes at $\rho=\sigma$, so $\vt_\alpha(K_1)=1$.  The product is the heart
of the matter, so we start there.

\subsection{Products}

\begin{figure}[t]
\centering
\begin{tikzpicture}[
  cell/.style={draw=blue!50!black, fill=blue!7, minimum width=1.05cm,
               minimum height=0.62cm, font=\scriptsize, inner sep=0pt},
  >=Latex]
\foreach \i in {1,2,3} {
  \foreach \j in {1,2,3,4} {
    \node[cell] (c\i\j) at (1.25*\j, -0.95*\i) {$P_{x_{\i}y_{\j}}$};
  }
  \node[font=\small, anchor=east] at (0.45,-0.95*\i) {$x_{\i}$};
  \draw[rounded corners=4pt, dashed, thick, teal!60!black]
     (0.58,-0.95*\i+0.42) rectangle (5.6,-0.95*\i-0.42);
  \node[font=\small, teal!50!black, anchor=west] at (5.65,-0.95*\i) {$S_{x_{\i}}$};
}
\foreach \j in {1,2,3,4} \node[font=\small] at (1.25*\j,-0.25) {$y_{\j}$};
\node[font=\scriptsize, align=left, anchor=north west, text width=6.8cm]
  at (0.1,-3.45)
  {Distinguishable rows are orthogonal: $(S_x)$ represents $G$.\\
   Within one row, $(P_{xy})_y$ represents $H$.};
\begin{scope}[xshift=8.3cm, yshift=-0.6cm]
  \node (s) at (0,0) {$\sigma$};
  \node (m) at (3.7,0) {$K_s(\sigma,S_x)$};
  \node (p) at (3.7,-2.3) {$K_s(\sigma,P_{xy})$};
  \draw[->, thick] (s) -- node[above, font=\scriptsize] {compress to $S_x$} (m);
  \draw[->, thick] (m) -- node[right, font=\scriptsize, align=left]
       {compress to\\ $P_{xy}\le S_x$} (p);
  \draw[->, thick] (s) -- node[below left, font=\scriptsize, align=right]
       {compress to $P_{xy}$\\ directly} (p);
  \node[font=\scriptsize, align=center, text width=5.4cm] at (2.0,-3.4)
       {Both routes agree (\cref{lem:tower}).  With
        $\sigma_x=K_s(\sigma,S_x)/z_x$ this reads
        $Z_s(\sigma_x,P_{xy})=Z_s(\sigma,P_{xy})/z_x$.};
\end{scope}
\end{tikzpicture}
\caption{The proof of \cref{thm:product}.  Left: a representation of
$G\boxt H$ that need not be a tensor product, organized by rows.  Right: the
tower identity lets each row be analysed with its own reduced center, so the
weight of a cell factors into a $G$-part $z_x$ and an $H$-part.}
\label{fig:product}
\end{figure}

Here is the idea (\cref{fig:product}).  Take any representation $(P_{xy})$ of
$G\boxt H$ and group its projections by the first coordinate.  The ranges in
row $x$ span a subspace, with projection $S_x$.  If $x$ and $x'$ are
distinguishable in $G$, then every cell of row $x$ is distinguishable from
every cell of row $x'$, so $S_x\perp S_{x'}$ and the rows represent $G$.
Inside one row, the cells represent $H$.  All that is missing is a way to
split the weight of a cell into a row part and a within-row part.  The tower
identity does exactly this.

\begin{theorem}[products]\label{thm:product}
For $0\le\alpha\le2$ and all graphs $G,H$,
\[
 \vt_\alpha(G\boxt H)=\vt_\alpha(G)\,\vt_\alpha(H).
\]
\end{theorem}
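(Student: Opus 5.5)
The plan is to prove the two inequalities separately, working throughout with the projection formula~\eqref{eq:projection-formula}, $1/\vt_\alpha=\sup\min_x Z_s(\sigma,P_x)$ with $s=1-\alpha\in[-1,1]$. If $G$ or $H$ is the null graph, so is $G\boxt H$, and both sides vanish by convention. So I assume both graphs have vertices.

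\emph{Easy direction, $\vt_\alpha(G\boxt H)\le\vt_\alpha(G)\vt_\alpha(H)$.} Take near-optimal pairs $((P_x),\sigma)$ for $G$ and $((Q_y),\tau)$ for $H$, and form $(P_x\otimes Q_y)$ with the faithful center $\sigma\otimes\tau$.
\begin{itemize}
\item This is a representation of $G\boxt H$. If $(x,y)\ne(x',y')$ are non-adjacent, then either $x\ne x'$ with $x\not\sim x'$, or $y\ne y'$ with $y\not\sim y'$. In either case one tensor factor of the product of the two projections vanishes.
\item The weights factor. For $s\ne0$ we have $(P\otimes Q)(\sigma\otimes\tau)^s(P\otimes Q)=P\sigma^sP\otimes Q\tau^sQ$, and taking the $1/s$ power respects tensor products, so $Z_s(\sigma\otimes\tau,P\otimes Q)=Z_s(\sigma,P)\,Z_s(\tau,Q)$.
\item For $s=0$, use $\ln(\sigma\otimes\tau)=\ln\sigma\otimes I+I\otimes\ln\tau$, after which the compressed exponentials factor in the same way.
\end{itemize}
Then $\min_{x,y}$ of the product weights equals $\min_x Z_s(\sigma,P_x)\cdot\min_y Z_s(\tau,Q_y)$. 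Taking suprema gives $1/\vt_\alpha(G\boxt H)\ge1/(\vt_\alpha(G)\vt_\alpha(H))$.

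\emph{Hard direction, following \cref{fig:product}.} Fix an arbitrary representation $(P_{xy})$ of $G\boxt H$ on $\cH$ and a faithful state $\sigma$. Let $S_x$ be the projection onto the span of $\bigcup_y\ran P_{xy}$. Then $S_x\ne0$ and $P_{xy}\le S_x$.
\begin{itemize}
\item \emph{Rows represent $G$.} If $x\ne x'$ are non-adjacent in $G$, then every pair $(x,y),(x',y')$ is distinct and non-adjacent in $G\boxt H$. Hence all cell ranges in row $x$ are orthogonal to all cell ranges in row $x'$, and so $S_xS_{x'}=0$.
\item \emph{Each row represents $H$ inside $\ran S_x$.} If $y\ne y'$ are non-adjacent in $H$, then $(x,y)$ and $(x,y')$ are non-adjacent, so $P_{xy}P_{xy'}=0$.
\item \emph{Row centers.} Put $z_x=Z_s(\sigma,S_x)$ and $\sigma_x=K_s(\sigma,S_x)/z_x$. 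This is a state on $\ran S_x$, and it is faithful there. For $s\ne0$, $S_x\sigma^sS_x$ is positive definite on $\ran S_x$ because $\sigma$ is faithful. For $s=0$, $\sigma_x$ is an exponential.
\end{itemize}
By the tower identity (\cref{lem:tower}), $Z_s(\sigma,P_{xy})=z_x\,Z_s(\sigma_x,P_{xy})$.

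Now apply~\eqref{eq:projection-formula} twice. It allows any finite dimension, so the space $\ran S_x$, whose dimension depends on $x$, is admissible. First, $\min_y Z_s(\sigma_x,P_{xy})\le1/\vt_\alpha(H)$ for every $x$. Second, $\min_x z_x\le1/\vt_\alpha(G)$. Combining these,
\[
 \min_{x,y}Z_s(\sigma,P_{xy})=\min_x\Bigl(z_x\min_yZ_s(\sigma_x,P_{xy})\Bigr)
 \le\frac{\min_x z_x}{\vt_\alpha(H)}\le\frac1{\vt_\alpha(G)\vt_\alpha(H)}.
\]
Taking the supremum over $((P_{xy}),\sigma)$ gives $\vt_\alpha(G\boxt H)\ge\vt_\alpha(G)\vt_\alpha(H)$.

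\emph{Where the difficulty lies.} The conceptual obstacle is that the cells need not factor as tensor products. The tower identity removes it, since it splits each cell weight exactly into a row weight $z_x$ times a within-row weight. What remains to check carefully is bookkeeping at the boundaries of the formalism:
\begin{itemize}
\item the row centers $\sigma_x$ must be faithful on the smaller spaces $\ran S_x$, which the argument above supplies;
\item the endpoint $s=0$ must be handled in both directions via the logarithmic form of $K_0$;
\item at $s=-1$ (order two), the inverses and powers must be taken on the relevant ranges, which is legitimate because each $S_x\sigma^{-1}S_x$ is invertible on $\ran S_x$.
\end{itemize}
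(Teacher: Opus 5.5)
Your proof is correct and follows the paper's argument. The hard direction is exactly the paper's: the row projections $S_x$ represent $G$, each row represents $H$ with the reduced center $K_s(\sigma,S_x)/z_x$, the tower identity splits each cell weight, and the projection formula is applied twice. The only difference is cosmetic and lies in the easy direction. The paper tensors witness states and invokes additivity of $D_\alpha$, whereas you tensor representations and centers and check directly that $Z_s$ factors; the two are equivalent via \cref{prop:supported}.
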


\begin{proof}
If either graph is the null graph both sides vanish, so assume that both
have vertices and put $s=1-\alpha$.

For ``$\le$'', take witnesses $(\rho_x),\sigma$ for $G$ and $(\tau_y),\omega$
for $H$.  If $(x,y)$ and $(x',y')$ are distinct and non-adjacent in $G\boxt H$,
then $x,x'$ are distinct and non-adjacent in $G$ or $y,y'$ are in $H$.  In
either case $(\rho_x\otimes\tau_y)(\rho_{x'}\otimes\tau_{y'})=0$.  Since
$D_\alpha$ is additive under tensor products, radii add.

For ``$\ge$'', let $(P_{xy})$ represent $G\boxt H$, let $\sigma$ be faithful,
and suppose $Z_s(\sigma,P_{xy})\ge c>0$ for every pair $(x,y)$.  Let $S_x$ be
the projection onto $\sum_y\ran P_{xy}$, and put $z_x=Z_s(\sigma,S_x)$.

\emph{The rows represent $G$.}  If $x\ne x'$ are non-adjacent in $G$, every
pair $(x,y),(x',y')$ is distinct and non-adjacent in $G\boxt H$, so
$P_{xy}P_{x'y'}=0$ and hence $S_xS_{x'}=0$.

\emph{Each row represents $H$.}  Fix $x$.  If $y\ne y'$ are non-adjacent in
$H$, then $(x,y)$ and $(x,y')$ are non-adjacent, so $(P_{xy})_y$ represents
$H$ on $\ran S_x$.  Use the faithful state $\sigma_x=K_s(\sigma,S_x)/z_x$ on
$\ran S_x$ as center.  By \cref{lem:tower},
$Z_s(\sigma_x,P_{xy})=Z_s(\sigma,P_{xy})/z_x\ge c/z_x$ for every $y$.  The
projection formula~\eqref{eq:projection-formula} for $H$ gives
$c/z_x\le1/\vt_\alpha(H)$, that is, $z_x\ge c\,\vt_\alpha(H)$.

\emph{Combine.}  Apply~\eqref{eq:projection-formula} for $G$ to $(S_x)$ and
$\sigma$: $c\,\vt_\alpha(H)\le\min_xz_x\le1/\vt_\alpha(G)$.  So
$c\le1/(\vt_\alpha(G)\vt_\alpha(H))$ for every admissible $c$, and taking the
supremum gives
\[
 \frac1{\vt_\alpha(G\boxt H)}\le\frac1{\vt_\alpha(G)\,\vt_\alpha(H)} .
 \qedhere
\]
\end{proof}

\begin{remark}\label[remark]{rem:single-letter}
At $\alpha=0$ and $\alpha=1$, \cref{thm:product} recovers the multiplicativity
of $\vt$ \cite{Lovasz1979} and Duan and Winter's additivity of $\Cmin$
\cite[Lemma~23]{DuanWinter2016}.  At $\alpha=2$ it says
$\Sigma(G\boxt H)=\Sigma(G)\Sigma(H)$.
Hence $\SNS(G)=\log\Sigma(G)$: regularizing the graph-level value
$\log\Sigma(G)$, an infimum of one-shot SDP costs, does not change it.  Their
fixed-cq-graph multiplicativity \cite[Proposition~18]{DuanWinter2016} gives
this identity only for the infimum restricted to product representations
$P_{xy}=P_x\otimes Q_y$.  \Cref{thm:product} removes that restriction, but it
does not recover their statement for fixed cq-graphs.
\end{remark}

\subsection{Disjoint unions}

Two small lemmas do the work.  The first rewrites a divergence after
compressing the center.  The second says that the weights of orthogonal blocks
never add up to more than one.

\begin{lemma}[compression identity]\label[lemma]{lem:compression}
Let $Q\ne0$ be a projection, $s=1-\alpha\in[-1,1]$, and
$\sigma_Q=K_s(\sigma,Q)/Z_s(\sigma,Q)$.  Every state $\rho$ supported in
$\ran Q$ satisfies
\[
 D_\alpha(\rho\Vert\sigma)=D_\alpha(\rho\Vert\sigma_Q)-\log Z_s(\sigma,Q).
\]
\end{lemma}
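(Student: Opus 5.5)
The plan is a direct computation, split by the order as in the proof of \cref{prop:supported}. Put $z=Z_s(\sigma,Q)$, $K=K_s(\sigma,Q)$ and $\sigma_Q=K/z$, all on $\ran Q$. The single fact to use is the one already behind \cref{lem:tower}: for $s\ne0$ we have $\sigma_Q^{\,s}=z^{-s}\,Q\sigma^sQ$ on $\ran Q$, and for $s=0$ we have $\ln\sigma_Q=Q(\ln\sigma)Q-(\ln z)Q$ on $\ran Q$. Because $\rho$ is supported in $\ran Q$, only the compression of the center to $\ran Q$ can enter the divergence, and this compression is fixed by $K$.

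\emph{Case $\alpha\ne0,1$.} Here $s\ne0$ and $\rho^\alpha=Q\rho^\alpha Q$. So
\[
\Tr\rho^\alpha\sigma^s=\Tr\rho^\alpha Q\sigma^sQ=z^{s}\,\Tr\rho^\alpha\sigma_Q^{\,s}.
\]
Take $\log$ and divide by $\alpha-1=-s$. The factor $z^s$ then contributes $\frac{s}{\alpha-1}\log z=-\log z$, which is the claimed identity.

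\emph{Case $\alpha=1$.} Here $s=0$, and this is already the computation in the proof of \cref{prop:supported}: $\Tr\rho\ln\sigma=\Tr\rho\,Q(\ln\sigma)Q=\Tr\rho\ln\sigma_Q+\ln z$. The entropy term $\Tr\rho\ln\rho$ is the same on both sides, so subtracting gives $D(\rho\Vert\sigma)=D(\rho\Vert\sigma_Q)-\log z$, after converting from natural logarithms to base two.

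\emph{Case $\alpha=0$.} Here $s=1$ and $z=\Tr Q\sigma$. Since $\Pi_\rho\le Q$,
\[
\Tr\Pi_\rho\sigma=\Tr\Pi_\rho Q\sigma Q=z\,\Tr\Pi_\rho\sigma_Q,
\]
and taking $-\log$ gives the identity.

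I expect no real obstacle. The only point needing care is the convention that $D_\alpha(\rho\Vert\sigma_Q)$ is evaluated inside $\ran Q$, where $\sigma_Q$ is faithful. With that convention, powers and logarithms of $\sigma_Q$ are taken on $\ran Q$ as in the setup, so each step above is legitimate. The identity then holds for every $\alpha\in[0,2]$, including the endpoint $s=-1$, where $K_{-1}$ is the shorted operator.
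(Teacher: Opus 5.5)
Your proposal is correct and follows the paper's proof. In each case you substitute $Q\sigma^sQ=Z_s^s\,\sigma_Q^s$ (with its logarithmic analogue at $s=0$ and $Q\sigma Q=Z_1\sigma_Q$ at $s=1$) into the Petz moment and read off the shift $-\log Z_s$; your remarks on $\rho^\alpha=Q\rho^\alpha Q$ and on evaluating $\sigma_Q$ inside $\ran Q$ just spell out what the paper leaves implicit.
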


\begin{proof}
For $\alpha\notin\{0,1\}$, substitute $Q\sigma^sQ=Z_s^s\sigma_Q^s$ into the
Petz moment.  For $\alpha=1$ use $Q(\ln\sigma)Q=\ln\sigma_Q+(\ln Z_0)Q$ on
$\ran Q$, and for $\alpha=0$ use $Q\sigma Q=Z_1\sigma_Q$.
\end{proof}

\begin{lemma}[orthogonal blocks]\label[lemma]{lem:orthogonal-sum}
If $Q_1,\dots,Q_m$ are mutually orthogonal nonzero projections, $\sigma$ is a faithful
state and $s\in[-1,1]$, then $\sum_jZ_s(\sigma,Q_j)\le1$.
\end{lemma}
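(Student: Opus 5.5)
The plan is to reduce everything to the plain weight $Z_1$. I will show that for every nonzero projection $Q$, every faithful state $\sigma$ and every $s\in[-1,1]$,
\[
 Z_s(\sigma,Q)\ \le\ Z_1(\sigma,Q)=\Tr Q\sigma .
\]
Once this is in hand the lemma is immediate. The $Q_j$ are mutually orthogonal, so $\sum_jQ_j\le I$, and therefore $\sum_jZ_s(\sigma,Q_j)\le\sum_j\Tr Q_j\sigma\le\Tr\sigma=1$.

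The single-block inequality comes from the operator Jensen inequality for compressions (Hansen--Pedersen), applied on $\ran Q$ with the compression $X\mapsto QXQ$. Afterwards I pass to traces using the fact that $A\le B$ implies $\Tr f(A)\le\Tr f(B)$ for any increasing scalar function $f$. This trace monotonicity holds by Weyl's eigenvalue monotonicity, even when $f$ is not operator monotone. I split by the sign of $s$.

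\emph{Case $0<s<1$.} The map $t\mapsto t^s$ is operator concave, so $Q\sigma^sQ\le(Q\sigma Q)^s$ on $\ran Q$. Since $t\mapsto t^{1/s}$ is increasing, taking traces gives $Z_s=\Tr(Q\sigma^sQ)^{1/s}\le\Tr Q\sigma Q$. The case $s=1$ is an equality.

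\emph{Case $s=0$.} The logarithm is operator concave, so $Q(\ln\sigma)Q\le\ln(Q\sigma Q)$ on $\ran Q$. Since $\exp$ is increasing, $Z_0=\Tr\exp(Q(\ln\sigma)Q)\le\Tr Q\sigma Q$.

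\emph{Case $-1\le s<0$.} Write $s=-\theta$ with $\theta\in(0,1]$. The map $t\mapsto t^{-\theta}$ is operator convex on $(0,\infty)$, so $Q\sigma^{-\theta}Q\ge(Q\sigma Q)^{-\theta}$ on $\ran Q$. Both sides are positive definite there, because $\sigma$ is faithful. Now $t\mapsto t^{-1/\theta}$ is decreasing, so trace monotonicity reverses the inequality and gives $Z_s=\Tr(Q\sigma^{-\theta}Q)^{-1/\theta}\le\Tr\bigl((Q\sigma Q)^{-\theta}\bigr)^{-1/\theta}=\Tr Q\sigma Q$. At $s=-1$ this is just the familiar fact that the shorted operator satisfies $\Tr\sigma_{/Q}\le\Tr Q\sigma Q$.

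The main obstacle is justifying the Jensen step for a compression to a subspace rather than to a unital map. This needs some care. The Hansen--Pedersen operator Jensen inequality $f(Q X Q)\ge Qf(X)Q$ for operator concave $f$ (reversed for convex $f$) is stated for $f$ defined on an interval containing $0$ with $f(0)\ge0$, or alternatively via the unital map $X\mapsto QXQ$ viewed as a map into $B(\ran Q)$. I will use the second form: $X\mapsto QXQ|_{\ran Q}$ is a unital positive map $B(\cH)\to B(\ran Q)$. The operator Jensen (Davis--Choi) inequality for unital positive maps then covers all three cases, with no condition on $f(0)$. This matters for $\ln$ and for $t^{-\theta}$, whose domain excludes $0$; here all arguments are positive definite, so that is harmless.

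As a sanity check, at $s=1$ the lemma reduces to $\sum_j\Tr Q_j\sigma\le1$. The cases $s=1/2$ and $s=0$ are also consistent with \cref{prop:supported}, since a smaller $Z_s$ corresponds to a larger minimal divergence.
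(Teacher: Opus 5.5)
Your proof is correct, and its skeleton matches the paper's. The paper's displayed chain also proves the single-block bound $Z_s(\sigma,Q_j)\le\Tr Q_j\sigma$ and then sums over orthonormal vectors. The difference lies in the key inequality.

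The paper uses only scalar Jensen. With $A=\sigma^s$ and $f(t)=t^{1/s}$ (or $A=\ln\sigma$ and $f=\exp$), $f$ is convex and $f(A)=\sigma$. If $(e_k)$ diagonalizes $QAQ$ on $\ran Q$, its eigenvalues are $\langle e_k,Ae_k\rangle$. Jensen for the spectral measure of $A$ in $e_k$ then gives $f(\langle e_k,Ae_k\rangle)\le\langle e_k,\sigma e_k\rangle$.

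You instead use the Davis--Choi inequality for the unital compression $X\mapsto QXQ|_{\ran Q}$ together with Weyl monotonicity. That needs operator concavity of $t^s$ and $\ln$ and operator convexity of $t^{-\theta}$, which is exactly the L\"owner--Heinz range $s\in[-1,1]$.

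Your route yields a stronger intermediate statement: $K_s(\sigma,Q)$ is dominated by $Q\sigma Q$ eigenvalue by eigenvalue, and at $s=-1$ even $\sigma_{/Q}\le Q\sigma Q$ as operators. But it is tied to that range: for $s<-1$, $t^{s}$ is not operator convex, and your Jensen step fails. The paper's argument uses only ordinary convexity of $t^{1/s}$, which holds for every $s<0$. So it extends verbatim to $s<-1$, as \cref{rem:above-two} asserts for this lemma.
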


The proof is a one-line application of Jensen's inequality
(\cref{app:orthogonal-sum}).

\begin{theorem}[disjoint unions]\label{thm:union}
For $0\le\alpha\le2$, $\vt_\alpha(G\sqcup H)=\vt_\alpha(G)+\vt_\alpha(H)$.
\end{theorem}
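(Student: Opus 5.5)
The plan is to work with the projection formula~\eqref{eq:projection-formula} and to prove the two inequalities separately.  Throughout, write $a=1/\vt_\alpha(G)$, $b=1/\vt_\alpha(H)$ and $s=1-\alpha$.  If one of the graphs is the null graph, the identity holds by the convention $\vt_\alpha(K_0)=0$, so we may assume both have vertices.

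For ``$\ge$'' we glue witnesses.  Take representations $(P_x)$ of $G$ on $\cH_1$ and $(Q_y)$ of $H$ on $\cH_2$, with faithful centers $\sigma_1,\sigma_2$ such that $Z_s(\sigma_1,P_x)\ge a-\eps$ and $Z_s(\sigma_2,Q_y)\ge b-\eps$.  On $\cH_1\oplus\cH_2$ the family $(P_x\oplus0)\cup(0\oplus Q_y)$ represents $G\sqcup H$: vertices in different components are non-adjacent, and their projections are orthogonal.  Use the center $\sigma=\lambda\sigma_1\oplus(1-\lambda)\sigma_2$.  Since $\sigma$ is block diagonal, compressing it to a projection inside one block only sees that block.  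By homogeneity,
\[
 Z_s(\sigma,P_x\oplus0)=\lambda Z_s(\sigma_1,P_x),\qquad Z_s(\sigma,0\oplus Q_y)=(1-\lambda)Z_s(\sigma_2,Q_y).
\]
Choose $\lambda=a/(a+b)$, so that $\lambda a=(1-\lambda)b=ab/(a+b)$.  Then every weight is at least $ab/(a+b)-O(\eps)$.  Letting $\eps\to0$ gives $1/\vt_\alpha(G\sqcup H)\ge ab/(a+b)$, which is the same as $\vt_\alpha(G\sqcup H)\le\vt_\alpha(G)+\vt_\alpha(H)$.

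For ``$\le$'' we reduce to the two components.  Take any representation $(P_v)$ of $G\sqcup H$ and a faithful $\sigma$ with $Z_s(\sigma,P_v)\ge c$ for all $v$.  Let $Q_1$ be the projection onto the span of the ranges $P_x$ with $x\in V(G)$, and $Q_2$ the one for $H$.  Every vertex of $G$ is non-adjacent to every vertex of $H$, so $Q_1\perp Q_2$.  Put $z_j=Z_s(\sigma,Q_j)$; by \cref{lem:orthogonal-sum}, $z_1+z_2\le1$.  Inside $\ran Q_1$ take the faithful state $\sigma_{Q_1}=K_s(\sigma,Q_1)/z_1$ as center for the representation $(P_x)_{x\in V(G)}$.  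By \cref{lem:tower}, $Z_s(\sigma_{Q_1},P_x)=Z_s(\sigma,P_x)/z_1\ge c/z_1$, and the projection formula for $G$ gives $c/z_1\le a$, that is, $z_1\ge c/a$.  In the same way $z_2\ge c/b$.  Hence $c(1/a+1/b)\le z_1+z_2\le1$, so $c\le ab/(a+b)$.  Taking the supremum over $c$ gives $1/\vt_\alpha(G\sqcup H)\le ab/(a+b)$, which is the reverse inequality.

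The main obstacle is the ``$\le$'' direction, and within it the fact that the weights of the orthogonal blocks sum to at most one.  This is exactly \cref{lem:orthogonal-sum}, which we are allowed to assume; without it the representation need not split $\sigma$ nicely.  The rest is the same tower argument as in \cref{thm:product}.  Two points need a short check.  First, $K_s(\sigma,Q_j)$ is faithful on $\ran Q_j$, which is needed to use it as a center; this holds because $\sigma>0$.  Second, in the ``$\ge$'' direction the compression of a block-diagonal center to a block projection equals the compression of that block, including the case $s=0$, where this follows from $\ln(\lambda\sigma_1\oplus\mu\sigma_2)=\ln(\lambda\sigma_1)\oplus\ln(\mu\sigma_2)$.
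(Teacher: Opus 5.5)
Your argument is essentially the paper's proof. Both proofs glue near-optimal witnesses on orthogonal summands with a weighted direct-sum center. For the converse, both split along the orthogonal projections onto the two components, renormalize the compressed center on each, and add using \cref{lem:orthogonal-sum}. The only difference is bookkeeping: you use the projection formula with \cref{lem:tower}, while the paper works with states and the compression identity \cref{lem:compression}, which is the same fact.

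There is one arithmetic slip to fix. With your convention $a=1/\vt_\alpha(G)$ and $b=1/\vt_\alpha(H)$, the balancing weight is $\lambda=b/(a+b)$, equivalently $\vt_\alpha(G)/(\vt_\alpha(G)+\vt_\alpha(H))$, which is the paper's choice. Your $\lambda=a/(a+b)$ gives $\lambda a=a^2/(a+b)\ne b^2/(a+b)=(1-\lambda)b$ unless $a=b$. With the corrected $\lambda$, every weight is at least $ab/(a+b)-O(\eps)$ and the rest of your proof goes through unchanged.
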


\begin{proof}
Assume that neither graph is the null graph.  For ``$\le$'', put witnesses
for $G$ and $H$
with radii at most $\log a$ and $\log b$ on orthogonal summands, and use the
center $(a\sigma_G\oplus b\sigma_H)/(a+b)$.  Compressing this center to the
first summand gives weight $a/(a+b)$ and reduced center $\sigma_G$, so by
\cref{lem:compression} every $G$-state has radius at most
$\log a+\log\frac{a+b}{a}=\log(a+b)$.  The same holds for $H$.

For ``$\ge$'', take a compatible family of $G\sqcup H$ with center $\sigma$
and radius at most $R$.  Let $Q_G$ and $Q_H$ project onto the spans of the
supports of the two subfamilies.  Cross pairs are non-adjacent, so
$Q_GQ_H=0$.  By \cref{lem:compression}, the $G$-states have radius at most
$R+\log Z_s(\sigma,Q_G)$ around $\sigma_{Q_G}$, so
$\vt_\alpha(G)\le2^RZ_s(\sigma,Q_G)$.  Likewise
$\vt_\alpha(H)\le2^RZ_s(\sigma,Q_H)$.  Add the two bounds and apply
\cref{lem:orthogonal-sum}.
\end{proof}

In particular $\vt_\alpha(\overline{K_n})=n$.

\subsection{Entanglement-assisted monotonicity}

The proof follows the protocol described after~\eqref{eq:ea-orth}.  Bob ends
up holding his share of the entanglement together with the channel output.
If $\tau_h$ is the output for $h$, his joint state given $g$ is
$\sum_h\omega_g^h\otimes\tau_h$.  These joint states form a new channel for
$G$, and the only question is whether its radius can be larger than the old
one.  It cannot.  The joint state is a flagged mixture with the flag
discarded, and discarding the flag is data processing.

\begin{lemma}[flagged mixtures]\label[lemma]{lem:flagged}
Let $\omega_h\ge0$ with $\sum_h\omega_h=\omega$ and $\Tr\omega=1$, let
$\tau_h$ be states and let $\sigma$ be faithful.  If
$D_\alpha(\tau_h\Vert\sigma)\le R$ whenever $\omega_h\ne0$, then for
$0\le\alpha\le2$
\[
 D_\alpha\Bigl(\sum_h\omega_h\otimes\tau_h\ \Big\Vert\ \omega\otimes\sigma\Bigr)\le R .
\]
\end{lemma}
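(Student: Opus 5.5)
The plan is to realize the mixture as the image of a block-diagonal, classically flagged state under a partial trace, and then to use data processing. On $\mathbb C^{V(H)}\otimes\cH_\omega\otimes\cH$ I would introduce
\[
 \Omega=\sum_h\proj h\otimes\omega_h\otimes\tau_h,
 \qquad
 \Gamma=\sum_h\proj h\otimes\omega_h\otimes\sigma .
\]
Tracing out the flag register sends $\Omega$ to $\sum_h\omega_h\otimes\tau_h$ and $\Gamma$ to $\omega\otimes\sigma$, because $\sum_h\omega_h=\omega$. Both operators have unit trace, since $\Tr\Gamma=\sum_h\Tr\omega_h=1$. Moreover $\supp\Omega\subseteq\supp\Gamma$, because $\sigma$ is faithful. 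Data processing for $D_\alpha$, $\alpha\in[0,2]$, under the partial trace then gives $D_\alpha(\sum_h\omega_h\otimes\tau_h\Vert\omega\otimes\sigma)\le D_\alpha(\Omega\Vert\Gamma)$. It therefore suffices to show $D_\alpha(\Omega\Vert\Gamma)\le R$.

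Both operators are block diagonal in the flag, and terms with $\omega_h=0$ vanish from both, so the divergence splits blockwise. Put $p_h=\Tr\omega_h$ and $Q_h=\Tr\tau_h^\alpha\sigma^{1-\alpha}$. For $\alpha\notin\{0,1\}$, with powers of $\omega_h$ taken on its support as in the stated convention, $\Tr\omega_h^\alpha\omega_h^{1-\alpha}=p_h$. The Petz moment is then
\[
 \Tr\Omega^\alpha\Gamma^{1-\alpha}=\sum_hp_h\,Q_h,
\]
a convex combination of the single-block moments. The hypothesis $D_\alpha(\tau_h\Vert\sigma)\le R$ reads $Q_h\ge2^{(\alpha-1)R}$ when $\alpha<1$ and $Q_h\le2^{(\alpha-1)R}$ when $\alpha>1$. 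In either case the average obeys the same bound, and dividing $\log$ by $\alpha-1$ gives $D_\alpha(\Omega\Vert\Gamma)\le R$. This is just quasi-convexity of $D_\alpha$ on block-diagonal (flagged) pairs.

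The two endpoint orders are handled directly. At $\alpha=1$ the relative entropy of block-diagonal operators gives
\[
 D(\Omega\Vert\Gamma)=\sum_hp_h\bigl[D(\omega_h/p_h\Vert\omega_h/p_h)+D(\tau_h\Vert\sigma)\bigr]=\sum_hp_hD(\tau_h\Vert\sigma)\le R .
\]
At $\alpha=0$ the support projection of $\Omega$ is $\sum_h\proj h\otimes\Pi_{\omega_h}\otimes\Pi_{\tau_h}$. Hence $\Tr\Pi_\Omega\Gamma=\sum_hp_h\Tr\Pi_{\tau_h}\sigma\ge2^{-R}$, which is the claim.

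The main point needing care is the data-processing step with a non-faithful reference: $\omega$, and hence $\omega\otimes\sigma$ and $\Gamma$, may be singular. I would handle this by restricting every operator to $\supp\omega$ (respectively $\supp\Gamma$). This is legitimate because each $\omega_h\le\omega$ is supported there and the partial trace maps $\supp\Gamma$ into $\supp(\omega\otimes\sigma)$. Alternatively, one can perturb $\omega_h\mapsto\omega_h+\eps\,\omega'/m$ with $\omega'$ faithful and let $\eps\to0$, using continuity of the blockwise formulas above. Either route reduces to the standard data processing inequality for faithful references, valid on $[0,2]$.
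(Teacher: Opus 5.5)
Your proposal is correct and follows the paper's proof step for step. It uses the same flagged pair $\Omega$ and $\Gamma$, splits the Petz moment blockwise into the $\Tr\omega_h$-weighted average of the $\Tr\tau_h^\alpha\sigma^{1-\alpha}$, treats $\alpha=0$ and $\alpha=1$ directly, and finishes with data processing under the partial trace over the flag. Your extra care about the possibly singular reference $\omega\otimes\sigma$ is sound but goes beyond the lemma's proof, which simply invokes data processing with the support conventions; the paper restricts to $\supp\omega$ only when it applies the lemma in \cref{thm:ea-mono}.
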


\begin{proof}
Add a classical flag and compare
$\Omega=\sum_h\proj h\otimes\omega_h\otimes\tau_h$ with
$\Omega'=\sum_h\proj h\otimes\omega_h\otimes\sigma$; the support of $\Omega$
lies in that of $\Omega'$.  The Petz moment splits over the flag, and the
$\omega_h$ factors cancel:
\[
 \Tr\Omega^\alpha\Omega'^{1-\alpha}=\sum_hw_h\Tr\tau_h^\alpha\sigma^{1-\alpha},
 \qquad w_h=\Tr\omega_h,\ \ \textstyle\sum_hw_h=1.
\]
For $0<\alpha<1$ each term satisfies
$\Tr\tau_h^\alpha\sigma^{1-\alpha}\ge2^{(\alpha-1)R}$, so the average does too
and $D_\alpha(\Omega\Vert\Omega')\le R$.  For $1<\alpha\le2$ all inequalities
reverse and the conclusion is the same.  At $\alpha=1$ the divergence is the
$w$-average of the $D(\tau_h\Vert\sigma)$, and at $\alpha=0$ we get
$\Tr\Pi_\Omega\Omega'=\sum_hw_h\Tr\Pi_{\tau_h}\sigma\ge2^{-R}$.  Tracing out
the flag maps $\Omega\mapsto\sum_h\omega_h\otimes\tau_h$ and
$\Omega'\mapsto\omega\otimes\sigma$, and Petz divergences of order
$\alpha\in[0,2]$ do not increase under partial traces
\cite{Petz1986,Tomamichel2016}.
\end{proof}

\begin{theorem}[entanglement-assisted monotonicity]\label{thm:ea-mono}
If $G\le_*H$, then $\vt_\alpha(G)\le\vt_\alpha(H)$ for every $\alpha\in[0,2]$.
\end{theorem}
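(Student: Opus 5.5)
We follow the protocol described after~\eqref{eq:ea-orth}.  Fix $\alpha\in[0,2]$, a witness for $H$ and an $\eps>0$: a compatible family $(\tau_h)_{h\in V(H)}$ on a space $\cH$ and a faithful center $\sigma$ with $\max_hD_\alpha(\tau_h\Vert\sigma)\le R:=\log\vt_\alpha(H)+\eps$.  Take the state $\omega$ and operators $\omega_g^h$ on a space $\cH'$ given by $G\le_*H$.  For each $g\in V(G)$ put
\[
 \rho_g=\sum_h\omega_g^h\otimes\tau_h .
\]
This is a state on $\cH'\otimes\cH$, because $\Tr\rho_g=\sum_h\Tr\omega_g^h=\Tr\omega=1$ by~\eqref{eq:ea-sum}.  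The plan is to show two things.  First, $(\rho_g)$ is compatible with $G$.  Second, its radius about the center $\omega\otimes\sigma$ is at most $R$.  Together these give $\log\vt_\alpha(G)\le R$, and letting $\eps\to0$ proves the theorem.

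\emph{The radius.}  For each fixed $g$, apply \cref{lem:flagged} with $\omega_h=\omega_g^h$.  Its hypothesis $\sum_h\omega_h=\omega$ is exactly~\eqref{eq:ea-sum}.  The lemma then gives $D_\alpha(\rho_g\Vert\omega\otimes\sigma)\le R$.

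The center $\omega\otimes\sigma$ need not be faithful, since $\omega$ may be singular.  The lemma still applies with the support convention of \cref{sec:setup}: each $\omega_g^h\le\omega$, so $\supp\rho_g\subseteq\supp(\omega\otimes\sigma)$.  The simplest fix is to restrict $\cH'$ to $\supp\omega$ from the start.  Every $\omega_g^h$ lives there, because $0\le\omega_g^h\le\omega$, and then $\omega$ is faithful on this space.  Alternatively, one can use the remark after \cref{def:theta-alpha} that singular centers may be allowed.

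\emph{Compatibility.}  This is the step I expect to need care.  Take $g\ne g'$ non-adjacent in $G$.  We need $\rho_g\rho_{g'}=0$, which amounts to $(\omega_g^h\otimes\tau_h)(\omega_{g'}^{h'}\otimes\tau_{h'})=0$ for every pair $h,h'$.  Split into two cases.
\begin{itemize}[leftmargin=1.5em]
\item If $h\simeq h'$, then $\omega_g^h\omega_{g'}^{h'}=0$ by~\eqref{eq:ea-orth}.
\item If $h\not\simeq h'$, then $h\ne h'$ are non-adjacent in $H$, so $\tau_h\tau_{h'}=0$ by compatibility of $(\tau_h)$.
\end{itemize}
In both cases the tensor product of the two factors vanishes.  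Hence the supports of $\rho_g$ and $\rho_{g'}$ are orthogonal.

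For positive operators, a vanishing product is equivalent to orthogonal supports, so no square-root subtleties arise.  The supports of all the $\rho_g$ are nonzero because each $\rho_g$ is a state.  Thus $(\rho_g)$ is compatible with $G$, and the infimum in~\eqref{eq:def-parameter} gives the bound.

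If $G$ is the null graph, $\vt_\alpha(G)=0$ and there is nothing to prove.  If $H$ is the null graph while $G$ is not, then $G\le_*H$ is impossible: \eqref{eq:ea-sum} would force $\omega=0$, contradicting $\Tr\omega=1$.
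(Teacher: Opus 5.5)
Your proposal is correct and follows the paper's proof essentially step for step. You use the same joint states $\sum_h\omega_g^h\otimes\tau_h$ with center $\omega\otimes\sigma$ after restricting to $\supp\omega$, the same two-case compatibility check via \eqref{eq:ea-orth} and the compatibility of $(\tau_h)$, the radius bound from \cref{lem:flagged}, and the same handling of the null-graph cases.
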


\begin{proof}
If $G$ is the null graph there is nothing to prove, and otherwise
\eqref{eq:ea-sum} forces $H$ to have a vertex.  Take $\omega,\omega_g^h$ as
in~\eqref{eq:ea-sum}--\eqref{eq:ea-orth}, restricted to $\supp\omega$, and a
compatible family $(\tau_h)$ for $H$ with faithful center $\sigma$ and radius
at most $R$.  Define
\[
 \tau'_g=\sum_h\omega_g^h\otimes\tau_h,\qquad \sigma'=\omega\otimes\sigma .
\]
Each $\tau'_g$ has trace $\Tr\omega=1$.  If $g\ne g'$ are non-adjacent, every
term of $\tau'_g\tau'_{g'}=\sum_{h,h'}\omega_g^h\omega_{g'}^{h'}\otimes\tau_h\tau_{h'}$
vanishes.  When $h\simeq h'$ this is~\eqref{eq:ea-orth}.  Otherwise $h,h'$
are distinct and non-adjacent in $H$, and then $\tau_h\tau_{h'}=0$.  So
$(\tau'_g)$ is compatible with $G$, and \cref{lem:flagged} bounds its radius
around $\sigma'$ by $R$.
\end{proof}

\begin{proof}[Proof of \cref{thm:A}]
Normalization was noted at the start of the section.  Additivity is
\cref{thm:union}, multiplicativity is \cref{thm:product}, and monotonicity is
\cref{thm:ea-mono}.  At $\alpha=0$ these are properties of $\vt$, and
membership of $\vt$ in $\Xs$ is Li and Zuiddam's theorem
\cite{LiZuiddam2020}, whose monotonicity step is due to Cubitt et
al.~\cite{CMRSSW2014}.
\end{proof}

\begin{remark}[orders above two]\label[remark]{rem:above-two}
The only place the restriction $\alpha\le2$ enters is data processing in
\cref{lem:flagged}.  \Cref{prop:supported}, \cref{lem:tower} and
\cref{lem:compression,lem:orthogonal-sum} hold verbatim for $s<-1$, that is,
for $\alpha>2$, so data processing is the only obstruction in this argument.
Petz data processing fails above order two, and the inequality of
\cref{lem:flagged} itself fails too: the accompanying verification scripts
(see Data and code availability) exhibit an explicit finite-dimensional
instance at order three in which it fails by more than half a bit.  So this
argument says nothing about $\alpha>2$.
\end{remark}

\section{The pentagon}\label{sec:pentagon}

Throughout this section $C_5$ is labelled by $\mathbb Z_5$ with $i\sim i\pm1$,
and a representation satisfies $P_iP_{i+2}=0$ for every $i$.  We begin at
order one half, where everything is exact, and then work down to order zero.

\subsection{Order one half and above}

\begin{lemma}[Hilbert--Schmidt inequality]\label[lemma]{lem:hs}
For every representation of $C_5$ and every operator $X$,
\[
 \sum_i\nrm{P_iXP_i}_2^2\le2\nrm{X}_2^2 .
\]
\end{lemma}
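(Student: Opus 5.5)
The argument is local at each vertex and uses only the orthogonality relations $P_iP_{i+2}=0$. Since $\nrm{P_iXP_i}_2^2=\Tr(P_iX^*P_iXP_i)$, the left side is not linear in $X$. I will therefore rewrite each term so that a positive operator inequality can be applied.

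First, fix $i$. The two neighbours $i-1$ and $i+1$ satisfy $P_{i-1}P_{i+1}=0$ because $(i+1)-(i-1)=2$. So $Q_i:=P_{i-1}+P_{i+1}$ is a projection. The remaining two vertices $i\pm2$ are non-neighbours of $i$, so $P_iP_{i+2}=P_iP_{i-2}=0$. Consequently
\[
 P_i\Bigl(\sum_jP_j\Bigr)P_i=P_i+P_iQ_iP_i\le2P_i.
\]
This is the local statement from \cref{sec:taste}, namely that the sum of all five projections, seen from inside $P_i$, is at most twice the identity.

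Second, I sum the vertex terms and regroup them. Write $\nrm{P_iXP_i}_2^2=\Tr\bigl(P_iX^*P_i\,P_iXP_i\bigr)$. I do not intend to bound each term by $\nrm{P_iX}_2^2$ and sum, since that gives $\Tr(X^*(\sum_iP_i)X)$, and $\sum_iP_i$ can have norm larger than $2$. Instead I will use a symmetric rearrangement. Consider the positive map $\Phi(Y)=\sum_iP_iYP_i$, so the left side equals $\Tr(X^*\Phi(X))=\langle X,\Phi(X)\rangle_{\mathrm{HS}}$. It suffices to show that $\Phi$, as an operator on Hilbert--Schmidt space, has norm at most $2$. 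By Cauchy--Schwarz,
\[
 |\langle Y,\Phi(X)\rangle|^2\le\Bigl(\sum_i\nrm{P_iYP_i}_2^2\Bigr)\Bigl(\sum_i\nrm{P_iXP_i}_2^2\Bigr),
\]
so this reduces to the same quantity again. The cleaner route is to view $\Phi$ as a sum of the commuting-free superoperators $\mathcal P_i(Y)=P_iYP_i$. Each $\mathcal P_i$ is a Hilbert--Schmidt orthogonal projection, and $\mathcal P_i\mathcal P_j=0$ whenever $i,j$ are non-adjacent. So $(\mathcal P_i)$ is itself a representation of $C_5$ on Hilbert--Schmidt space, and the claim becomes $\langle X,\sum_i\mathcal P_iX\rangle\le2\nrm X_2^2$.

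Third, I prove $\sum_i\mathcal P_i\le2$ for any representation of $C_5$ by projections; this is the main obstacle. I will write $\Phi=\sum_i\mathcal P_i$ and compute $\Phi^2=\Phi+\sum_i(\mathcal P_i\mathcal P_{i+1}+\mathcal P_{i+1}\mathcal P_i)$. I aim for $\Phi^2\le2\Phi$, which gives spectrum in $[0,2]$. Directly, $\Phi^2\le2\Phi$ is equivalent to $\sum_i\mathcal P_i\Phi\mathcal P_i\le\dots$, which is not automatic for abstract projections. For instance, the Lovász umbrella in $\mathbb R^3$ has $\sum\proj{u_i}$ with norm larger than $2$, so abstract representations fail. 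The proof must use the special structure $\mathcal P_i=P_i\otimes\overline{P_i}$, acting on $X$ as $P_iXP_i$. Concretely, I will work in vectorized form $\vect(X)$, where $\mathcal P_i=P_i\otimes\overline{P_i}$. Using the first step, $(P_i\otimes\overline{P_i})(\sum_jP_j\otimes I)(P_i\otimes\overline{P_i})\le2\,P_i\otimes\overline{P_i}$. Then the inequality $P_j\otimes\overline{P_j}\le P_j\otimes I$ gives
\[
 \sum_i\mathcal P_i\Phi\mathcal P_i\le2\Phi.
\]
From here I will try to pass to $\Phi^2\le2\Phi$ or directly to $\langle v,\Phi v\rangle\le2\nrm v^2$. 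My first attempt is to bound $\langle v,\Phi v\rangle^2=\bigl(\sum_i\nrm{\mathcal P_iv}^2\bigr)^2$ by Cauchy--Schwarz against $\sum_i\langle\mathcal P_iv,\Phi\mathcal P_iv\rangle\cdot(\cdots)$. If that does not close, my fallback is a direct matrix computation. Block-decompose $X$ with respect to $P_i$ and $Q_i$, and check that $\nrm{P_iXP_i}_2^2$ is charged at most twice against entries of $X$, using that the blocks $P_iXP_i$ and $P_{i+2}XP_{i+2}$ live on orthogonal ranges.
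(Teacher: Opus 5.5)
Your first two steps are sound. Reducing the lemma to $\nrm{\Phi}\le2$ for $\Phi=\sum_i\mathcal P_i$ on Hilbert--Schmidt space is the same reformulation the paper uses, where it appears via the adjoint of the summation map $(X_i)\mapsto\sum_iX_i$. But Step 3 carries all the content and is left open, and the inequality you reach there cannot be completed. The bound $\sum_i\mathcal P_i\Phi\mathcal P_i\le2\Phi$ holds for \emph{every} abstract representation of $C_5$. Since $(\mathcal P_i)$ is itself a representation, your Step 1 applied to it gives $\mathcal P_i\Phi\mathcal P_i=\mathcal P_i+\mathcal P_i(\mathcal P_{i-1}+\mathcal P_{i+1})\mathcal P_i\le2\mathcal P_i$, with no tensor structure used. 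In particular it holds for the umbrella projections $\proj{u_i}$, yet there $\sum_i\proj{u_i}$ has norm $\sqrt5>2$. So, by your own counterexample, no argument using only this inequality can give $\Phi\le2$, whether by Cauchy--Schwarz or otherwise. Passing through $P_j\otimes\overline{P_j}\le P_j\otimes I$ and then back to $\Phi$ discards exactly the two-sided information you said was needed. The block-charging fallback is too vague to assess.

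The missing idea is to keep the two one-sided bounds separate, $\mathcal P_i(T\otimes I)\mathcal P_i\le2\mathcal P_i$ and $\mathcal P_i(I\otimes\overline T)\mathcal P_i\le2\mathcal P_i$ with $T=\sum_jP_j$, and to split each cross term between them. For $v_i=\mathcal P_iv_i$ one has $\langle v_i,v_j\rangle=\langle(P_j\otimes I)v_i,(I\otimes\overline{P_i})v_j\rangle$; in matrix form this is $\Tr X_i^*X_j=\Tr((P_jX_i)^*(X_jP_i))$. Applying $ab\le\frac12(a^2+b^2)$ and summing over $i,j$ gives
\[
 \Bigl\lVert\sum_iv_i\Bigr\rVert^2
 \le\tfrac12\sum_i\bigl[\langle v_i,(T\otimes I)v_i\rangle+\langle v_i,(I\otimes\overline T)v_i\rangle\bigr]
 \le2\sum_i\nrm{v_i}^2 .
\]
So the summation map has norm squared at most $2$. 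Since $\Phi$ is that map composed with its adjoint, $\Phi\le2$. This is the paper's argument.
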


\begin{proof}
Put $T=\sum_iP_i$.  The terms $P_iP_{i\pm2}P_i$ vanish, and $P_{i-1}$ and
$P_{i+1}$ are orthogonal to each other, so
\begin{equation}\label{eq:PTP}
 P_iTP_i=P_i+P_i(P_{i-1}+P_{i+1})P_i\le2P_i .
\end{equation}
Consider the map $\mathcal A:(X_i)_i\mapsto\sum_iX_i$ from
$\bigoplus_iP_i\mathcal B_2P_i$ to the Hilbert--Schmidt class $\mathcal B_2$.
For $X_i=P_iX_iP_i$, cyclicity gives
$\Tr X_i^*X_j=\Tr\bigl((P_jX_i)^*(X_jP_i)\bigr)$, so
\[
 |\Tr X_i^*X_j|\le\tfrac12\bigl(\nrm{P_jX_i}_2^2+\nrm{X_jP_i}_2^2\bigr).
\]
Summing over $i$ and $j$ and using~\eqref{eq:PTP},
\[
 \Bigl\lVert\sum_iX_i\Bigr\rVert_2^2
 \le\tfrac12\sum_i\bigl[\Tr(X_i^*TX_i)+\Tr(X_iTX_i^*)\bigr]
 \le2\sum_i\nrm{X_i}_2^2 .
\]
So $\nrm{\mathcal A}^2\le2$.  The adjoint of $\mathcal A$ is
$X\mapsto(P_iXP_i)_i$, and it has the same norm.
\end{proof}

\begin{theorem}\label{thm:pentagon-half}
$\vt_\alpha(C_5)=\frac52$ for every $\alpha\in[\frac12,2]$.
\end{theorem}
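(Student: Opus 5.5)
The plan is to prove the two inequalities separately, getting the lower bound at order one half and the upper bound at order two, and then to transport both across $[\frac12,2]$ by the monotonicity of \cref{prop:endpoints}(ii).

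\emph{Lower bound at order one half.}  By the projection formula~\eqref{eq:projection-formula} with $s=\frac12$,
\[
 \frac1{\vt_{1/2}(C_5)}=\sup_{(P_i),\,\sigma}\ \min_i\ \nrm{P_i\sigma^{1/2}P_i}_2^2 .
\]
Fix a representation $(P_i)$ of $C_5$ and a faithful state $\sigma$, and apply \cref{lem:hs} with $X=\sigma^{1/2}$. Since $\nrm{\sigma^{1/2}}_2^2=\Tr\sigma=1$, this gives $\sum_i Z_{1/2}(\sigma,P_i)\le2$. Hence the minimum over the five vertices is at most $\frac25$, and $\vt_{1/2}(C_5)\ge\frac52$. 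Because $\alpha\mapsto\vt_\alpha$ is nondecreasing, the same bound $\vt_\alpha(C_5)\ge\frac52$ holds for every $\alpha\ge\frac12$.

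\emph{Upper bound at order two.}  I would take the classical flat witness. Work on $\mathbb C^5$ with basis $\ket{j}$, $j\in\mathbb Z_5$. Let $P_i=\proj i+\proj{i+1}$ and let $\sigma=\frac15 I$.

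First, $(P_i)$ is a representation. The projections $P_i$ and $P_{i+2}$ have supports $\{i,i+1\}$ and $\{i+2,i+3\}$, which are disjoint, so $P_iP_{i+2}=0$.

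Second, $\sigma$ commutes with each $P_i$, so the pairs $(P_i\sigma/\Tr P_i\sigma,\ \sigma)$ are flat. By \cref{lem:flat}, each has divergence $-\log\Tr P_i\sigma=\log\frac52$ at every order.

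Therefore $\vt_\alpha(C_5)\le\frac52$ for all $\alpha\in[0,2]$. The same conclusion also follows from \cref{prop:endpoints}(iii) together with $\alpha^*(C_5)=\frac52$.

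Combining the two bounds gives $\vt_\alpha(C_5)=\frac52$ on $[\frac12,2]$. The only substantive input is \cref{lem:hs}, which is already proved; the remaining step needing care is that the projection formula at $s=\frac12$ really gives $Z_{1/2}=\nrm{P\sigma^{1/2}P}_2^2$. This holds because $(P\sigma^{1/2}P)^2$ has trace equal to the squared Hilbert--Schmidt norm.
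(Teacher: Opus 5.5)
Your proof is correct and follows the paper's argument essentially verbatim. The lower bound comes from \cref{lem:hs} with $X=\sigma^{1/2}$ at order one half, transported upward by monotonicity in $\alpha$. The upper bound comes from the flat witness $P_i=\proj i+\proj{i+1}$ with $\sigma=I/5$, or equivalently from \cref{prop:endpoints}(iii) with $\alpha^*(C_5)=\frac52$.
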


\begin{proof}
At $s=\frac12$ the weight is $Z_{1/2}(\sigma,P)=\nrm{P\sigma^{1/2}P}_2^2$.
\Cref{lem:hs} with $X=\sigma^{1/2}$ gives
$\sum_iZ_{1/2}(\sigma,P_i)\le2\Tr\sigma=2$.  So the minimum over $i$ is at
most $\frac25$, and~\eqref{eq:projection-formula} gives
$\vt_{1/2}(C_5)\ge\frac52$.  By monotonicity in $\alpha$ the same lower bound
holds on $[\frac12,2]$.  For the upper bound, $\alpha^*(C_5)=\frac52$ and
\cref{prop:endpoints}(iii) applies.  Explicitly, the flat witness is
$\rho_i=\frac12(\proj i+\proj{i+1})$ on $\mathbb C^5$ with $\sigma=I/5$, and
it has radius $\log\frac52$ at every order.
\end{proof}

\begin{figure}[t]
\centering
\begin{minipage}[b]{0.46\textwidth}
\centering
\begin{tikzpicture}[v/.style={circle, draw, fill=white, inner sep=1.5pt,
                              minimum size=0.55cm, font=\small}]
\foreach \i in {0,...,4} {
  \coordinate (p\i) at ({90-72*\i}:1.55);
}
\foreach \i/\j in {0/2,1/3,2/4,3/0,4/1}
  \draw[gray!60, dashed] (p\i) -- (p\j);
\foreach \i/\j in {0/1,1/2,2/3,3/4,4/0}
  \draw[thick] (p\i) -- (p\j);
\foreach \i in {0,...,4} {
  \node[v] at (p\i) {$\i$};
}
\node[font=\scriptsize, anchor=south] at ($(p0)+(0,0.3)$) {$\{e_0,e_1\}$};
\node[font=\scriptsize, anchor=west]  at ($(p1)+(0.32,0)$) {$\{e_1,e_2\}$};
\node[font=\scriptsize, anchor=north west] at ($(p2)+(0.18,-0.18)$) {$\{e_2,e_3\}$};
\node[font=\scriptsize, anchor=north east] at ($(p3)+(-0.18,-0.18)$) {$\{e_3,e_4\}$};
\node[font=\scriptsize, anchor=east]  at ($(p4)+(-0.32,0)$) {$\{e_4,e_0\}$};
\node[font=\small] at (0,-2.55) {(a) flat witness, $\frac12\le\alpha\le2$};
\end{tikzpicture}
\end{minipage}\hfill
\begin{minipage}[b]{0.5\textwidth}
\centering
\begin{tikzpicture}[x={(2.1cm,0cm)}, y={(0.8cm,1.0cm)}, z={(0cm,2.5cm)},
                    >=Latex]
\pgfmathsetmacro{\cg}{5^(-0.25)}
\pgfmathsetmacro{\sg}{sqrt(1-\cg*\cg)}
\draw[gray!55, dashed] plot[domain=0:360, samples=90, smooth]
   ({\sg*cos(\x)},{\sg*sin(\x)},{\cg});
\foreach \i in {0,...,4} {
  \pgfmathsetmacro{\a}{72*\i+57}
  \draw[->, thick, blue!60!black] (0,0,0) -- ({\sg*cos(\a)},{\sg*sin(\a)},{\cg});
  \node[font=\scriptsize, blue!50!black] at ({1.22*\sg*cos(\a)},{1.22*\sg*sin(\a)},{\cg+0.06}) {$u_{\i}$};
}
\draw[->, very thick] (0,0,0) -- (0,0,1.15) node[above, font=\small] {$c$};
\fill (0,0,0) circle (1.2pt);
\node[font=\small] at (0,0,-0.42) {(b) Lov\'asz's umbrella, $\alpha=0$};
\end{tikzpicture}
\end{minipage}
\caption{Two pentagon representations attaining the known values in the
indicated regimes.  Solid lines are edges and dashed chords are the
non-edges, which must be orthogonal.  (a) Vertex $i$ gets the uniform state
on $\{e_i,e_{i+1}\}$ and the center is $I/5$.  Non-adjacent supports are
disjoint, and the radius is $\log\frac52$ at every order; this witness is
optimal for $\frac12\le\alpha\le2$.  (b) Unit vectors $u_i$ on a cone around
the handle $c$ with $|\langle c,u_i\rangle|^2=1/\sqrt5$ and
$u_i\perp u_{i+2}$.  With the singular center $\proj c$, a limit of faithful
centers, this attains $\vt(C_5)=\sqrt5$ at order zero; for these vectors no
faithful center attains it.  For
$0\le\alpha\le\frac12$, faithful centers tending to $\proj c$ make the
exponentiated radius tend to $5^{1/(2-2\alpha)}$, giving an upper bound on
$\vt_\alpha(C_5)$.  For
$0<\alpha<\frac12$, the exact graph value remains undetermined.}
\label{fig:pentagon-geometries}
\end{figure}

\subsection{Below order one half}

Order zero is governed by a trace-norm version of \cref{lem:hs}, with the
constant $\sqrt5$ in place of $2$.  Interpolating between the two gives a
lower bound at every order in between.

\begin{lemma}[trace-norm inequality]\label[lemma]{lem:trace}
For every representation of $C_5$ and every operator $X$,
$\sum_i\nrm{P_iXP_i}_1\le\sqrt5\,\nrm X_1$.
\end{lemma}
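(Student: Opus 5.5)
The plan is to reduce to rank-one operators, split each term by Cauchy--Schwarz, and bound the resulting quadratic form by the Lov\'asz number of the complement. This mirrors \cref{lem:hs}, with the norm of $T=\sum_iP_i$ as the controlling quantity.

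\emph{Reduction to rank one.} Write a singular value decomposition $X=\sum_k s_k\ket{a_k}\!\bra{b_k}$ with unit vectors $a_k,b_k$ and $s_k\ge0$, so that $\nrm X_1=\sum_ks_k$. The triangle inequality for $\nrm{\cdot}_1$ gives $\sum_i\nrm{P_iXP_i}_1\le\sum_ks_k\sum_i\nrm{P_i\ket{a_k}\!\bra{b_k}P_i}_1$. It therefore suffices to prove $\sum_i\nrm{P_i\ket a\!\bra bP_i}_1\le\sqrt5$ for unit vectors $a,b$.

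\emph{Cauchy--Schwarz.} The operator $P_i\ket a\!\bra bP_i$ has rank at most one, so its trace norm is $\nrm{P_ia}\,\nrm{P_ib}$. Cauchy--Schwarz then gives
\[
 \sum_i\nrm{P_ia}\nrm{P_ib}\le\Bigl(\sum_i\nrm{P_ia}^2\Bigr)^{1/2}\Bigl(\sum_i\nrm{P_ib}^2\Bigr)^{1/2}
 =\langle a,Ta\rangle^{1/2}\langle b,Tb\rangle^{1/2}.
\]
The lemma is thus reduced to the operator inequality $T\le\sqrt5\,I$.

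\emph{The key step: $\langle c,Tc\rangle\le\sqrt5$ for every unit vector $c$.} This is the only non-formal step. Fix $c$ and, for each $i$ with $P_ic\ne0$, set $u_i=P_ic/\nrm{P_ic}$; for the other $i$, choose any unit vector in $\ran P_i$. Then $u_i\in\ran P_i$, so $u_i\perp u_{i+2}$ because $P_iP_{i+2}=0$. Hence $(u_i)$ is an orthonormal representation of $C_5$ in Lov\'asz's orientation, with non-adjacent vertices orthogonal. Moreover $\langle c,Tc\rangle=\sum_i\nrm{P_ic}^2=\sum_i|\langle c,u_i\rangle|^2$.

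Lov\'asz's characterization \cite[Theorem~5]{Lovasz1979} states that the maximum of $\sum_x|\langle c,u_x\rangle|^2$, over orthonormal representations $(u_x)$ of $G$ and unit vectors $c$, equals $\vt(\overline G)$. Complex vectors give the same value by the rephase-and-realify argument in the proof of \cref{prop:endpoints}(i). Since $\overline{C_5}\cong C_5$, this maximum is $\vt(C_5)=\sqrt5$. So $\langle c,Tc\rangle\le\sqrt5$, and combining the three steps proves the lemma.

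If one prefers to avoid citing that characterization, the fallback is a direct argument. Write $T$ restricted to $\ran P_i$ as in \eqref{eq:PTP} and bound $\nrm T$ by a weighted Schur-type test adapted to the pentagon's cyclic structure. I expect this to be messier, so I would use the Lov\'asz duality route.
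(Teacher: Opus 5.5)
Your proposal is correct and follows the paper's proof essentially step for step. Both reduce to rank one by a singular value decomposition and the triangle inequality, and both apply Cauchy--Schwarz to $\nrm{P_ia}\nrm{P_ib}$. Both then bound $\sum_i\nrm{P_ic}^2\le\sqrt5\,\nrm c^2$ by choosing $u_i\in\ran P_i$ parallel to $P_ic$ and applying Lov\'asz's Theorem~5 to the complement, with the same rephase-and-realify step and the self-complementarity of $C_5$. The Schur-test fallback is not needed.
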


\begin{proof}
If $(P_i)$ represents $G$ and $v$ is a vector, pick unit vectors
$u_i\in\ran P_i$ parallel to $P_iv$ (arbitrary if $P_iv=0$).  They form an
orthonormal representation of $G$, and $\sum_i\nrm{P_iv}^2=\sum_i|\langle
u_i,v\rangle|^2\le\vt(\overline G)\nrm v^2$ by Lov\'asz's Theorem~5 applied
to $\overline G$ \cite{Lovasz1979}.  That theorem is stated for real vectors;
to apply it, rephase each $u_i$ so that $\langle u_i,v\rangle\ge0$ and
realify, $w\mapsto(\operatorname{Re}w,\operatorname{Im}w)$, which preserves
unit length, orthogonality and the values $\langle u_i,v\rangle$.  The
pentagon is self-complementary, so the constant is
$\sqrt5$.  For $X=ab^*$ we have $\nrm{P_iab^*P_i}_1=\nrm{P_ia}\nrm{P_ib}$, and
Cauchy--Schwarz proves the claim for rank one.  A singular value decomposition
and the triangle inequality do the rest.
\end{proof}

\begin{lemma}[interpolation]\label[lemma]{lem:interp}
For $1\le p\le2$, every representation of $C_5$ and every operator $X$,
\[
 \sum_i\nrm{P_iXP_i}_p^p\le B(p)\,\nrm X_p^p,
 \qquad B(p)=5^{(2-p)/2}\,2^{p-1}.
\]
\end{lemma}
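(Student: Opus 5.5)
Complex interpolation (Riesz--Thorin for Schatten classes) between the two endpoint inequalities already proved: \cref{lem:trace} at $p=1$ with constant $\sqrt5$, and \cref{lem:hs} at $p=2$ with constant $2$. Fix a representation $(P_i)$ of $C_5$ and consider the linear map
\[
 \mathcal T:X\mapsto(P_iXP_i)_{i\in\mathbb Z_5}
\]
from $S_p(\cH)$ to the direct sum $\bigoplus_iS_p(\ran P_i)$ equipped with the $\ell^p$-sum norm $\bigl(\sum_i\nrm{Y_i}_p^p\bigr)^{1/p}$. This direct sum is the Schatten class $S_p$ of the block-diagonal operator $\bigoplus_iY_i$ on $\bigoplus_i\ran P_i$, so the target is also a Schatten scale. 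The endpoints read $\nrm{\mathcal T}_{S_1\to S_1}\le\sqrt5$ (\cref{lem:trace}) and $\nrm{\mathcal T}_{S_2\to S_2}\le\sqrt2$, since \cref{lem:hs} bounds the squared norm by $2$.

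For $1\le p\le2$ write $1/p=(1-t)/1+t/2$, so that $t=2-2/p$. The noncommutative Riesz--Thorin theorem, via the fact that $(S_1,S_2)_{[t]}=S_p$ isometrically for both source and target, gives
\[
 \nrm{\mathcal T}_{S_p\to S_p}\le(\sqrt5)^{1-t}(\sqrt2)^{t}.
\]
Raising to the $p$-th power and using $p(1-t)=2-p$ and $pt=2p-2$ yields
\[
 5^{(2-p)/2}\,2^{(2p-2)/2}=5^{(2-p)/2}2^{p-1}=B(p),
\]
which is exactly the claim. All spaces are finite-dimensional, so there are no density issues.

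The step needing care is justifying the interpolation. To avoid citing abstract interpolation of Schatten classes, I would prove it directly with Hadamard's three-lines lemma. By duality,
\[
 \nrm{\mathcal T X}_{p}=\sup_{(W_i)}\Bigl|\sum_i\Tr W_i^*P_iXP_i\Bigr|,
\]
with the supremum over blocks whose $\ell^{p'}$-sum of $S_{p'}$ norms is at most $1$. Normalize $\nrm X_p=1$ and take polar decompositions $X=U|X|$ and $W_i=V_i|W_i|$. Define the analytic families
\[
 X(z)=U|X|^{p(1-z/2)\,\cdot\,\ldots}
\]
with the standard exponents chosen so that $X(\theta)=X$ at the interpolation point, $\nrm{X(z)}_1\le1$ on $\operatorname{Re}z=0$, and $\nrm{X(z)}_2\le1$ on $\operatorname{Re}z=1$. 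Do the same for $W_i(z)$ with the dual exponents, using a common block-diagonal polar decomposition. Then
\[
 F(z)=\sum_i\Tr W_i(z)^*P_iX(z)P_i
\]
is bounded and entire, with $|F|\le\sqrt5$ on the left line and $|F|\le\sqrt2$ on the right line. The three-lines lemma then gives the bound at the interpolation point. Checking these exponent bookkeeping details on the boundary lines, with the unitary phases $|X|^{iy}$ not affecting Schatten norms, is the main obstacle. It is routine once the exponents are fixed.
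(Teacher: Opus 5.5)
Your proposal is correct and is essentially the paper's proof. The paper also interpolates the map $X\mapsto\bigoplus_iP_iXP_i$, viewed as landing in $S_p(\mathbb C^5\otimes\cH)$, between \cref{lem:trace} at $p=1$ (constant $\sqrt5$) and \cref{lem:hs} at $p=2$ (constant $\sqrt2$) with exponent $\theta=2-2/p$, and it simply cites exact complex interpolation of Schatten classes. Your three-lines sketch is an optional replacement for that citation, and its bookkeeping closes with $X(z)=U|X|^{p(1-z/2)}$ paired analytically against $|W|^{p'z/2}V^*$.
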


\begin{proof}
Consider the map $T:X\mapsto\bigoplus_iP_iXP_i$ into block-diagonal
operators on $\mathbb C^5\otimes\cH$.  For block-diagonal operators
$\nrm{\bigoplus_iY_i}_p^p=\sum_i\nrm{Y_i}_p^p$, so the claim says that $T$
has norm at most $B(p)^{1/p}$ from $S_p(\cH)$ to $S_p(\mathbb C^5\otimes\cH)$.
By \cref{lem:trace} the norm is at most $\sqrt5$ for $p=1$, and by
\cref{lem:hs} it is at most $\sqrt2$ for $p=2$.  The Schatten classes form a
complex interpolation scale with equal norms \cite[Eq.~(2.1)]{PisierXu2003},
and the complex method is exact of exponent $\theta$
\cite{BerghLofstrom1976}, here $\theta=2-2/p$.  So the norm from $S_p$ to
$S_p$ is at most
$\sqrt5^{\,1-\theta}\sqrt2^{\,\theta}=5^{(2-p)/(2p)}2^{(p-1)/p}$.  Raise this
to the power $p$.
\end{proof}

\begin{theorem}\label{thm:pentagon-bounds}
For $0\le\alpha\le\frac12$ and $p=1/(1-\alpha)$, the bounds
\eqref{eq:pentagon-intro} hold.
\end{theorem}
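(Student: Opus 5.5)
The lower bound comes from the projection formula~\eqref{eq:projection-formula} combined with \cref{lem:interp}. The upper bound is the minimum of the flat witness value $\tfrac52$ and a limit of faithful centers around Lov\'asz's umbrella. Throughout, put $s=1-\alpha\in[\tfrac12,1]$, so that $p=1/s\in[1,2]$.

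\emph{Lower bound.} For a projection $P$ and a faithful state $\sigma$,
\[
 Z_s(\sigma,P)=\Tr(P\sigma^sP)^{1/s}=\nrm{P\sigma^sP}_p^p .
\]
Apply \cref{lem:interp} with $X=\sigma^s$, for which $\nrm X_p^p=\Tr\sigma^{sp}=\Tr\sigma=1$. This gives $\sum_iZ_s(\sigma,P_i)\le B(p)$ for every representation $(P_i)$ of $C_5$. Hence $\min_iZ_s(\sigma,P_i)\le B(p)/5$, and by~\eqref{eq:projection-formula}
\[
 \vt_\alpha(C_5)\ \ge\ \frac{5}{B(p)}=5^{p/2}2^{1-p}=\sqrt5\Bigl(\frac{\sqrt5}{2}\Bigr)^{p-1}.
\]
At $\alpha=0$ we have $p=1$ and $Z_1=\Tr P\sigma$, so the same computation applies unchanged, using \cref{lem:trace}.

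\emph{Upper bound $\tfrac52$.} This is \cref{prop:endpoints}(iii) with $\alpha^*(C_5)=\tfrac52$, or directly the flat witness of \cref{thm:pentagon-half}.

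\emph{Umbrella bound.} Take Lov\'asz's umbrella in $\mathbb C^3$: unit vectors $u_i$ with $u_i\perp u_{i+2}$ and $|\langle c,u_i\rangle|^2=5^{-1/2}$. Use the rank-one projections $P_i=\proj{u_i}$ and the faithful centers
\[
 \sigma_\eps=(1-\eps)\proj c+\tfrac{\eps}{3}I .
\]
For rank-one $P_i$ we have $Z_s(\sigma_\eps,P_i)=\bra{u_i}\sigma_\eps^s\ket{u_i}^{1/s}$. Since $s>0$, $\sigma_\eps^s=(1-\eps+\eps/3)^s\proj c+(\eps/3)^s(I-\proj c)$, which tends to $\proj c$ as $\eps\to0$. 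Therefore
\[
 Z_s(\sigma_\eps,P_i)\to|\langle c,u_i\rangle|^{2/s}=5^{-p/2}\quad\text{for every }i .
\]
The projection formula then gives $\vt_\alpha(C_5)\le5^{p/2}=\sqrt5\cdot5^{(p-1)/2}$. The only care needed is to use $s>0$: this is what makes the eigenvalue $\eps/3$ vanish in $\sigma_\eps^s$. The condition holds because $\alpha\le\tfrac12<1$. Taking the minimum of the two upper bounds gives~\eqref{eq:pentagon-intro}.

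\emph{Remaining claims.} The continuity and surjectivity assertions of \cref{thm:B} follow from \cref{prop:continuity} on $[0,\tfrac12]\subset[0,1)$ together with the endpoint values. At $p=1$ both bounds equal $\sqrt5$, and \cref{thm:pentagon-half} gives $\tfrac52$ at $\alpha=\tfrac12$; the intermediate value theorem then fills $[\sqrt5,\tfrac52]$. No step is a real obstacle. The heavy lifting was done in \cref{lem:interp}, and what remains is matching the weight $Z_s$ with the Schatten $p$-norm through the exponent bookkeeping $p=1/s$.
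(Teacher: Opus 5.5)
Your proof is correct and follows the paper's own argument: the lower bound comes from \cref{lem:interp} with $X=\sigma^{1-\alpha}$ plus the min-versus-average step, the bound $\tfrac52$ from the flat witness, and the umbrella bound from faithful centers tending to $\proj c$, which needs $1-\alpha>0$. The differences are cosmetic: you evaluate the umbrella through the projection formula with an explicit $\sigma_\eps$ where the paper computes the pure-state divergence directly, and your closing paragraph on continuity and surjectivity belongs to \cref{thm:B}, not to this statement.
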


\begin{proof}
For the lower bound put $s=1/p$ and $X=\sigma^s$ in \cref{lem:interp}.  Then
$\nrm X_p^p=\Tr\sigma=1$ and $\nrm{P_i\sigma^sP_i}_p^p=Z_s(\sigma,P_i)$, so
$\sum_iZ_s(\sigma,P_i)\le B(p)$.  The minimum is at most the average, so
$\vt_\alpha(C_5)\ge5/B(p)=\sqrt5(\sqrt5/2)^{p-1}$.

The step from the minimum to the average loses nothing.  Given a
representation $(P_i)$ of $C_5$ with center $\sigma$, the projections
$P'_i=\bigoplus_{k\in\mathbb Z_5}P_{i+k}$ on $\mathbb C^5\otimes\cH$ again
represent $C_5$, and with the center $\sigma'=\bigoplus_k\sigma/5$
homogeneity and additivity over orthogonal blocks give
$Z_s(\sigma',P'_i)=\frac15\sum_kZ_s(\sigma,P_k)$ for every $i$.  So the
minimum for $(P'_i)$ and $\sigma'$ is the average for $(P_i)$ and $\sigma$,
and the supremum of the minimum equals the supremum of the average.

The bound $\frac52$ is \cref{thm:pentagon-half} combined with monotonicity.
For the other upper bound, take the umbrella of
\cref{fig:pentagon-geometries}(b): unit vectors
$u_i=(\cos\gamma,\sin\gamma\cos\frac{2\pi i}5,\sin\gamma\sin\frac{2\pi i}5)$
with $\cos^2\gamma=1/\sqrt5$, which satisfy $u_i\perp u_{i+2}$.  For a pure
state, $\Tr\proj{u}^\alpha\sigma^{1-\alpha}=\bra u\sigma^{1-\alpha}\ket u$.
As faithful centers approach $\proj c$ with $c=(1,0,0)$, this tends to
$|\langle c,u_i\rangle|^2=5^{-1/2}$ because $1-\alpha>0$.  The radius
therefore tends to $\frac{1}{1-\alpha}\log\sqrt5=\log5^{p/2}$.
\end{proof}

\subsection{A continuum of spectral points}

\begin{proof}[Proof of \cref{thm:B}]
The values on $[\frac12,2]$ are \cref{thm:pentagon-half}, and the bounds on
$[0,\frac12]$ are \cref{thm:pentagon-bounds}.  By
\cref{prop:continuity,prop:endpoints}, $\alpha\mapsto\vt_\alpha(C_5)$ is
continuous and nondecreasing on $[0,\frac12]$.  It runs from $\vt(C_5)=\sqrt5$
to $\frac52$, so by the intermediate value theorem its image is
$[\sqrt5,\frac52]$.  Parameters with different values on $C_5$ are different
functions, and each of them lies in $\Xs$ by \cref{thm:A}.
\end{proof}

The lower bound in~\eqref{eq:pentagon-intro} exceeds $\sqrt5$ as soon as
$\alpha>0$.  So no $\vt_\alpha$ with $\alpha>0$ equals $\vt$.  Points of the
\emph{ordinary} spectrum whose values on $C_5$ fill $[\sqrt5,\frac52]$ were
already available through Vrana's interpolation (\cref{sec:new}); the new
content of \cref{thm:B} is that the points $\vt_\alpha$ lie in $\Xs$.

\begin{remark}[an explicit ladder]
Readers who want explicit separations without the intermediate value theorem
can use $\alpha_k=1/(8^{k+1}+1)$, $k\ge0$.  Then $p_k-1=8^{-(k+1)}$, and the
bounds~\eqref{eq:pentagon-intro} give
$\vt_{\alpha_{k+1}}(C_5)\le\sqrt5\cdot5^{(p_k-1)/16}
<\sqrt5\,(\sqrt5/2)^{p_k-1}\le\vt_{\alpha_k}(C_5)$.  The strict inequality is
$5^{1/16}<\sqrt5/2$, that is, $2^{16}=65536<78125=5^7$.  Together with
$\vt_{\alpha_0}(C_5)\le5^{9/16}<\frac52$, this shows that
$\vt,\vt_{\alpha_0},\vt_{\alpha_1},\dots,\vt_{1/2}$ are pairwise distinct.
\end{remark}

\subsection{The Duan--Winter chain}\label{sec:dw-chain}

Duan and Winter prove the chain~\eqref{eq:dw-chain}
\cite[Eq.~(54)]{DuanWinter2016}.  They show that its last link can be strict,
and they conjecture that on the pentagon every term except the first equals
$\log\frac52$.  They also ask whether the two middle links can ever be strict.
We can now answer all of this.

Since the corollaries below answer their questions, we quote the relevant
passages.  The quotations, and the section, theorem, proposition and lemma
numbers we cite from \cite{DuanWinter2016}, are from arXiv:1409.3426v3,
where the chain is numbered~(53) rather than~(54).  There a
graph $G$ comes with $S_G=\operatorname{span}\{\ket i\!\bra j:i\sim j\}$,
where ``$i\sim j$ means confusability: $i=j$ or $\{i,j\}\in E$ is an edge of
the graph''.  For a cq-graph $K$ the condition $K^\dagger K<S_G$ says that
the channel is compatible with $G$ in the sense of \cref{def:representation},
and $G\times H$ denotes their strong product, our $G\boxt H$.  Section~V
defines
\begin{quote}\small
$\Sigma(G):=\inf\{\Sigma(K):K$ cq-graph with $K^\dagger K<S_G\}$, and the
asymptotic simulation cost (regularization)
$S_{0,\mathrm{NS}}(G):=\lim_{n\to\infty}\frac1n\log\Sigma(G^n)$, where
$G^n=G\times\cdots\times G$ denotes the $n$-fold strong graph product.  The
latter limit exists and equals the infimum because evidently
$\log\Sigma(G\times H)\le\log\Sigma(G)+\log\Sigma(H)$.
\end{quote}
Their Lemma~23 reads: ``For any two graphs $G$ and $H$,
$\Cmin(G\times H)=\Cmin(G)+\Cmin(H)$.''  They derive the chain from it as a
corollary and continue:
\begin{quote}\small
Note that it may be true that $S_{0,\mathrm{NS}}(G)=\log\Sigma(G)$, but to
prove this we would need to show the additivity relation
$\log\Sigma(G\times H)=\log\Sigma(G)+\log\Sigma(H)$, which remains unknown.
\end{quote}
After an example in which the last link is strict, they write:
\begin{quote}\small
We think that also the leftmost inequality in (53) can be strict.  But
although the pentagon $G=C_5$ seems to be a candidate, for which we
conjecture (based on some ad hoc calculations) that
$\Cmin(G)=S_{0,\mathrm{NS}}(G)=\log\Sigma(G)=\log\alpha^*(G)=\log\frac52$,
whereas $\log\vt(G)=\frac12\log5$, a rigorous proof of this has so far eluded
us.  Whether the other two inequalities can be strict remains an open
question.
\end{quote}

Their Proposition~18 does not settle the middle link.  It states that
$\Sigma(K_1\otimes K_2)=\Sigma(K_1)\Sigma(K_2)$ for cq-graphs $K_1,K_2$,
which concerns two fixed cq-graphs, that is, two fixed families of output
supports.  The graph quantity
$\Sigma(G\boxt H)$ is an infimum over all channels compatible with $G\boxt H$,
and most of them are not of the form $K\otimes L$.  So Proposition~18 gives
only $\Sigma(G\boxt H)\le\Sigma(G)\Sigma(H)$, the inequality quoted above.  The
reverse inequality needs a bound for non-product channels.  That is the
additivity they list as unknown, and \cref{thm:product} at order two supplies
it (\cref{rem:single-letter}).  Operationally, $\SNS(G)$ is the cost rate of
exact no-signalling-assisted simulation when, at each block length $n$, one
may choose any channel compatible with $G^{\boxt n}$.  Such a channel
may correlate the $n$ uses.  \Cref{cor:middle} says that this freedom does not
lower the rate below the single-letter value $\log\Sigma(G)$.

\begin{corollary}[the pentagon conjecture]\label[corollary]{cor:pentagon}
$\log\sqrt5<\Cmin(C_5)=\SNS(C_5)=\log\Sigma(C_5)=\log\alpha^*(C_5)=\log\frac52$.
\end{corollary}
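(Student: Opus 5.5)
The corollary is a matter of chaining results already established, so the plan is to sandwich every term of the Duan--Winter chain~\eqref{eq:dw-chain} between $\log\frac52$ from below and $\log\alpha^*(C_5)=\log\frac52$ from above.

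For the lower end, \cref{thm:pentagon-half} at $\alpha=1$ gives $\vt_1(C_5)=\frac52$, and \cref{prop:endpoints}(i) identifies $\vt_1$ with $2^{\Cmin}$. Hence $\Cmin(C_5)=\log\frac52$ exactly. Applying the same theorem at $\alpha=2$ together with $\vt_2=\Sigma$ gives $\log\Sigma(C_5)=\log\frac52$.

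For the middle term, I would not even need the chain's squeeze. \Cref{thm:product} at order two gives $\Sigma(C_5^{\boxt n})=\Sigma(C_5)^n$, and so $\SNS(C_5)=\log\Sigma(C_5)$ by \cref{rem:single-letter}. Alternatively, the chain $\Cmin\le\SNS\le\log\Sigma$ pins $\SNS(C_5)$ between two equal values. The equality $\log\alpha^*(C_5)=\log\frac52$ is Shannon's classical computation. For a self-contained check, the weights $v_i=\frac12$ are feasible because every clique of $C_5$ is an edge or a vertex, and the five edges with weight $\frac12$ form a fractional clique cover of total $\frac52$; together these certify the value.

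Finally, the strict inequality on the left is immediate from $\log\vt(C_5)=\log\sqrt5$ \cite{Lovasz1979} and $\sqrt5<\frac52$, that is, $20<25$.

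There is no real obstacle here. The only substantive inputs are the Hilbert--Schmidt inequality (\cref{lem:hs}) behind \cref{thm:pentagon-half} and the identification $\vt_1=2^{\Cmin}$ via the radius formula for the Holevo capacity, both of which are already proved above. The one point to state carefully is that monotonicity in $\alpha$ (\cref{prop:endpoints}(ii)) transfers the order-$\frac12$ bound up to orders $1$ and $2$. \Cref{thm:pentagon-half} has already done that, so the corollary is a direct assembly of these results.
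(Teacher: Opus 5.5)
Your proposal is correct and is essentially the paper's proof: \cref{thm:pentagon-half} with \cref{prop:endpoints}(i) gives $\Cmin(C_5)=\log\Sigma(C_5)=\log\frac52$, the chain~\eqref{eq:dw-chain} squeezes $\SNS(C_5)$ between them, and the classical values $\alpha^*(C_5)=\frac52$ and $\vt(C_5)=\sqrt5$ finish it. Your alternative route to $\SNS(C_5)=\log\Sigma(C_5)$ via \cref{thm:product} at order two is also valid, although the paper notes that the product theorem is not needed for the pentagon.
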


\begin{proof}
\Cref{thm:pentagon-half,prop:endpoints} give
$\Cmin(C_5)=\log\Sigma(C_5)=\log\frac52$, and~\eqref{eq:dw-chain} squeezes
$\SNS(C_5)$ between them.  The fractional packing number of $C_5$ is
$\frac52$ \cite{Shannon1956} and $\vt(C_5)=\sqrt5$ \cite{Lovasz1979}.
\end{proof}

The values $\vt(C_5)=\sqrt5$ and $\alpha^*(C_5)=\frac52$ are classical, and
the upper bounds on $\Cmin(C_5)$, $\SNS(C_5)$ and $\log\Sigma(C_5)$ are Duan
and Winter's chain.  The new ingredient is the lower bound
$\Cmin(C_5)\ge\log\frac52$.  This proof uses only \cref{lem:hs}, the
projection formula, the radius formula for the Holevo capacity and
monotonicity in $\alpha$.  The product theorem is not needed for the pentagon,
but it is needed for the general statement.

\begin{corollary}[the middle links]\label[corollary]{cor:middle}
For every graph $G$, $\SNS(G)=\log\Sigma(G)$.  There is a graph with
$\Cmin(G)<\SNS(G)$.
\end{corollary}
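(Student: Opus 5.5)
The corollary has two parts, and I would treat them separately.

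For the identity $\SNS(G)=\log\Sigma(G)$, I would use \cref{thm:product} at $\alpha=2$ together with $\vt_2=\Sigma$ from \cref{prop:endpoints}(i). Induction on $n$ then gives $\Sigma(G^{\boxt n})=\Sigma(G)^n$ for every $n\ge1$. So every term $\frac1n\log\Sigma(G^{\boxt n})$ of the regularizing sequence equals $\log\Sigma(G)$, and the limit defining $\SNS(G)$ is this constant. Only one point needs care: $\Sigma(G)>0$ for nonempty $G$, so the logarithm is finite. This follows from $\Sigma\ge\vt\ge1$ (\cref{prop:endpoints}(iii)). This step is routine; \cref{rem:single-letter} already records it.

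For the strict separation, I would take the 32-vertex graph $G_{32}$ of \cref{thm:C}. By \cref{prop:endpoints}(i), $2^{\Cmin(G_{32})}=\vt_1(G_{32})$ and $\Sigma(G_{32})=\vt_2(G_{32})$. \cref{thm:C} gives $2^{\Cmin(G_{32})}\le 6.186983\ldots<6.194232\ldots\le\Sigma(G_{32})$. Combining this with the first part, $\SNS(G_{32})=\log\Sigma(G_{32})$, we get $\Cmin(G_{32})<\SNS(G_{32})$.

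The main obstacle is not in this deduction but in \cref{thm:C}, which needs certificates on both sides.

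\emph{Upper bound on $\Cmin$.} One exhibits an explicit compatible family built from the magic-square observables, with a center whose relative-entropy radius is certified. Because of the logarithms in the relative entropy, the natural route is to bound the Holevo capacity through an exactly verifiable rational witness.

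\emph{Lower bound on $\Sigma$.} One needs a dual certificate valid over all representations in every dimension. Here I would use the projection formula~\eqref{eq:projection-formula} at $s=-1$, and look for an operator inequality of Hilbert--Schmidt type, analogous to \cref{lem:hs}, that bounds $\sum_x w_x\Tr\sigma_{/P_x}$ uniformly. This is the hard part.

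For the corollary as stated, however, I would simply invoke \cref{thm:C}.
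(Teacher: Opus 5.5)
Your proof is correct and is the paper's proof: the first statement is \cref{rem:single-letter} (\cref{thm:product} at $\alpha=2$ with $\vt_2=\Sigma$, iterated over the factors of $G^{\boxt n}$), and the second follows from \cref{thm:C} via $\Cmin(G_{32})<\log\Sigma(G_{32})=\SNS(G_{32})$. Your sketch of how one might prove \cref{thm:C} is not needed for the corollary, and it differs from the paper. The paper bounds $t_H$ with a rational dual certificate for a noncommutative moment relaxation built on \cref{lem:harmonic-moment}, not with a Hilbert--Schmidt-type inequality; finding such an analytic inequality is listed as open in \cref{sec:open}.
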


\begin{proof}
The first statement is \cref{rem:single-letter}.  The second is
\cref{thm:C}, since $\Cmin(G_{32})<\log\Sigma(G_{32})=\SNS(G_{32})$.
\end{proof}

For a fixed cq-graph, strict inequalities of this kind were known: for the
cq-graph $K$ of two non-orthogonal pure states, Duan and Winter show
$C_{0,\mathrm{NS}}(K)<C_{\min E}(K)<S_{0,\mathrm{NS}}(K)$
\cite[Section~IV.D and Fig.~4]{DuanWinter2016}.  \Cref{cor:middle} separates
the graph-level quantities, each of which is an infimum over all channels
compatible with $G$.

For non-commutative bipartite graphs $K$, Duan and Winter also asked
whether $\Sigma(K)$ can be strictly submultiplicative
\cite[Section~VII]{DuanWinter2016}.  Wang and Duan answered this with
explicit qubit--qutrit examples, for which
$\Sigma(K\otimes K)<\Sigma(K)^2$ \cite{WangDuan2016}.  For the graph-level
quantity $\Sigma(G)$, with the optimization over compatible channels built
in, \cref{thm:product} shows that no such gap occurs.

\section{Order one versus order two}\label{sec:small}

\subsection{Where to look}

\Cref{lem:flat} tells us where \emph{not} to look.  If some flat witness is
optimal at order one, then orders one and two agree.  This is what happens on
the pentagon, and on every graph whose best order-one channel is flat.  To
separate the endpoints we need a graph whose good channels are genuinely
quantum.  Contextuality scenarios are a natural source, since they come with
rigid noncommutative representations.  We take the most familiar one, the
Mermin--Peres magic square \cite{Mermin1990,Peres1990}, and then add weights
by cloning vertices.  This section uses \cref{sec:compressions} but not
\cref{thm:product}.

\subsection{The graph}

\begin{figure}[t]
\centering
\begin{tikzpicture}[cell/.style={draw=gray!70, fill=white, rounded corners=2pt,
     minimum width=1.45cm, minimum height=0.78cm, font=\small}]
\def\cx{1.85}\def\cy{1.1}
\foreach \r in {0,1,2}
  \draw[line width=7pt, blue!18, line cap=round] (-0.95,-\cy*\r) -- (2*\cx+0.95,-\cy*\r);
\foreach \c in {0,1}
  \draw[line width=7pt, orange!25, line cap=round] (\cx*\c,0.6) -- (\cx*\c,-2*\cy-0.6);
\draw[line width=7pt, red!25, line cap=round] (2*\cx,0.6) -- (2*\cx,-2*\cy-0.6);
\node[cell] at (0,0)            {$Z\otimes I$};
\node[cell] at (\cx,0)          {$I\otimes Z$};
\node[cell] at (2*\cx,0)        {$Z\otimes Z$};
\node[cell] at (0,-\cy)         {$I\otimes X$};
\node[cell] at (\cx,-\cy)       {$X\otimes I$};
\node[cell] at (2*\cx,-\cy)     {$X\otimes X$};
\node[cell] at (0,-2*\cy)       {$Z\otimes X$};
\node[cell] at (\cx,-2*\cy)     {$X\otimes Z$};
\node[cell] at (2*\cx,-2*\cy)   {$Y\otimes Y$};
\foreach \r in {0,1,2}
  \node[font=\small, blue!55!black, anchor=west] at (2*\cx+1.05,-\cy*\r) {$+I$};
\node[font=\small, orange!70!black] at (0,-2*\cy-0.9)   {$+I$};
\node[font=\small, orange!70!black] at (\cx,-2*\cy-0.9) {$+I$};
\node[font=\small, red!65!black]    at (2*\cx,-2*\cy-0.9) {$-I$};
\end{tikzpicture}
\caption{The Mermin--Peres magic square.  Each row and each column is a
\emph{context}: its three observables commute, and their product is the
operator shown at its end.  A vertex of the seed graph $F$ is a context
together with a joint eigenvalue pattern (three signs with the prescribed
product).  That gives four vertices per context and $24$ in all.}
\label{fig:magic}
\end{figure}

Index the nine cells of a $3\times3$ array by $(i,j)\in\{0,1,2\}^2$, and fill
them with the two-qubit observables of \cref{fig:magic}.  The six contexts are
the three rows and the three columns.  A vertex of the \emph{seed graph} $F$
is a context together with signs for its three cells: signs with product $+1$
on rows and on the first two columns, and with product $-1$ on the last
column.  Two distinct vertices are non-adjacent if they belong to the same
context, or if they give opposite signs to a shared cell.  That makes
$24$ vertices, each non-adjacent to exactly nine others ($216$ ordered
non-adjacent pairs).

For a context whose first two observables are $A$ and $B$, the vertex with
first two signs $(s,t)$ is represented by the joint eigenprojection
\[
 Q_{A,B;s,t}=\tfrac14(I+sA)(I+tB),
\]
a rank-one projection on $\mathbb C^2\otimes\mathbb C^2$.  These projections
represent $F$.  In fact $Q_vQ_w=0$ holds exactly for the non-adjacent pairs,
with no accidental orthogonality.  The ranges of these $24$ projections are
the $24$ vectors, in six bases of $\mathbb C^4$, that Cubitt, Leung, Matthews
and Winter took from Peres's proof of the Kochen--Specker theorem to show
that entanglement can increase the one-shot zero-error capacity of a
classical channel \cite[Fig.~2]{CLMW2010}.  The six contexts give their six
bases, and the confusability graph of their channel is the orthogonality
graph of the $24$ vectors, the complement of $F$.  Let vertex $0$ be the
first-row vertex
with signs $(+,+,+)$, so $Q_0=\proj{00}$, and sort the vertices by overlap
$j_v=4\Tr Q_vQ_0\in\{0,1,2,4\}$.  The four classes have sizes $9,8,6,1$.
Class $0$ consists of the nine non-neighbours of vertex $0$, and class $4$ is
vertex $0$ itself.

Weights enter through cloning.  For a profile $k=(k_0,k_1,k_2,k_4)$ of
nonnegative integers, the \emph{clone graph} $F[k]$ replaces every class-$j$
vertex by $k_j$ pairwise non-adjacent copies (deleting it if $k_j=0$).  Each
copy inherits its parent's adjacencies to all other vertices.  Our main
example is
\[
 G_{32}=F[(2,1,1,0)],\qquad |V(G_{32})|=2\cdot9+8+6=32 .
\]

\subsection{Two weights and a bridge}

Two of the weights of \cref{sec:compressions} matter here.  The Gibbs weight
$w_\sigma(P)=Z_0(\sigma,P)$ belongs to order one, and the harmonic weight
$h_\sigma(P)=Z_{-1}(\sigma,P)=\Tr\sigma_{/P}$ belongs to order two.  For a
profile $k$ and a seed vertex $v$, write $k_v=k_{j_v}$ for the number of
copies of $v$ in $F[k]$.  Let $t_W(F[k])$ be the supremum of
$\min_{v:\,k_v>0}w_\sigma(P_v)/k_v$ over representations $(P_v)$ of the seed
$F$ and faithful states $\sigma$.  Define $t_H(F[k])$ in the same way with
$h_\sigma$.  Deleted vertices, those with $k_v=0$, may be represented by
$P_v=0$.

\begin{proposition}[bridge]\label[proposition]{prop:bridge}
For every profile $k$,
\[
 2^{\Cmin(F[k])}\le\frac1{t_W(F[k])}
 \qquad\text{and}\qquad
 \Sigma(F[k])\ge\frac1{t_H(F[k])} .
\]
\end{proposition}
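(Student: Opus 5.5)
The plan is to prove the two inequalities separately, through the projection formula~\eqref{eq:projection-formula} at orders one and two. By \cref{prop:endpoints}(i), $2^{\Cmin}=\vt_1$ with weight $Z_0=w_\sigma$, and $\Sigma=\vt_2$ with weight $Z_{-1}=h_\sigma$. So it suffices to show
\[
 \frac1{\vt_1(F[k])}\ \ge\ t_W(F[k]) \qquad\text{and}\qquad \frac1{\vt_2(F[k])}\ \le\ t_H(F[k]).
\]
The first is a construction: it turns seed data into a representation of the clone graph. The second is a reduction: it turns a clone representation into seed data.

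\emph{Upper bound on $\Cmin$.} Take a representation $(P_v)$ of $F$ and a faithful $\sigma$ with $w_\sigma(P_v)\ge t\,k_v$ whenever $k_v>0$. Let $N$ be a common multiple of the nonzero $k_j$, and work on $\cH\otimes\mathbb C^N$ with center $\sigma\otimes I/N$. For each $v$ with $k_v>0$, split $I_N$ into $k_v$ mutually orthogonal coordinate projections $E_{v,1},\dots,E_{v,k_v}$, each of rank $N/k_v$. Assign the $c$-th copy of $v$ the projection $P_v\otimes E_{v,c}$. The conditions for a representation of $F[k]$ are then checked as follows.
\begin{itemize}
\item Copies of the same vertex are pairwise non-adjacent, and they are orthogonal because the $E_{v,c}$ are.
\item Copies of distinct non-adjacent seed vertices are orthogonal because $P_vP_w=0$.
\item All other pairs are unconstrained.
\end{itemize}
Since the center is a tensor product that commutes with the second factor, the Gibbs weight factorizes. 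We have $\ln(\sigma\otimes I/N)=\ln\sigma\otimes I-\ln N$, and compressing to $P_v\otimes E_{v,c}$ gives
\[
 Z_0\bigl(\sigma\otimes I/N,\;P_v\otimes E_{v,c}\bigr)=w_\sigma(P_v)\cdot\frac{\rk E_{v,c}}{N}=\frac{w_\sigma(P_v)}{k_v}\ \ge\ t .
\]
Taking the supremum over seed data then gives $1/\vt_1(F[k])\ge t_W(F[k])$.

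\emph{Lower bound on $\Sigma$.} Take any representation $(P_{v,c})$ of $F[k]$ and a faithful $\sigma$. For each seed vertex $v$ with $k_v>0$, let $P_v$ be the projection onto the span of the ranges of its copies, and set $P_v=0$ if $k_v=0$. Then $(P_v)$ represents $F$, with deleted vertices represented by $0$ as the definition of $t_H$ permits. Indeed, if $v\ne w$ are non-adjacent in $F$, every copy of $v$ is non-adjacent to every copy of $w$, so the spans are orthogonal. The copies of $v$ are mutually orthogonal subprojections $Q_c=P_{v,c}\le P_v$, so it remains to prove superadditivity
\[
 \sum_c h_\sigma(Q_c)\ \le\ h_\sigma(P_v).
\]
I would derive this for general $s$ from \cref{lem:tower} and \cref{lem:orthogonal-sum}. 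Put $z=Z_s(\sigma,P_v)$ and apply \cref{lem:orthogonal-sum} inside $\ran P_v$ to the faithful state $K_s(\sigma,P_v)/z$. The tower identity then gives $\sum_cZ_s(\sigma,Q_c)/z\le1$. Hence
\[
 h_\sigma(P_v)\ \ge\ k_v\min_c h_\sigma(P_{v,c}),
\]
so $\min_{v:\,k_v>0}h_\sigma(P_v)/k_v\ge\min_{(v,c)}h_\sigma(P_{v,c})$. Taking suprema yields $1/\vt_2(F[k])\le t_H(F[k])$.

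\emph{Main obstacle.} The delicate step is the first construction. A naive center $\sigma\otimes\tau$ on $\cH\otimes\mathbb C^m$ cannot give every vertex the per-copy factor $1/k_v$ at once: vertices with different multiplicities would need different weights on the same flag. Enlarging the flag space to dimension $N$ and giving each copy a block of rank $N/k_v$ resolves this. I would also check that vertices with $k_v=0$ are simply dropped in the clone representation, and that the zero projections allowed in the seed cause no problem in the reduction direction.
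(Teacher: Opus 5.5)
Your proof is correct and follows the paper's route. You use the same flag-splitting construction for the Gibbs bound: the paper takes a direct sum over $\prod_{v:k_v>0}\{1,\dots,k_v\}$, while your $\cH\otimes\mathbb C^N$ version is the one it uses in \cref{prop:bridge-exact}; for the harmonic bound you use the same merge-the-copies reduction with superadditivity of $h_\sigma$. The only differences are that you pass through $\Sigma=\vt_2$ rather than a dominator $T$, and you derive superadditivity from \cref{lem:tower} and \cref{lem:orthogonal-sum}, which gives it for every $s\in[-1,1]$, instead of from the block-inverse inequality $(M^{-1})_{aa}\ge(M_{aa})^{-1}$.
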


\begin{proof}
\emph{Upper bound on $\Cmin$.}  Start from a representation $(P_v)$ of $F$
and a state $\sigma$.  Let $A=\prod_{v:k_v>0}\{1,\dots,k_v\}$, work on
$\bigoplus_{f\in A}\cH$, and put
\[
 \sigma'=\frac1{|A|}\bigoplus_{f\in A}\sigma,\qquad
 P'_{v,a}=\bigoplus_{f\in A:\,f(v)=a}P_v
\]
(zero in the other summands).  Copies of one vertex live in disjoint sets of
summands, so they are orthogonal, and non-adjacent seed vertices stay
orthogonal.  So $(P'_{v,a})$ represents $F[k]$.  Functional calculus on the
direct sum gives $w_{\sigma'}(P'_{v,a})=w_\sigma(P_v)/k_v$.  Now apply the
projection formula~\eqref{eq:projection-formula} at $\alpha=1$ together with
$\vt_1=2^{\Cmin}$.

\emph{Lower bound on $\Sigma$.}  Take a compatible family $(\rho_{v,a})$ for
$F[k]$ and $T\ge\rho_{v,a}$, put $t=\Tr T$ and $\sigma=T/t$ (regularized if
needed), and let $P_{v,a}$ be the supports.  As in
\cref{prop:endpoints}, $h_\sigma(P_{v,a})\ge1/t$.  Copies of one vertex are
non-adjacent, so $P_v=\sum_aP_{v,a}$ is a projection, and $(P_v)$ (with
$P_v=0$ for deleted $v$) represents $F$.  The harmonic weight is
superadditive over orthogonal blocks: for a positive definite block matrix
$M$, each diagonal block of $M^{-1}$ dominates the inverse of the
corresponding block of $M$.  Applied to $M=P_v\sigma^{-1}P_v$ this gives
$h_\sigma(P_v)\ge\sum_ah_\sigma(P_{v,a})\ge k_v/t$.  Hence
$t_H(F[k])\ge1/t$.
\end{proof}

The upper bound on $\Cmin$ needs one good witness.  The lower bound on
$\Sigma$ needs an upper bound on $t_H$ valid for \emph{every} representation,
in every dimension.  That is the job of the next subsection.

\subsection{Harmonic weights as moments}

Let $(P_v)$ represent $F$ on $\cH$ and let $\sigma$ be faithful.  Fix an
orthonormal basis $(e_i)$ of $\cH$, write
$\vect(M)=\sum_{i,j}\langle e_i,Me_j\rangle\,e_i\otimes e_j$, and take all
transposes in this basis.  Put
$\Omega=\vect(\sqrt\sigma)\in\cH\otimes\cH$, let $F_v$ be the orthogonal
projection onto $\sigma^{-1/2}\ran P_v$, and let $E_v=F_v^{\mathsf T}$.

\begin{lemma}[harmonic weight as a moment]\label[lemma]{lem:harmonic-moment}
For every $v$,
\[
 (P_v\otimes E_v)\Omega=(I\otimes E_v)\Omega
 \qquad\text{and}\qquad
 \langle\Omega,(I\otimes E_v)\Omega\rangle=h_\sigma(P_v).
\]
\end{lemma}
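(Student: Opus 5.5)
The plan is to translate both claims into matrix identities via the standard vectorization rule $(A\otimes B)\vect(M)=\vect(AMB^{\mathsf T})$, valid when transposes are taken in the fixed basis $(e_i)$. With $\Omega=\vect(\sqrt\sigma)$ and $E_v=F_v^{\mathsf T}$ we get $(I\otimes E_v)\Omega=\vect(\sqrt\sigma F_v)$ and $(P_v\otimes E_v)\Omega=\vect(P_v\sqrt\sigma F_v)$. The first claim therefore becomes $P_v\sqrt\sigma F_v=\sqrt\sigma F_v$, i.e.\ $\ran(\sqrt\sigma F_v)\subseteq\ran P_v$. This holds because $\ran F_v=\sigma^{-1/2}\ran P_v$, so $\sqrt\sigma$ maps $\ran F_v$ back into $\ran P_v$.

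For the second claim, $\langle\Omega,(I\otimes E_v)\Omega\rangle=\langle\vect\sqrt\sigma,\vect(\sqrt\sigma F_v)\rangle=\Tr(\sqrt\sigma\sqrt\sigma F_v)=\Tr\sigma F_v$. It remains to identify $\Tr\sigma F_v$ with $h_\sigma(P_v)=\Tr(P_v\sigma^{-1}P_v)^{-1}$, the inverse taken on $\ran P_v$. Let $W:\ran P_v\to\cH$ be the inclusion, so $\sigma^{-1/2}W$ has range $\ran F_v$. The projection onto this range is
\[
F_v=\sigma^{-1/2}W\,(W^*\sigma^{-1}W)^{-1}W^*\sigma^{-1/2},
\]
and $W^*\sigma^{-1}W$ is invertible on $\ran P_v$ because $\sigma$ is faithful. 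Then, by cyclicity,
\[
\Tr\sigma F_v=\Tr\bigl(W^*\sigma^{-1/2}\sigma\sigma^{-1/2}W\,(W^*\sigma^{-1}W)^{-1}\bigr)=\Tr(W^*\sigma^{-1}W)^{-1}=Z_{-1}(\sigma,P_v).
\]

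The only delicate point is keeping the vectorization convention consistent, especially whether the transpose lands on the second tensor factor and on which side $F_v$ multiplies. Defining $E_v$ as the transpose of $F_v$ is exactly what makes $I\otimes E_v$ act as right multiplication by $F_v$, and a one-line check on basis matrices $\ket{e_i}\!\bra{e_j}$ confirms the rule. The rest is the routine projection formula above.
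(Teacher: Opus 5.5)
Your proof is correct. The first identity is handled exactly as in the paper: the vectorization rule turns it into $P_v\sqrt\sigma F_v=\sqrt\sigma F_v$, which holds because $\sqrt\sigma$ maps $\ran F_v$ onto $\ran P_v$.

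For the second identity you take a different route. The paper shows that $\sqrt\sigma F_v\sqrt\sigma$ is the shorted operator $\sigma_{/P_v}$ through its extremal characterization. That operator lies below $\sigma$ with range in $\ran P_v$, and any $0\le A\le\sigma$ with range in $\ran P_v$ satisfies $\sigma^{-1/2}A\sigma^{-1/2}\le F_v$. The paper then uses the identification $\Tr\sigma_{/P}=Z_{-1}(\sigma,P)$, which it quotes from Anderson and Trapp.

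You instead write $F_v$ explicitly as the range projection $B(B^*B)^{-1}B^*$ of the injective map $B=\sigma^{-1/2}W$. Cyclicity then takes you to the defining formula $\Tr(P_v\sigma^{-1}P_v)^{-1}$. Your argument therefore needs no facts about shorted operators and is self-contained relative to the definition of $Z_{-1}$. The paper's route, in turn, exhibits the moment as the trace of the largest operator below $\sigma$ supported in $\ran P_v$. That is the form used in the order-two endpoint identity and in the bridge.

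The two arguments fit together. Your formula also gives the operator identity $\sqrt\sigma F_v\sqrt\sigma=W(W^*\sigma^{-1}W)^{-1}W^*=K_{-1}(\sigma,P_v)$. Combined with the paper's maximality argument, this reproves that $K_{-1}(\sigma,P)$ is the shorted operator.
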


\begin{proof}
With $(A\otimes B)\vect(M)=\vect(AMB^{\mathsf T})$, the first identity says
$P_v\sqrt\sigma F_v=\sqrt\sigma F_v$.  This holds because $\sqrt\sigma F_v$
has range $\ran P_v$.  For the second, $\sqrt\sigma F_v\sqrt\sigma$ lies below
$\sigma$ and has range in $\ran P_v$.  If $0\le A\le\sigma$ has range in
$\ran P_v$, then $\sigma^{-1/2}A\sigma^{-1/2}$ is a contraction supported in
$\ran F_v$, hence at most $F_v$.  So
$\sqrt\sigma F_v\sqrt\sigma=\sigma_{/P_v}$, and its trace is the moment.
\end{proof}

So harmonic weights are expectations in the vector $\Omega$ of two commuting
families of projections, $P_v\otimes I$ and $I\otimes E_w$.  The first family
obeys the orthogonality relations of $F$.  The second obeys no relations of its
own, but it is tied to the first by the identity of
\cref{lem:harmonic-moment}.  This is exactly the setting of noncommutative
moment relaxations \cite{NPA2008,Mghirbi2026}.  We form the Gram matrix of the
$409$ vectors
\[
 \Omega,\quad (P_a\otimes I)\Omega,\quad (I\otimes E_b)\Omega,\quad
 (P_a\otimes E_b)\Omega\ \ (a=b\text{ or }a\sim b).
\]
The pairs with $a\not\simeq b$ are left out because those vectors vanish:
$(P_a\otimes E_b)\Omega=(P_aP_b\otimes E_b)\Omega=0$.  The Gram matrix is
positive semidefinite, its corner entry is $\Tr\sigma=1$, and the entry
$\langle\Omega,(I\otimes E_v)\Omega\rangle$ is $h_\sigma(P_v)$.  Many entries
coincide or vanish in every model.  We record these coincidences in a table
of $1009$ moment keys, using four rewriting rules together with symmetry
(\cref{app:certificate}).  Maximizing $t$ subject to positivity, the corner
normalization and $h_v\ge tk_v$ gives an upper bound on $t_H(F[k])$ in every
dimension.  A rational dual solution then certifies a rational upper bound.

\subsection{Certified values}

\begin{table}[t]
\centering
\begin{tabular}{@{}lcrccc@{}}
\toprule
graph & profile $(k_0,k_1,k_2,k_4)$ & vertices & $2^{\Cmin}\le$ & $\Sigma\ge$ & certified gap $\ge$ (bits)\\
\midrule
$G_{32}$ & $(2,1,1,0)$ & 32 & $6.186983\ldots$ & $6.194232\ldots$ & $0.001689$\\
$G_{44}$ & $(1,2,3,1)$ & 44 & $8.949075\ldots$ & $8.962394\ldots$ & $0.002145$\\
$G_{72}$ & $(4,3,2,0)$ & 72 & $105/8$          & $13.191800\ldots$ & $0.007324$\\
\bottomrule
\end{tabular}
\caption{Certified bounds for three clone graphs.  Upper bounds on
$2^{\Cmin}$ come from explicit rational Gibbs witnesses: a direct sum of the
magic-square representation, with center proportional to
$\mathrm{diag}(A^4,B^4,B^4,B^4)$ and $(A,B)=(41,56)$, $(56,51)$, $(3,4)$, and
one-dimensional clique representations (see the text).
Lower bounds on $\Sigma$ come from rational dual certificates for the moment
relaxation.  The gap is $\log_2$ of the ratio of the two bounds, rounded
down.}
\label{tab:certified}
\end{table}

\Cref{tab:certified} lists the certified bounds.  For $G_{32}$ the certified
rational bounds, not the exact values of $t_W$ and $t_H$, are
\begin{equation}\label{eq:g32-rationals}
 t_W(G_{32})\ \ge\ \frac{5225365592240894}{32329248967670751}
 \ >\ \frac{15147154828811}{93824992214030}\ \ge\ t_H(G_{32}).
\end{equation}
For $G_{44}$ and $G_{72}$ the certified harmonic bounds are
$t_H\le31406288103814/281475531618259$ and
$t_H\le21337116511217/281474976708840$.

\begin{proof}[Proof of \cref{thm:C}]
Combine~\eqref{eq:g32-rationals} with \cref{prop:bridge}.  The reciprocals of
the two rationals, $6.186983\ldots$ and $6.194232\ldots$, are the bounds in
\cref{thm:C}.
\end{proof}

\paragraph{The Gibbs witnesses.}
Each upper bound on $2^{\Cmin}$ comes from one representation of the seed on
$\mathbb C^4\oplus\mathbb C^m$.  Vertex $v$ gets
$Q_v\oplus\bigoplus_{C\ni v}\proj C$, where $C$ runs over $m$ weighted
cliques of $F$, each with its own basis vector $\ket C$ ($m=18$, $10$ and $9$
cliques, all of size five, for $G_{32}$, $G_{44}$ and $G_{72}$).
Non-adjacent vertices share no clique, so this is a representation.  The
center is
\[
 \sigma=\frac{w}{A^4+3B^4}\,\mathrm{diag}(A^4,B^4,B^4,B^4)\ \oplus\
 \bigoplus_Cw_C\proj C
\]
in the basis $\ket{00},\ket{01},\ket{10},\ket{11}$ of $\mathbb C^4$, with
$(A,B)=(41,56)$, $(56,51)$ and $(3,4)$ and positive rational weights $w$ and
$w_C$.  If $\sigma$ is diagonal in an orthonormal basis $(\ket i)$ with
eigenvalues $\lambda_i$, the Gibbs weight of a rank-one projection $Q$ is
$Z_0(\sigma,Q)=\prod_i\lambda_i^{\bra iQ\ket i}$.  Since
$4\bra{00}Q_v\ket{00}=j_v$, the magic-square block contributes the rational
number $wA^{j_v}B^{4-j_v}/(A^4+3B^4)$ to $w_\sigma(P_v)$, each clique
containing $v$ contributes $w_C$, and Gibbs weights add over orthogonal
blocks.  By homogeneity,
$t_W(F[k])\ge\min_{k_v>0}w_\sigma(P_v)/(k_v\Tr\sigma)$, and this minimum is
the certified value, for $G_{32}$ the left rational
in~\eqref{eq:g32-rationals}.  The stored weights have $\Tr\sigma=999999999/10^9$ for
$G_{32}$ and $G_{44}$ and $\Tr\sigma=1$ for $G_{72}$; for $G_{32}$ the
magic-square block has weight $w=345767129/10^9$.

\paragraph{Where the separations sit.}  In all three graphs the Lov\'asz number
equals the independence number, and the fractional packing number is larger:
\[
 \vt=6,\ \alpha^*=\tfrac{32}5\ (G_{32});\qquad
 \vt=8,\ \alpha^*=\tfrac{37}4\ (G_{44});\qquad
 \vt=12,\ \alpha^*=\tfrac{72}5\ (G_{72}).
\]
So the certified values of \cref{tab:certified} lie strictly inside the
interval $[\vt,\alpha^*]$ that the Duan--Winter chain allows, and the gap
between orders one and two is a small part of it.  The exact positivity of
each dual matrix $Z$ is proved in integers (\cref{app:certificate}); in
floating point its smallest eigenvalue is about $2.0\times10^{-5}$,
$6.2\times10^{-7}$ and $2.0\times10^{-5}$ for $G_{32}$, $G_{44}$ and $G_{72}$.
These margins only guided the rounding.  The values of $\vt$ and $\alpha^*$
are checked exactly by the verification script
\texttt{theorem\_c\_context.py} (see Data and code availability): an
independent set and a rational dual matrix for $\vt$, and a rational packing
and clique cover of equal value for $\alpha^*$.

\section{Order two versus the projective rank}\label{sec:projective}

\subsection{The flat bound}

Let $\xi_f$ be the projective rank of Man\v{c}inska and Roberson
\cite{MancinskaRoberson2016}, the infimum of $d/r$ over
$d/r$-representations: rank-$r$ projections on $\mathbb C^d$, one for each
vertex, with adjacent vertices orthogonal.  A $d/r$-representation $(P_x)$ of
$\overline G$ is a representation of $G$ in the sense of
\cref{def:representation}, and the flat witness $\rho_x=P_x/r$,
$\sigma=I/d$ has radius $\log(d/r)$ at every order (\cref{lem:flat}).
Hence
\begin{equation}\label{eq:flat-projective}
 \vt_\alpha(G)\le\xi_f(\overline G)\qquad(0\le\alpha\le2),
\end{equation}
and in particular $\Sigma(G)\le\xi_f(\overline G)$.  This is at least as strong as
the bound $\alpha^*$ of \cref{prop:endpoints}(iii), since
$\xi_f(\overline G)\le\alpha^*(G)$ \cite{LiZuiddam2020}.  Equality
$\Sigma(G)=\xi_f(\overline G)$ holds whenever $\vt(G)=\alpha^*(G)$.  It
holds on $C_5$, where both sides equal $\frac52$ (by
\cref{thm:pentagon-half}, \eqref{eq:flat-projective} and
$\alpha^*(C_5)=\frac52$), and on the complement of the Hadamard graph $H_n$
in Duan and Winter's example, where both equal $n$
\cite[Section~V]{DuanWinter2016}: there $\Sigma=n$, and the vectors of $H_n$
give an $n/1$-representation of $H_n$.  \Cref{thm:D} shows that it can fail.

\subsection{The order-two bridge is exact}

For the clone graphs of \cref{sec:small}, the lower bound on $\Sigma$ in
\cref{prop:bridge} is an equality.

\begin{proposition}\label[proposition]{prop:bridge-exact}
Let $k$ be a profile with at least one positive entry.  If $(P_v)$ and
$\sigma$ are as in the definition of $t_H(F[k])$, then
\[
 \Sigma(F[k])\le\max_{v:\,k_v>0}\frac{k_v}{h_\sigma(P_v)} .
\]
Consequently $\Sigma(F[k])=1/t_H(F[k])$.
\end{proposition}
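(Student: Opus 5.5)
Given a representation $(P_v)$ of the seed $F$ and a faithful state $\sigma$, the plan is to build a compatible family for $F[k]$ and an operator $T$ dominating all its states, with $\Tr T$ at most the right-hand side; this gives the stated upper bound on $\Sigma$. Combined with the lower bound $\Sigma(F[k])\ge 1/t_H(F[k])$ of \cref{prop:bridge}, and taking the infimum over $(P_v),\sigma$, it yields equality. Write $c=\min_{v:\,k_v>0}h_\sigma(P_v)/k_v$, so the target is $\Tr T\le 1/c$. If $c=0$ there is nothing to prove, so I assume $c>0$.

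The construction mirrors the cloning step in the proof of \cref{prop:bridge}, but at order two. Let $A=\prod_{v:k_v>0}\{1,\dots,k_v\}$, work on $\bigoplus_{f\in A}\cH$, and set
\[
 \sigma'=\frac1{|A|}\bigoplus_{f\in A}\sigma,\qquad P'_{v,a}=\bigoplus_{f:\,f(v)=a}P_v .
\]
As checked there, $(P'_{v,a})$ represents $F[k]$: copies of one vertex sit in disjoint summands, and seed orthogonality is inherited. The shorted operator is additive over orthogonal direct summands, and the compression is homogeneous. The set $\{f:f(v)=a\}$ has $|A|/k_v$ elements, so
\[
 h_{\sigma'}(P'_{v,a})=\frac{|A|}{k_v}\cdot\frac{h_\sigma(P_v)}{|A|}=\frac{h_\sigma(P_v)}{k_v}\ge c .
\]

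Next I apply the order-two endpoint argument of \cref{prop:endpoints}(i). Set
\[
 \rho_{v,a}=\frac{\sigma'_{/P'_{v,a}}}{\Tr\sigma'_{/P'_{v,a}}}\qquad\text{and}\qquad T=\frac{\sigma'}{c}.
\]
These states are supported in $\ran P'_{v,a}$, so they form a compatible family. Each satisfies $\rho_{v,a}\le\sigma'/\Tr\sigma'_{/P'_{v,a}}\le T$, and $\Tr T=1/c$. Hence $\Sigma(F[k])\le 1/c=\max_{k_v>0}k_v/h_\sigma(P_v)$.

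One could shortcut this by invoking \eqref{eq:projection-formula} at $\alpha=2$ with $\vt_2=\Sigma$. Two small points need care:
\begin{itemize}[itemsep=1pt]
\item \textbf{Deleted vertices.} The definition of $t_H$ allows $P_v=0$ for deleted vertices. These simply do not appear in $F[k]$, so they cause no trouble.
\item \textbf{Degenerate projections.} $P'_{v,a}$ is nonzero whenever $P_v\ne0$. If some $P_v=0$ with $k_v>0$, then $h_\sigma(P_v)=0$ and the bound is vacuous.
\end{itemize}

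For the equality, the upper bound gives $\Sigma(F[k])\le 1/c$ for every admissible $(P_v),\sigma$. Taking the infimum of $1/c$, i.e.\ the supremum of $c$, gives $\Sigma\le 1/t_H$, and \cref{prop:bridge} supplies the reverse inequality.

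The only step that needs genuine checking is the identity $h_{\sigma'}(P'_{v,a})=h_\sigma(P_v)/k_v$. It requires $(\sigma'^{-1})$ compressed to a direct sum of blocks to split blockwise. This holds because $\sigma'$ is block diagonal with respect to the same decomposition.
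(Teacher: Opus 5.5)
Your proof is correct and is essentially the paper's argument. Your space $\bigoplus_{f\in A}\cH$ is $\cH\otimes\mathbb C^{|A|}$ with $\sigma'=\sigma\otimes I/|A|$ and $P'_{v,a}=P_v\otimes\Pi_{\{f:f(v)=a\}}$. That is the paper's construction with $L=|A|$ (the paper takes any common multiple of the positive $k_v$) and a different equal-size partition of the basis. Where the paper invokes the projection formula at $\alpha=2$ with $\vt_2=\Sigma$, you unpack that step into the explicit dominating operator $T=\sigma'/c$, just as the paper does when it writes out the witness for $G_{1440}$.
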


\begin{proof}
Let $L$ be a common multiple of the positive $k_v$.  On $\cH\otimes\mathbb
C^L$ put $\sigma'=\sigma\otimes I/L$.  For each $v$ with $k_v>0$, split the
standard basis of $\mathbb C^L$ into $k_v$ consecutive blocks of $L/k_v$
vectors, let $\Pi_{v,a}$ project onto the $a$-th block, and put
$P'_{v,a}=P_v\otimes\Pi_{v,a}$.  Copies of one vertex get orthogonal blocks,
and non-adjacent seed vertices have orthogonal $P_v$, so $(P'_{v,a})$
represents $F[k]$.  Since
$P'_{v,a}\sigma'^{-1}P'_{v,a}=P_v\sigma^{-1}P_v\otimes L\,\Pi_{v,a}$, the
shorted operator factorizes, $\sigma'_{/P'_{v,a}}=\sigma_{/P_v}\otimes
\Pi_{v,a}/L$, and
\[
 h_{\sigma'}(P'_{v,a})=h_\sigma(P_v)\,\frac{\Tr\Pi_{v,a}}L=\frac{h_\sigma(P_v)}{k_v}.
\]
The projection formula~\eqref{eq:projection-formula} at $\alpha=2$, with
$\vt_2=\Sigma$ (\cref{prop:endpoints}), gives the displayed bound.  Taking
the supremum over $(P_v)$ and $\sigma$ gives $\Sigma(F[k])\le1/t_H(F[k])$,
and \cref{prop:bridge} gives the reverse inequality.
\end{proof}

\subsection{A rank inequality for the magic square}

In this subsection $\tau=\Tr/d$ is the normalized trace on $\mathbb C^d$.  We
use three facts about rank: it is subadditive, $\rk(XY)\le\min(\rk X,\rk Y)$,
and multiplying by an invertible operator does not change it.  Number the six
contexts of the seed $F$ as $c=0,\dots,5$: the rows are $0,1,2$ from the top
and the columns are $3,4,5$ from the left (\cref{fig:magic}).  As before,
vertex $0$ is the row-$0$ vertex with signs $(+,+,+)$.

\begin{lemma}[anticommutators bound traces]\label[lemma]{lem:anticomm}
If $X$ and $Y$ are Hermitian unitaries on $\mathbb C^d$, then
$\lvert\Tr X\rvert\le\rk(XY+YX)$.
\end{lemma}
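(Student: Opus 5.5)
Let $N=XY+YX$ and $r=\rk N$. The plan is to find a subspace of dimension at least $d-r$ on which $X$ maps vectors into a space orthogonal to themselves, and then bound the trace by the leftover part.

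Let $K=\ker N$, which has dimension $d-r$. For $v\in K$ we have $XYv=-YXv$. The difficulty is that $K$ need not be invariant under $X$ or $Y$.

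Instead, consider $U=XYXY$, a unitary. Since $X^2=Y^2=I$, we have $(XY)^2=XY\,XY$, and
\[
 (XY)^2+I=XY\,XY+XY\,YX=XY(XY+YX)=XY\,N .
\]
Hence $\rk\bigl((XY)^2+I\bigr)\le r$, so $W=XY$ is a unitary with $W^2=-I$ on a space of codimension at most $r$.

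The observation that $\lvert\Tr X\rvert\le\rk(XY+YX)$ should follow from a direct rank count. Write $\Tr X=\Tr(XY\cdot Y)$, or better use $X=-YXY+NY$, which gives
\[
 \Tr X=-\Tr(YXY)+\Tr(NY)=-\Tr X+\Tr(NY),
\]
so $2\Tr X=\Tr(NY)$.

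It remains to bound $\lvert\Tr(NY)\rvert$. Since $N$ is Hermitian with $\nrm N\le2$ and rank $r$, and $Y$ is unitary, we get $\lvert\Tr(NY)\rvert\le\nrm{N}_1\le r\nrm N_\infty\le 2r$. Therefore $\lvert\Tr X\rvert\le r$, which is the claim.

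So the proof is a two-line argument: the identity $2X=XY Y+Y Y X$ rearranged as $2X=NY+(XY-YX)Y$. More carefully, $NY=XYY+YXY=X+YXY$, and $\Tr(YXY)=\Tr X$ by cyclicity and $Y^2=I$. This gives $\Tr(NY)=2\Tr X$. Then Hölder, $\lvert\Tr(NY)\rvert\le\nrm N_1\nrm Y_\infty$, combined with $\nrm N_1\le\rk(N)\,\nrm N_\infty$ and $\nrm N_\infty\le\nrm{XY}+\nrm{YX}=2$, finishes it.

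The only step needing care is the operator-norm bound $\nrm N\le2$, which holds because both $XY$ and $YX$ are unitary. None of the heavier facts from earlier sections are needed.
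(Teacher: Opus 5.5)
Your argument is correct and is essentially the paper's proof. The paper also notes that $(XY+YX)Y=X+YXY$ has trace $2\Tr X$, rank $\rk(XY+YX)$ and norm at most $2$, then bounds the trace by rank times norm, which is your step $\lvert\Tr(NY)\rvert\le\nrm N_1\le 2r$; the opening detour through $(XY)^2+I=XY\,N$ is correct but unused and can be dropped.
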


\begin{proof}
The operator $X+YXY=(XY+YX)Y$ has the rank of $XY+YX$, norm at most $2$ and
trace $2\Tr X$.  The trace of an operator is at most its rank times its norm
in absolute value.
\end{proof}

\begin{lemma}[rank inequality]\label[lemma]{lem:rank-ineq}
Let $(P_v)$ be projections on $\mathbb C^d$, one for each vertex of $F$, with
$P_vP_w=0$ whenever $v\ne w$ are non-adjacent; zero projections are allowed.
For each context $c$ put $S_c=\sum_{v\in c}P_v$.  Then
\begin{equation}\label{eq:rank-ineq}
 4\tau(P_0)+7\tau(S_0)+11\tau(S_1)+9\tau(S_2)
 +11\tau(S_3)+11\tau(S_4)+9\tau(S_5)\le59,
\end{equation}
with equality for the magic-square representation $(Q_v)$.
\end{lemma}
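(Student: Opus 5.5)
The plan is to turn the operator-solution argument for the magic square into rank bookkeeping, with every failure of an exact identity charged to a \emph{defect} $I-S_c$, which has rank $d(1-\tau(S_c))$. Since the four vertices of a context are pairwise non-adjacent, each $S_c$ is a projection. Tightness on $(Q_v)$ is a direct check. There $S_c=I$ and $\tau(Q_0)=\tfrac14$, so the left side is $1+7+11+9+11+11+9=59$.

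After moving terms, \eqref{eq:rank-ineq} is equivalent to
\[
 4\,d\,\tau(P_0)\ \le\ d+\sum_c m_c\bigl(d-\rk S_c\bigr),
\]
with multiplicities $m=(7,11,9,11,11,9)$ on contexts $0,\dots,5$, up to redistributing the $S_0$ term; I will fix the exact bookkeeping at the end.

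\emph{Step 1: cell observables.} For a context $c$ and a cell $\ell\in c$, put $X_{c,\ell}=\sum_{v\in c}\mathrm{sign}_v(\ell)P_v+\epsilon(I-S_c)$ with a fixed sign $\epsilon$. This is a Hermitian unitary, and the three observables of one context commute. Their product is $\pm I$ (the sign from \cref{fig:magic}) on $\ran S_c$, so it differs from $\pm I$ by an operator of rank at most $d-\rk S_c$.

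\emph{Step 2: consistency across contexts.} A cell $\ell$ lies in one row $r$ and one column $c'$. Vertices giving opposite signs to $\ell$ are orthogonal, so $X_{r,\ell}$ and $X_{c',\ell}$ agree on $\ran(S_rS_{c'}\cdots)$. More precisely, $(X_{r,\ell}-X_{c',\ell})S_r$ vanishes on $\ran S_{c'}$. Hence $\rk(X_{r,\ell}-X_{c',\ell})\le(d-\rk S_r)+(d-\rk S_{c'})$.

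\emph{Step 3: robust anticommutation.} Here I follow Ji's proof of \cite[Lemma~6]{Ji2013} that $X_{00}$ and $X_{11}$ anticommute. The chain uses commutation inside contexts, the row and column product relations, and the $-1$ in the last column. Each time a cell changes its context label I pay the Step~2 rank, and each time a product relation is used I pay the Step~1 rank. Subadditivity of rank, $\rk(XY)\le\min$, and invariance under multiplication by unitaries then give
\[
 \rk\bigl(X_{0,\ell}X'+X'X_{0,\ell}\bigr)\le\sum_c n_c\,(d-\rk S_c)
\]
for a suitable partner $X'$ in a different row and column. \Cref{lem:anticomm} converts this into a bound on $|\Tr X_{0,\ell}|$. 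The same is needed for the product $X_{0,00}X_{0,01}$, which equals $X_{0,02}$ up to the row-$0$ defect.

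\emph{Step 4: assemble.} Expand
\[
 4P_0=S_0+(X_{00}+X_{01}+X_{00}X_{01})S_0
\]
in row-$0$ observables, take traces, and replace each trace by the anticommutator bound from Step~3. Adding the three chains gives the coefficients. I expect Step~3 to be the main obstacle. The difficulty is not validity but economy: choosing, for each of the three cells, the anticommutation partner and the order of rewriting so that the accumulated defect multiplicities come out exactly as $7,11,9,11,11,9$. The rank constant must also stay sharp, so every step must lose at most one defect rank and never a factor~$2$. I would first try partners in row~$1$ for two cells and row~$2$ for the third, tune by symmetry, and check the final count against the equality case.
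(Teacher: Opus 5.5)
Your outline is the paper's own: complete each context to commuting Hermitian unitaries, show that the row and column observables of a cell agree off the two defects, transport Ji's anticommutation argument with rank costs, bound traces by \cref{lem:anticomm}, and expand the vertex-$0$ projection. The gap is that the content of the lemma is exactly the coefficient $4$ and the coefficients $7,11,9,11,11,9$, and you defer precisely the step that produces them. So the only part actually proved is the equality case. That case cannot serve as a check either: all $\delta_c=1-\tau(S_c)$ vanish there, so every choice of nonnegative coefficients is tight. Worse, your set-up as written would miss the stated numbers. Your Step~1 accepts product relations that hold only up to a defect. With one sign $\epsilon$ for all defects, some product relation really does fail on a defect. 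For $\epsilon=+1$ it is the column-$5$ relation; in the paper's chain this adds $\delta_5$ to each of the three trace bounds and turns the coefficient $9$ into $12$. For $\epsilon=-1$ the other five relations fail, and you also lose the saving on $\delta_0$ described next. The paper instead adds each defect to one of the four \emph{admissible} projections of its context. Then every row and column product relation holds exactly, and the only cost ever paid is $\eta_{ij}=\rk(U_{ij}-V_{ij})/d\le\delta_i+\delta_{3+j}$. Putting the row-$0$ defect into $P_0$ gives $4\tau(P_0)+4\delta_0=1+\tau(A)+\tau(B)+\tau(AB)$, which is where $7=11-4$ comes from. Your rule that no step may lose a factor $2$ is also not the right constraint: the paper's proof loses one and still reaches the stated coefficients. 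Finally, your intermediate claim in Step~2, that $(X_{r,\ell}-X_{c',\ell})S_r$ vanishes on $\ran S_{c'}$, is false in general (one row vertex and one column vertex with the same sign on the cell and non-orthogonal ranges give a counterexample). What holds is agreement on $\ran S_r\cap\ran S_{c'}$. This follows from $U_+V_-=U_-V_+=0$ via $\nrm{U_+z}^2=\langle U_+z,V_+z\rangle=\nrm{V_+z}^2$, and it gives the rank bound you want.

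The missing bookkeeping is short. Let $A,B,D,E,H,J$ be the row observables of cells $(0,0),(0,1),(1,0),(1,1),(2,0),(2,1)$. Put $e_j=\sum_i\eta_{ij}$, $\delta=\sum_c\delta_c$ and $\eps=e_0+e_1+e_2\le3\delta$. The exact row relations make the last-column row observables $AB$, $DE$ and $HJ$. Swapping $V$'s for $U$'s in the exact column relations gives $\rk(H-AD),\rk(H-DA)\le de_0$, $\rk(J-BE),\rk(J-EB)\le de_1$ and $\rk(HJ+ABDE),\rk(HJ+DEAB)\le de_2$. From $HJ-ADBE=(H-AD)J+AD(J-BE)$ you get $\rk\bigl(A(DB+BD)E\bigr)\le d\eps$, and likewise $\rk\bigl(D(AE+EA)B\bigr)\le d\eps$. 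So $\lvert\tau(B)\rvert\le\eps$ and $\lvert\tau(A)\rvert\le\eps$, with both partners in row~$1$. For $AB$ use $\{AB,D\}=A\{B,D\}+[D,A]B$ and $\{AB,E\}=A[B,E]+\{A,E\}B$. The commutators cost $2(\eta_{00}+\eta_{10})$ and $2(\eta_{01}+\eta_{11})$, via commutation within columns $0$ and $1$; this is the factor $2$. Averaging the two bounds gives $\lvert\tau(AB)\rvert\le\eps+2(\delta_0+\delta_1+\delta_3+\delta_4)$. Hence $4\tau(P_0)+4\delta_0\le1+9\delta+2(\delta_0+\delta_1+\delta_3+\delta_4)$, which rearranges to \eqref{eq:rank-ineq}.
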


The four vertices of a context are pairwise non-adjacent, so each $S_c$ is a
projection, and $\delta_c=1-\tau(S_c)$ is the normalized rank of its defect
$I-S_c$.  The coefficients of the contexts sum to $58$, so
\eqref{eq:rank-ineq} says
\[
 4\tau(P_0)\le1+7\delta_0+11\delta_1+9\delta_2+11\delta_3+11\delta_4+9\delta_5 .
\]
If every context is complete, this is $\tau(P_0)\le\frac14$: the vertex-$0$
outcome takes at most a quarter of the space.  The lemma makes this robust,
with a penalty linear in the defects.  For $(Q_v)$ every context is complete
and $\tau(Q_0)=\frac14$, which gives equality.

\begin{proof}
\emph{Completion.}  In each context, add the defect $I-S_c$ to one of its
four projections; in row $0$, add it to $P_0$.  Each context now carries four
orthogonal projections summing to $I$, labelled by its sign patterns.  For a
cell $(i,j)$, let $U_{ij}$ be the sum of the completed projections of row
$i$, each weighted by the sign it gives to the cell, and define $V_{ij}$ in
the same way from column $j$.  These are Hermitian unitaries.  The three
$U_{ij}$ in a row commute and multiply to $I$.  The three $V_{ij}$ in a
column commute and multiply to $I$, except in the last column, where they
multiply to $-I$.

\emph{Rows and columns agree off the defects.}  Fix a cell $(i,j)$.  Let
$U_\pm$ be the sum of the original projections of row $i$ that give the cell
the sign $\pm$, and $V_\pm$ the same for column $j$.  Two vertices that give
a shared cell opposite signs are non-adjacent, so $U_+V_-=U_-V_+=0$.  Let $z$
lie in the kernels of both defects, $I-S_i$ and $I-S_{3+j}$.  Then
$U_+z+U_-z=z=V_+z+V_-z$.  Put $x=U_+z$ and $y=V_+z$.  Then
$\langle x,z-y\rangle=\langle z,U_+V_-z\rangle=0$ and
$\langle z-x,y\rangle=\langle z,U_-V_+z\rangle=0$, so
\[
 \nrm x^2=\langle x,z\rangle=\langle x,y\rangle=\langle z,y\rangle=\nrm y^2 .
\]
Hence $\nrm{x-y}^2=0$ and $U_{ij}z=2x-z=2y-z=V_{ij}z$.  The two kernels meet
in a subspace of codimension at most $d(\delta_i+\delta_{3+j})$, so
\[
 \eta_{ij}:=\rk(U_{ij}-V_{ij})/d\le\delta_i+\delta_{3+j} .
\]
Put $e_j=\sum_i\eta_{ij}$ and $\eps=e_0+e_1+e_2$.  Summing over the nine
cells gives $\eps\le3\delta$, where $\delta=\sum_c\delta_c$.

\emph{Relations.}  Write $A,B$ for $U_{00},U_{01}$, write $D,E$ for
$U_{10},U_{11}$, and write $H,J$ for $U_{20},U_{21}$.  The $U$'s in a row
commute and multiply to $I$, so the last cells of the three rows carry $AB$,
$DE$ and $HJ$.  In column $0$, $V_{20}=V_{00}V_{10}=V_{10}V_{00}$.  Replacing
each $V_{i0}$ by $U_{i0}$ costs at most $\eta_{i0}$ in normalized rank, so
\[
 \rk(H-AD)/d,\ \ \rk(H-DA)/d\ \le\ e_0,
\]
and in the same way $\rk(J-BE)/d$ and $\rk(J-EB)/d$ are at most $e_1$.  In
the last column $V_{22}=-V_{02}V_{12}=-V_{12}V_{02}$, so
\[
 \rk(HJ+AB\,DE)/d,\ \ \rk(HJ+DE\,AB)/d\ \le\ e_2 .
\]
Since $HJ-AD\,BE=(H-AD)J+AD(J-BE)$, comparing $HJ$ with $AD\,BE$ and with
$-AB\,DE$ gives $\rk(ADBE+ABDE)\le d\eps$.  Now $ADBE+ABDE=A(DB+BD)E$, and
$A$ and $E$ are unitary, so $\rk(BD+DB)\le d\eps$.  In the same way,
$DAEB+DEAB=D(AE+EA)B$ gives $\rk(AE+EA)\le d\eps$.  Finally, $A$ and $D$ lie
in column $0$, where $V_{00}$ and $V_{10}$ commute, so
$\rk(AD-DA)\le2d(\eta_{00}+\eta_{10})\le du_0$ with
$u_0=2(\delta_0+\delta_1+2\delta_3)$.  Likewise $\rk(BE-EB)\le du_1$ with
$u_1=2(\delta_0+\delta_1+2\delta_4)$.

\emph{Traces.}  By \cref{lem:anticomm}, $\lvert\tau(B)\rvert\le\eps$ and
$\lvert\tau(A)\rvert\le\eps$.  Since $A$ and $B$ commute, $AB$ is a Hermitian
unitary.  The identities $\{AB,D\}=A\{B,D\}+[D,A]B$ and
$\{AB,E\}=A[B,E]+\{A,E\}B$, together with \cref{lem:anticomm}, give
\[
 \lvert\tau(AB)\rvert\le\eps+\min(u_0,u_1)\le\eps+2(\delta_0+\delta_1+\delta_3+\delta_4).
\]
The completed row-$0$ projection with signs $(+,+,+)$ is
$\frac14(I+A)(I+B)$, and it equals $P_0+(I-S_0)$.  Taking normalized traces,
\[
 4\tau(P_0)+4\delta_0=1+\tau(A)+\tau(B)+\tau(AB)
 \le1+3\eps+2(\delta_0+\delta_1+\delta_3+\delta_4).
\]
With $\eps\le3\delta$ this is
$4\tau(P_0)\le1+7\delta_0+11\delta_1+9\delta_2+11\delta_3+11\delta_4+9\delta_5$,
which is~\eqref{eq:rank-ineq}.
\end{proof}

The relations used in the proof are robust forms of the anticommutation
relations of the magic square: in an exact operator solution, cell
observables in different rows and columns anticommute
\cite[Lemma~6]{Ji2013}.

\subsection{Proof of \texorpdfstring{\cref{thm:D}}{Theorem D}}

Let $G_{1440}=F[(59,60,61,63)]$, so $k_0=59$, $k_1=60$, $k_2=61$ and
$k_4=63$; it has $9\cdot59+8\cdot60+6\cdot61+63=1440$ vertices.  In the
seed, row $0$ consists of vertex $0$ and three class-$0$ vertices, rows $1$
and $2$ consist of class-$1$ vertices, and each column consists of two
class-$0$ and two class-$2$ vertices.  So every context of $G_{1440}$ has
$63+3\cdot59=4\cdot60=2\cdot61+2\cdot59=240$ vertices.

\emph{Lower bound.}  Let $(R_{v,a})$ be a $d/r$-representation of
$\overline{G_{1440}}$, where $(v,a)$ is the $a$-th copy of the seed vertex
$v$.  Distinct copies of one vertex are non-adjacent in $G_{1440}$, so
$P_v=\sum_aR_{v,a}$ is a projection of rank $k_vr$.  If $v\ne w$ are
non-adjacent in $F$, all their copies are non-adjacent, so $P_vP_w=0$.
\Cref{lem:rank-ineq} applies with $\tau(P_0)=63r/d$ and $\tau(S_c)=240r/d$
for every context, and gives
\[
 \frac rd\,(4\cdot63+58\cdot240)=\frac rd\cdot14172\le59 .
\]
So $d/r\ge14172/59$ for every $d/r$-representation, and
$\xi_f(\overline{G_{1440}})\ge14172/59$.

\emph{Upper bound.}  Take the magic-square representation $(Q_v)$ and, in
the basis $\ket{00},\ket{01},\ket{10},\ket{11}$, the faithful state
$\sigma=\mathrm{diag}(31,29,29,29)/118$.  For a rank-one projection
$Q=\proj\psi$ we have $h_\sigma(Q)=1/\langle\psi,\sigma^{-1}\psi\rangle$.
Since $\lvert\langle00|\psi_v\rangle\rvert^2=j_v/4$ for a unit vector
$\psi_v$ in $\ran Q_v$,
\[
 \frac1{h_\sigma(Q_v)}=118\Bigl(\frac{j_v}{4\cdot31}+\frac{4-j_v}{4\cdot29}\Bigr)
 =\frac{59\,(62-j_v)}{899}.
\]
Hence $k_v/h_\sigma(Q_v)=59\,k_v(62-j_v)/899$, and $k_j(62-j)$ equals
$3658$, $3660$, $3660$ and $3654$ for $j=0,1,2,4$.  By
\cref{prop:bridge-exact},
\[
 \Sigma(G_{1440})\le\frac{59\cdot3660}{899}=\frac{215940}{899}.
\]
Explicitly, with $L=\operatorname{lcm}(59,60,61,63)$ and $\Pi_{v,a}$ as in
the proof of \cref{prop:bridge-exact}, the states
$\rho_{v,a}=(k_v/L)\,Q_v\otimes\Pi_{v,a}$ are compatible with $G_{1440}$, and
all of them lie below $T=(215940/899)\,\sigma\otimes I/L$, whose trace is
$215940/899$.

\emph{Separation.}  Finally, $14172/59-215940/899=168/53041>0$.  This proves
\cref{thm:D}.  The script \texttt{projective\_rank\_separation.py} re-checks
the class structure of the contexts, the harmonic weights and this
arithmetic in exact rational arithmetic (see Data and code availability).

\paragraph{Where the profile comes from.}
Both bounds are linear in the profile.  For every profile $k$ with positive
entries, the same argument gives
\[
 \xi_f\bigl(\overline{F[k]}\bigr)\ge\frac{83k_0+80k_1+62k_2+11k_4}{59},
 \qquad
 \Sigma(F[k])\le\max_{j}\frac{k_j}{h_j(a)},
\]
where $h_j(a)$ is the harmonic weight of a class-$j$ projection $Q_v$ at the
center $\mathrm{diag}(a,b,b,b)$, $a+3b=1$.  At the central state every
harmonic weight is $\frac14$, and for profiles proportional to the weights
the two bounds coincide.  Tilting the center toward $\ket{00}$, with
$a=\frac14+3\epsilon$ and $b=\frac14-\epsilon$, changes the weights of the
classes $j=0,1,2,4$ by $-\epsilon$, $0$, $\epsilon$ and $3\epsilon$ to first
order.  For profiles proportional to the tilted weights, the lower bound then
exceeds the upper bound by a relative $12\epsilon/59$ to first order.  So the
separation appears at first order in the tilt.  The gain is small, and the
clone counts must follow the weights closely.  For $G_{1440}$ the relative
gap is $1.3\times10^{-5}$; its profile is the smallest we found.

\section{Open problems}\label{sec:open}

\paragraph{Is $\vt$ the least point?}
Every $\vt_\alpha$ dominates $\vt$, so this paper leaves the conjecture
$\Theta_*=\vt$ where it was.  A proof that $\vt$ is the minimum of $\Xs$ would
settle it.  So would a single spectral point below $\vt$ on some graph, in the
opposite direction.

\paragraph{Above order two.}
\Cref{rem:above-two} shows that our monotonicity argument breaks down above
order two.  Is some suitably defined $\vt_\alpha$ with $\alpha>2$ still in
$\Xs$?  Flat witnesses show that any such extension is at most $\alpha^*$.
A natural guess for the $\alpha\to\infty$ endpoint is the complement of the
projective rank.  Flat representations give the corresponding upper bound, but
we are missing a dimension-free argument for the other direction.  By
\cref{thm:D}, such an extension would have to keep growing beyond order two on
$G_{1440}$.

\paragraph{The projective rank.}
\Cref{thm:D} shows that $\Sigma(G)$ and $\xi_f(\overline G)$ differ, so
\cref{thm:A} does not place $\xi_f(\overline G)$ in $\Xs$.  Whether it lies in
$\Xs$ remains open \cite[after Proposition~24]{LiZuiddam2020}.  For
$G_{1440}$, each context is an independent set of $240$ vertices, so
$240\le\Sigma(G_{1440})\le\frac{215940}{899}$, while
$\frac{14172}{59}\le\xi_f(\overline{G_{1440}})\le\alpha^*(G_{1440})$.  What are
the exact values?  Is \cref{lem:rank-ineq} the best linear inequality of its
kind for the seed, and do other contextuality scenarios give larger gaps?
How small can a graph with $\Sigma(G)<\xi_f(\overline G)$ be?

\paragraph{Other divergences.}
The proof of \cref{thm:A} uses three features of the Petz family.  These are
a closed form for the supported minimum with the tower property, the
block structure used for unions and flagged mixtures, and data processing.
For sandwiched divergences the answer is complete on the full
data-processing range $[\frac12,\infty]$: \cref{prop:sandwiched} gives the
supported minimum $-\log Z_{(1-\beta)/\beta}$ and identifies the graph
radii with the Petz family on $[0,2]$.  In particular, the orders
$\frac12\le\beta<1$ give precisely the Petz sub-one family, not additional
spectral points.  Do other quantum divergences with these three features
yield spectral points outside this common family?

\paragraph{Convexity.}
In the ordinary spectrum, Vrana's probabilistic refinement makes the spectrum
log-convex \cite{Vrana2021}.  If the same held for $\le_*$, interpolating
between $\vt$ and $\Sigma$ would produce another continuum.  The following is
a heuristic, conditional on such a theory for $\le_*$; it suggests that at
least part of the R\'enyi family is not of this form.  In the ordinary
spectrum, vertex-transitive graphs are evaluated log-affinely along such
interpolations \cite[Section~5]{Vrana2022}.  If the same were true for
$\le_*$, an interpolant with value $\frac52$ on $C_5$ would carry no weight on
$\vt$, so it would be $\Sigma$ itself.  Since $\vt_1(C_5)=\frac52$ and
$\vt_1\ne\Sigma$ by \cref{thm:C}, $\vt_1$ would then not be such an
interpolant; we do not know whether the $\vt_\alpha$ with $\alpha<\frac12$
would be.  Making any of this rigorous would need an entanglement-assisted
theory of probabilistic refinements, which we do not have.

\paragraph{The pentagon curve.}
What is $\vt_\alpha(C_5)$ for $0<\alpha<\frac12$?  The bounds in
\cref{fig:pentagon-curve} do not meet.  The flat witness in
\cref{fig:pentagon-geometries}(a) is optimal on $[\frac12,2]$, and the
umbrella in panel (b) attains the order-zero value.  At intermediate orders,
the gap between the displayed bounds does not by itself decide which
representation or center is optimal; the exact graph value remains open.

\paragraph{Small separations.}
How small can a graph with $\Cmin<\log\Sigma$ be?  Thirty-two vertices
suffice, and we have no reason to think this is minimal.  Can the rational
certificate behind \cref{thm:C} be replaced by a short analytic inequality,
in the spirit of \cref{lem:hs}?  And is the first level of the moment
relaxation already exact for the three graphs of \cref{tab:certified}?

\paragraph{Non-commutative graphs.}
Duan, Severini and Winter's quantum Lov\'asz number extends $\vt$ to
non-commutative graphs \cite{DSW2013}, and Li and Zuiddam place it in the
entanglement-assisted spectrum of non-commutative graphs.  There the analogue
of $\Theta_*=\vt$ fails: Wang and Duan exhibit non-commutative graphs whose
entanglement-assisted zero-error capacity is strictly below the quantum
Lov\'asz number \cite{WangDuan2018}, so that spectrum contains another point,
although no explicit one is known \cite[Theorem~33 and the remark after
it]{LiZuiddam2020}.  Does the whole family $\vt_\alpha$
extend, and does the product theorem survive?  A positive answer at order two
would give a single-letter formula for graph-level no-signalling simulation
over non-commutative graphs, in contrast with the fixed-$K$ quantity
$\Sigma(K)$, which can be strictly submultiplicative \cite{WangDuan2016}.

\section*{Data and code availability}

The certificate data behind \cref{thm:C} and \cref{tab:certified}, the exact
data for the values of $\vt$ and $\alpha^*$ quoted in \cref{sec:small}, and
the verification scripts named in the text are archived on Zenodo under the
all-versions DOI
\href{https://doi.org/10.5281/zenodo.22983498}{10.5281/zenodo.22983498},
and are also available at
\url{https://github.com/Apsiape/the-lovasz-number-is-not-alone}.  They
include an exact check of the arithmetic behind \cref{thm:D}.  The
archive documents the data format and pins the expected output of every
script.

\section*{Reproducibility}

The certificate data and their verification scripts accompany this paper
(see Data and code availability).
They are needed only for the finite rational claims of \cref{sec:small}.
The proof of \cref{thm:D} is complete by hand; the script
\texttt{projective\_rank\_separation.py} re-checks its arithmetic.
An additional deterministic matrix regression checks instances of the
supported sandwiched minimum in \cref{prop:sandwiched}, including the
sub-one orders and endpoints.  These finite numerical checks are not used
as proof; the dimension-independent statements are proved in the text.

\section*{AI-assisted tools}

Generative AI tools were used throughout the research and preparation of this
manuscript, including for mathematical exploration, proof development and
critique, literature organization, computational assistance, and drafting.
The authors determined which results and arguments to include and take
responsibility for the final manuscript.  Exact computer-assisted claims are
supported, where indicated, by accompanying deterministic certificates and
verification scripts.

\appendix

\section{The certificate}\label[appendix]{app:certificate}

\paragraph{Words and moments.}
For words $L=a_1\cdots a_m$ and $R=b_1\cdots b_n$ in seed vertices, write
$P_L=P_{a_1}\cdots P_{a_m}$, $E_R=E_{b_1}\cdots E_{b_n}$ and
\[
 m(L,R)=\langle\Omega,(P_L\otimes E_R)\Omega\rangle .
\]
Every entry of the Gram matrix of \cref{sec:small} is such a moment.  The
following rules hold in every model.
\begin{enumerate}[label=(R\arabic*),leftmargin=2.6em,itemsep=1pt]
\item Equal adjacent letters in $L$ or in $R$ merge, since $P_a^2=P_a$ and
$E_b^2=E_b$.
\item If $L$ contains two adjacent letters $a\ne b$ that are non-adjacent in
$F$, then $m(L,R)=0$, since $P_aP_b=0$.
\item If $R$ ends in $y$, then $m(L,R)=m(Ly,R)$, since
$(I\otimes E_y)\Omega=(P_y\otimes E_y)\Omega$.
\item If $R$ begins with $y$, then $m(L,R)=m(yL,R)$, by the adjoint of the
same identity.
\end{enumerate}
Reversing both words conjugates the moment, since
$(P_L\otimes E_R)^*=P_{\bar L}\otimes E_{\bar R}$.  Let $\mathcal A_0$ be the
$48$ automorphisms of $F$ that fix vertex $0$ (equivalently, that preserve
the overlap classes).  A relabelled representation is again a representation,
and the constraints $h_v\ge tk_v$ are invariant because $k$ is constant on
classes.  So we may average the Gram matrix over $\mathcal A_0$ and then take
real parts.  Both operations preserve positivity, and together they identify
symmetry-related moments and reversed pairs.  The verifier builds the key
table from these rules only.  So every identification in the table holds for
the Gram matrix of every representation, in every dimension.  The result has
$1009$ keys.

\paragraph{Deleted vertices.}
In $G_{32}$ and $G_{72}$ the class-$4$ vertex is deleted.  It stays in the
seed and is represented by $P_0=0$, and hence $E_0=0$.  All four rules remain
valid for zero projections, no radial constraint is imposed on deleted
vertices, and the relaxation needs no change.

\paragraph{The relaxation and its dual.}
Let $\Gamma(y)$ be the real symmetric matrix with one variable $y_\kappa$ per
key, write $1$ for the key of $m(\varnothing,\varnothing)=\Tr\sigma$, and let
$\kappa_v$ be the key of $m(\varnothing,v)=h_\sigma(P_v)$.
Entries of the Gram matrix whose moment vanishes identically by the rules
(R1)--(R4) are $0$ in $\Gamma(y)$; they carry no key and no equation below,
so the corresponding entries of $Z$ are restricted only by $Z\succeq0$.  The
primal problem is to maximize $t$ subject to
\[
 \Gamma(y)\succeq0,\qquad y_1=1,\qquad y_{\kappa_v}\ge tk_v\ \ (k_v>0).
\]
A dual certificate consists of $Z\succeq0$, multipliers $\mu_v\ge0$ and a
number $\lambda$ such that, for every key $\kappa$,
\begin{equation}\label{eq:moment-dual}
 \sum_{(i,j):\,\mathrm{key}(i,j)=\kappa}Z_{ij}
 +\sum_{v:\,\kappa_v=\kappa}\mu_v=\lambda\,[\kappa=1].
\end{equation}
Pairing with $\Gamma(y)$ gives
$0\le\langle Z,\Gamma(y)\rangle=\lambda-\sum_v\mu_vy_{\kappa_v}$, and
therefore
\[
 t_H(F[k])\le\frac{\lambda}{\sum_v\mu_vk_v}.
\]

\paragraph{Exact positivity.}
The certificate file (see Data and code availability) contains, for each
graph, the rational Gibbs witness, the multipliers $\mu$, the nonzero entries
of $Z$, one for each orbit of $\mathcal A_0$ together with transposition, and
a shift $\gamma$.  The verifier rebuilds the seed, its $216$ ordered non-adjacent
pairs, the two-qubit representation, the overlap classes, the $48$
automorphisms, the $409$ monomials and the key table.  It checks
every equation~\eqref{eq:moment-dual} and $\mu\ge0$ in rational arithmetic,
and it recomputes the Gibbs witness exactly.  To prove $Z\succeq0$, it clears
denominators, $DZ\in\mathbb Z^{409\times409}$, rounds a floating-point
Cholesky factor of the shifted matrix $Z-\gamma I$ at scale $2^{40}$ to an
integer matrix $L$, and checks in exact integer arithmetic that
\[
 R=2^{80}DZ-DLL^{\mathsf T}
\]
is symmetric and diagonally dominant with nonnegative diagonal.  Then
$R\succeq0$ and $Z=LL^{\mathsf T}/2^{80}+R/(2^{80}D)\succeq0$.  The shift,
$\gamma=10^{-5}$, $2\cdot10^{-7}$ and $10^{-5}$ for $G_{32}$, $G_{44}$ and
$G_{72}$, leaves room on the diagonal of $R$ for the rounding error.  Only
the integer check counts.  Whether the floating-point factorization
succeeded plays no role.

\paragraph{Negative controls.}
The verifier runs two tampered certificates, and both must be rejected.  One
breaks an affine equation.  The other changes only entries attached to moments
that vanish identically, so every affine equation still holds but exact
positivity fails.

\paragraph{How the certificate was found.}
We followed the recovery procedure of \cite{Mghirbi2026}.  A numerical SDP
solution is symmetrized, rounded to dyadic rationals, and projected exactly
onto the affine constraints by a minimum-norm correction, and positivity is
then certified as above.  None of this matters for correctness, which rests
only on the exact checks.

\section{Deferred proofs}\label[appendix]{app:proofs}

\subsection{Continuity in the order}\label[subappendix]{app:continuity}

\begin{proof}[Proof of \cref{prop:continuity}]
Write $f(\alpha)=\log\vt_\alpha(G)$.

\emph{Right continuity at $\alpha_0\in[0,2)$.}  Pick a compatible family and
a faithful center whose order-$\alpha_0$ radius is at most $f(\alpha_0)+\eps$.
For fixed states with a faithful center, $\alpha\mapsto D_\alpha$ is
continuous on $[0,2]$, including from the right at $0$, because
$\rho^\alpha\to\Pi_\rho$, and at $1$, where the limit is the relative
entropy.  So
$\limsup_{\alpha\downarrow\alpha_0}f(\alpha)\le f(\alpha_0)+\eps$, and
monotonicity gives the reverse inequality.

\emph{Left continuity at $\alpha\in(0,1)$.}  Let $0<\beta<\alpha$ and
$\psi(t)=\ln\Tr\rho^t\sigma^{1-t}$.  By H\"older's inequality $\psi$ is
convex, and $\psi(1)=0\ge\psi(0)$.  So
$\psi(\beta)\le(1-\frac\beta\alpha)\psi(0)+\frac\beta\alpha\psi(\alpha)
\le\frac\beta\alpha\psi(\alpha)$.  Since both orders lie below one, this
rearranges to
\[
 D_\alpha(\rho\Vert\sigma)\le\frac{\alpha(1-\beta)}{\beta(1-\alpha)}
 D_\beta(\rho\Vert\sigma).
\]
The constant does not depend on the states or the dimension.  Applying it to
near-optimal witnesses at order $\beta$ gives
$f(\beta)\le f(\alpha)\le\frac{\alpha(1-\beta)}{\beta(1-\alpha)}f(\beta)$,
and the constant tends to $1$ as $\beta\uparrow\alpha$.
\end{proof}

\subsection{Orthogonal blocks}\label[subappendix]{app:orthogonal-sum}

\begin{proof}[Proof of \cref{lem:orthogonal-sum}]
For $s\ne0$ let $A=\sigma^s$ and $f(t)=t^{1/s}$.  For $s=0$ let $A=\ln\sigma$
and $f=\exp$.  In each case $f$ is convex on an interval containing the
spectrum of $A$ and
$f(A)=\sigma$.  Diagonalize $Q_jAQ_j$ on $\ran Q_j$ in an orthonormal basis
$(e_{jk})_k$.  Its eigenvalues are $\langle e_{jk},Ae_{jk}\rangle$, so
Jensen's inequality for the spectral measure of $A$ gives
\[
 Z_s(\sigma,Q_j)=\sum_kf(\langle e_{jk},Ae_{jk}\rangle)
 \le\sum_k\langle e_{jk},\sigma e_{jk}\rangle .
\]
The vectors $(e_{jk})_{j,k}$ are orthonormal, so the total is at most
$\Tr\sigma=1$.
\end{proof}

\subsection{Exact confusability graphs}\label[subappendix]{app:exact}

\begin{lemma}\label[lemma]{lem:exact-graph}
In~\eqref{eq:def-parameter}, \eqref{eq:cmin-def} and~\eqref{eq:sigma-def},
restricting to channels whose confusability graph is exactly $G$ does not
change the infimum.
\end{lemma}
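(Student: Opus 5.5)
Restricting to channels whose confusability graph is exactly $G$ can only raise each infimum, so the work is the reverse inequality. Starting from any compatible family $(\rho_x)$ for $G$, I will perturb it into a family whose confusability graph is exactly $G$ while changing the objective by an arbitrarily small amount. The perturbation adds, for each edge $\{x,y\}$ of $G$, a shared private direction.

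\emph{Construction.} Take a compatible family $(\rho_x)$ on $\cH$. Adjoin an orthonormal vector $\ket e$ for each edge $e$ of $G$, giving $\cH'=\cH\oplus\mathbb C^{E(G)}$. For $\eps\in(0,1)$ put
\[
 \rho_x^\eps=(1-\eps)\rho_x\oplus\frac{\eps}{\deg x}\sum_{e\ni x}\proj e
\]
(leave $\rho_x$ unchanged if $x$ is isolated). The supports are $\supp\rho_x\oplus\operatorname{span}\{\ket e:e\ni x\}$. If $x\ne y$ are non-adjacent, they share no edge, and $\rho_x\rho_y=0$, so $\rho^\eps_x\rho^\eps_y=0$. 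If $x\sim y$, the edge vector $\ket{xy}$ lies in both supports, so the two outputs are not orthogonal. Hence the confusability graph is exactly $G$.

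\emph{Continuity of each objective.}
\begin{itemize}
\item For~\eqref{eq:sigma-def}: if $T\ge\rho_x$ for all $x$, then $T'=T\oplus\sum_e\eps\proj e$ dominates every $\rho^\eps_x$. Its trace is $\Tr T+\eps|E|\to\Tr T$.
\item For~\eqref{eq:def-parameter}: given a faithful center $\sigma$, use $\sigma^\eps=(1-\eps)\sigma\oplus\eps I/|E|$. Joint convexity alone is not enough here, because $D_\alpha$ for $\alpha>1$ is not jointly convex in the needed way. Instead I use the direct-sum structure. The Petz moment splits as
\[
 (1-\eps)\Tr\rho_x^\alpha\sigma^{1-\alpha}+\eps\cdot(\text{a bounded term}).
\]
This is continuous in $\eps\to0$ for every $\alpha\in[0,2]$; the orders $\alpha\in\{0,1\}$ are checked directly.
\item For~\eqref{eq:cmin-def}: the Holevo quantity is continuous in the states on a finite-dimensional space (Fannes), uniformly in $p$. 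So the max over $p$ changes by $o(1)$. Alternatively, this case follows from the radius formula and the $\alpha=1$ case above.
\end{itemize}

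Letting $\eps\to0$ shows that each infimum over exact-$G$ channels is at most the original infimum. The main obstacle is only the bookkeeping of continuity at the orders $\alpha\in(1,2]$. There the second summand contributes $\eps^\alpha(\eps/|E|)^{1-\alpha}(\deg x)^{-\alpha}$ times a rank, which is $O(\eps)$, so it vanishes in the limit.
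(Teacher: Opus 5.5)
Your proposal is correct and is essentially the paper's proof. It uses the same edge vectors, the same perturbed states $(1-\eps)\rho_x\oplus\frac{\eps}{d_x}\Pi_x$, the same center $(1-\eps)\sigma\oplus\frac{\eps}{|E|}\Pi_E$ and dominator $T\oplus\eps\Pi_E$, and the same observation that every moment splits over the two summands with an $O(\eps)$ contribution from the new one. The paper gets $\Cmin$ from the order-one radius identity rather than Fannes.

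The one thing to add is the case $E(G)=\varnothing$. There your center $\eps I/|E|$ is undefined, but no perturbation is needed, because compatibility already forces the confusability graph to be exactly $G$.
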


\begin{proof}
Restricting can only increase an infimum, so we show the reverse.  Take a
compatible family $(\rho_x)$ on $\cH$ with center $\sigma$ (or dominator
$T$).  If $E(G)=\varnothing$, every compatible family already has exact
confusability graph $G$, so there is nothing to prove.  Assume henceforth that
$|E(G)|>0$.  Add an orthonormal vector $f_e$ for each edge $e$ of $G$.  For a
vertex $x$ of degree $d_x>0$, let $\Pi_x$ project onto the $f_e$ with $e\ni x$,
and let $\Pi_E$ project onto all of them.  Put
\[
 \rho'_x=(1-\eps)\rho_x\oplus\frac{\eps}{d_x}\Pi_x,\qquad
 \sigma'=(1-\eps)\sigma\oplus\frac{\eps}{|E|}\Pi_E,\qquad
 T'=T\oplus\eps\,\Pi_E ,
\]
and leave isolated vertices unchanged.  Adjacent $x,y$ now overlap on
$f_{\{x,y\}}$, while non-adjacent vertices share no new vector.  So the
confusability graph is exactly $G$.  Every Petz moment, relative entropy and
support weight splits over the two summands, and the new summand contributes
$O(\eps)$.  So radii converge to the original ones as $\eps\to0$.  Also
$T'\ge\rho'_x$ and $\Tr T'\to\Tr T$.  The statement for $\Cmin$ follows from
the order-one radius identity.
\end{proof}

\subsection{The entanglement-assisted preorder}\label[subappendix]{app:preorder}

Li and Zuiddam state the preorder for graphs through complements.  Their
Definition~7 \cite{LiZuiddam2020} (Definition~6 in arXiv:1810.00744v2)
writes $G\overset{*}{\to}H$ if there are
$d\in\mathbb N$ and $d\times d$ positive semidefinite matrices $\rho$ and
$\rho_g^h$ ($g\in V(G)$, $h\in V(H)$) such that
\begin{enumerate}[label=(\roman*),leftmargin=2.2em,itemsep=1pt,topsep=3pt]
\item $\sum_{h\in V(H)}\rho_g^h=\rho$ for every $g\in V(G)$, and
\item $\rho_g^h\rho_{g'}^{h'}=0$ whenever $\{g,g'\}\in E(G)$ and
$\{h,h'\}\notin E(H)$.
\end{enumerate}
The entanglement-assisted cohomomorphism preorder is then $G\le_*H$ if and
only if $\overline G\overset{*}{\to}\overline H$.  Their graphs are
confusability graphs, as ours are.

\begin{lemma}\label[lemma]{lem:preorder}
For graphs $G$ and $H$, the relation $G\le_*H$ of Li and Zuiddam, that is
$\overline G\overset{*}{\to}\overline H$, holds if and only if there are a
state $\omega$ and positive semidefinite operators $\omega_g^h$
satisfying~\eqref{eq:ea-sum} and~\eqref{eq:ea-orth}.
\end{lemma}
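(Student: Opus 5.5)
The proof is a direct translation: complement both graphs and check that conditions (i)--(ii) become \eqref{eq:ea-sum}--\eqref{eq:ea-orth}, with the one difference that Li and Zuiddam's $\rho$ is only positive semidefinite while our $\omega$ must be a state.

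Start from $\overline G\overset{*}{\to}\overline H$ with matrices $\rho,\rho_g^h$ on $\mathbb C^d$. Condition (i) for $\overline G$, $\overline H$ is literally \eqref{eq:ea-sum} with $\omega=\rho$, since the vertex sets are unchanged. Condition (ii) reads $\rho_g^h\rho_{g'}^{h'}=0$ whenever $\{g,g'\}\in E(\overline G)$ and $\{h,h'\}\notin E(\overline H)$. We unwind both clauses.
\begin{itemize}
\item $\{g,g'\}\in E(\overline G)$ means exactly that $g\ne g'$ are non-adjacent in $G$.
\item $\{h,h'\}\notin E(\overline H)$ means that either $h=h'$ (no loops) or $h\ne h'$ are adjacent in $H$, that is, $h\simeq h'$ in $H$.
\end{itemize}
So (ii) is precisely \eqref{eq:ea-orth}.

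The only remaining point is normalization. If $\rho=0$, then (i) forces every $\rho_g^h=0$. We should not have to worry about this degenerate data. It can only arise when $V(G)=\varnothing$ (Li and Zuiddam presumably require a nonzero $\rho$ or state, and otherwise the relation is trivial). If $V(G)$ is empty, both formulations hold vacuously with any state $\omega$.

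If $\rho\ne0$, divide everything by $\Tr\rho$. Conditions (i) and (ii) are homogeneous, so they survive the rescaling, and $\omega=\rho/\Tr\rho$ is a state.

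The converse is the same computation read backwards: a state $\omega$ is in particular positive semidefinite on some $\mathbb C^d$ (finite-dimensional Hilbert spaces), and the identification of the two orthogonality conditions is an equivalence.

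The main obstacle, such as it is, is bookkeeping in the second clause. The phrase ``$\{h,h'\}\notin E(\overline H)$'' must include the diagonal case $h=h'$, which is exactly why \eqref{eq:ea-orth} uses $\simeq$ rather than $\sim$. Likewise, the first clause excludes $g=g'$, matching the requirement ``$g\ne g'$'' in \eqref{eq:ea-orth}.
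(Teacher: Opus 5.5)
Your argument is the paper's proof: complementing both graphs turns (i) into \eqref{eq:ea-sum} with $\omega=\rho$ and (ii) into \eqref{eq:ea-orth}. The loop-free diagonal case $h=h'$ is exactly what produces $\simeq$. You then rescale by $\Tr\rho$ to get a state, and the converse takes $\rho=\omega$.

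One sentence is wrong and should be replaced. It is not true that the data $\rho=0$ ``can only arise when $V(G)=\varnothing$.'' For every pair $G,H$, the all-zero data $\rho=\rho_g^h=0$ satisfies (i) and (ii). That is precisely why a nonzero $\rho$ must be read into the definition, since otherwise every pair of graphs would be related.

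The paper settles this point explicitly rather than with ``presumably.'' It notes that in Cubitt et al.'s Definition~1, from which the preorder is taken, $\rho$ is the reduced state of a pure state, and that Li and Zuiddam even allow $\rho$ to be taken positive definite. With that explicit appeal in place, dividing by $\Tr\rho$ works for every $G$, and the case $V(G)=\varnothing$ needs no separate treatment.
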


\begin{proof}
Apply the definition to $\overline G$ and $\overline H$.  A pair $\{g,g'\}$ is
an edge of $\overline G$ exactly when $g\ne g'$ and $g,g'$ are non-adjacent in
$G$.  A pair $\{h,h'\}$ is a non-edge of $\overline H$ exactly when $h=h'$ or
$h\sim h'$ in $H$, that is, when $h\simeq h'$.  The case $h=h'$ is included
because graphs have no loops.  So condition~(ii) for $\overline G$ and
$\overline H$ is~\eqref{eq:ea-orth}, and condition~(i) is~\eqref{eq:ea-sum}
with $\omega=\rho$.  Both conditions are homogeneous, so dividing $\rho$ and
every $\rho_g^h$ by $\Tr\rho$ makes $\rho$ a state.  (A nonzero $\rho$ is
implicit in the definition, since $\rho=0$ would relate every pair of graphs;
Li and Zuiddam note that $\rho$ may even be taken positive definite.  In
Cubitt et al.'s Definition~1, from which the definition is taken, $\rho$ is
the reduced state of a bipartite pure state, so it is a nonzero state
\cite[Definition~1]{CMRSSW2014}.)
Conversely, a state $\omega$ and operators $\omega_g^h$
satisfying~\eqref{eq:ea-sum}--\eqref{eq:ea-orth} satisfy (i) and (ii) with
$\rho=\omega$ in dimension $d=\dim\cH$.
\end{proof}

\bibliographystyle{alpha}
\bibliography{references}

\end{document}